\newtheorem{theorem}{Theorem}
\newtheorem{example}{Example}
\newtheorem{lemma}{Lemma}
\newtheorem{definition}{Definition}
\newtheorem{corollary}{Corollary}
\newcommand{\cB}[1]{{\left\{#1\right\}}}
\newcommand{\cA}{\ensuremath{\mathcal{A}}}
\newcommand{\cX}{{\mathcal X}}
\newcommand{\cY}{{\mathcal Y}}
\newcommand{\bx}{{\mathbf x}}
\newcommand{\bbP}{\mathbb{P}}
\newcommand{\bbPrep}{\bbP_{\rm rep}}
\newcommand{\bbE}{\mathbb{E}}
\newcommand{\norm}[1]{\left\lVert#1\right\rVert}
\newcommand{\sfe}{{\mathsf e}}
\newcommand{\sff}{{\mathsf f}}
\newcommand{\sfg}{{\mathsf g}}
\newcommand{\es}{{0}}
\newcommand{\hW}{\hat{W}}
\newcommand{\ws}{w^*}
\newcommand{\tw}{\overline{w}}
\newcommand{\tcW}{\overline{\mathcal W}}
\newcommand{\E}[1]{\mathbb E\left[#1\right]}
\newcommand{\Prob}[1]{\mathbb P\left[#1\right]}
\newcommand{\bigo}[1]{O\left(#1\right)}
\newcommand{\Var}[1]{\mathrm{Var}\left[#1\right]}
\newcommand{\thmref}[1]{Theorem~\ref{#1}}
\newcommand{\secref}[1]{Section~\ref{#1}}
\newcommand{\lemref}[1]{Lemma~\ref{#1}}
\newcommand{\corref}[1]{Corollary~\ref{#1}}
\newcommand{\appref}[1]{Appendix~\ref{#1}}
\newcommand{\figref}[1]{Fig.~\ref{#1}}
\newif\ifshowtodo
\newcommand{\VersionLength}{long}
\providecommand{\ver}{\ifthenelse{\equal{\VersionLength}{long}}}
\title{Random Access Channel Coding\\ in the Finite Blocklength Regime}
\author{Recep Can Yavas, Victoria Kostina, and Michelle Effros
    \thanks{R.~C.~Yavas, V.~Kostina, and M.~Effros are with the Department of Electrical Engineering, California Institute of Technology, Pasadena, CA~91125, USA. \mbox{E-mails}: \mbox{{\em \{ryavas, vkostina, effros\}@caltech.edu}}. This work was supported in part by the National Science Foundation (NSF) under grant CCF-1817241. A part of this work was presented at ISIT'18\cite{effros2018Random}.}
}
\begin{document}
\maketitle 
\begin{abstract}
Consider a random access communication scenario over a channel whose operation is defined for any number of possible transmitters. As in the model recently introduced  by Polyanskiy for the Multiple Access Channel (MAC) with a fixed, known number of transmitters, the channel is assumed to be invariant to permutations on its inputs, and all active transmitters employ identical encoders. Unlike the Polyanskiy model, in the proposed scenario, neither the transmitters nor the receiver knows which transmitters are active. We refer to this agnostic communication setup as the Random Access Channel (RAC). Scheduled feedback of a finite number of bits is used to synchronize the transmitters. The decoder is tasked with determining from the channel output the number of active transmitters, $k$, and their messages but not which transmitter sent which message. The decoding procedure occurs at a time $n_t$ depending on the decoder's estimate, $t$, of the number of active transmitters, $k$, thereby achieving a rate that varies with the number of active transmitters. Single-bit feedback at each time $n_i, i \leq t$, enables all transmitters to determine the end of one coding epoch and the start of the next. The central result of this work demonstrates the achievability on a RAC of performance that is first-order optimal for the MAC in operation during each coding epoch. While prior multiple access schemes for a fixed number of transmitters require $2^k - 1$ simultaneous threshold rules, the proposed scheme uses a single threshold rule and achieves the same dispersion.
\end{abstract}

\begin{IEEEkeywords}
Channel coding, random access channel, finite blocklength regime, achievability, second-order asymptotics, rateless codes.
\end{IEEEkeywords}
\section{Introduction}
\'e \`a Access points like WiFi hot spots and cellular base stations are, for wireless devices, the gateway to the
network. Unfortunately, access points
are also the network's most critical bottleneck. 
As more kinds of devices become network-reliant, both the
number of communicating devices and the diversity of their communication needs grow.
Little is known  
about how to code under high variation in the number and variety of communicators. 

Multiple-transmitter single-receiver channels are well understood in information theory when the number and identities of transmitters are fixed and known. Unfortunately, even in this known-transmitter regime, information-theoretic solutions are too complex to implement. As a result, orthogonalization methods, such as TDMA, FDMA, and orthogonal CDMA, are used instead. Orthogonalization strategies simplify coding by allocating resources (e.g., time slots) among the transmitters, but applying such methods to discrete memoryless MACs can at best attain a sum-rate equal to the single-transmitter capacity of the channel, which is often significantly smaller than the maximal
multi-transmitter sum-rate.

Most random access protocols currently in use rely on collision avoidance, which cannot surpass the single-transmitter capacity of the channel and may be significantly worse since the unknown transmitter set makes it difficult to schedule or coordinate among transmitters. Collision avoidance is achieved through variations of the legacy (slotted) ALOHA and carrier sense multiple access (CSMA) algorithms.  
ALOHA, which uses random transmission times and back-off schedules, achieves only about $37\%$ of the single-transmitter capacity of the channel~\cite{roberts1975aloha}.  In CSMA, each transmitter tries to avoid collisions by verifying the absence of other traffic before starting a transmission over the shared channel; when collisions do occur, 
all transmissions are aborted, and a jamming signal is sent to ensure that all transmitters are aware of the collision. The procedure starts again at a random time, which again introduces inefficiencies. The state of the art in random access coding is ``treating interference as noise,'' which is part of newer CDMA-based standards. While this strategy can deal with random access better than ALOHA, it is still far inferior to the theoretical limits.

Even from a purely theoretical perspective, a satisfactory solution to random access  remains to be found. The MAC model in which a fixed number, $k$, out of the total available $K$ transmitters are active was studied by
D'yachkov and Rykov~\cite{dyachkov1981mac} and Mathys \cite{mathys1990class} for zero-error coding on a noiseless adder MAC, and by Bassalygo and Pinsker \cite{bassalygo1983restricted} for an asynchronous model in which the information is considered erased if more than one transmitter is active at a time. See \cite{polyanskiy2017perspective} for a more detailed history.
%While zero-error code designs are mathematically elegant, they are also combinatorial in nature, and thus their complexity scales exponentially with the number of transmitters.} 
%Further, fixing the total number of transmitters, $k_{\max}$, is not completely satisfactory; the number might be unknown or variable, and setting it high leads to conservative, suboptimal designs.
Two-layer MAC decoders, with outer layer codes that work to remove channel noise and inner layer codes that work to resolve conflicts, are proposed in \cite{ordentlich2017mac,ebrahimi2017coded}. Like the codes in \cite{dyachkov1981mac, mathys1990class, bassalygo1983restricted}, 
the codes in \cite{ordentlich2017mac, polyanskiy2017perspective} are designed for a predetermined number of transmitters, $k$; 
it is not clear how robust they are to randomness in the transmitters' arrivals and departures. In \cite{minero2012randomaccess}, Minero \textit{et al.} study a random access model in which the receiver knows the transmitter activity pattern, and the transmitters opportunistically send data at the highest possible rate. The receiver recovers only a portion of the messages sent, depending on the current level of activity in the channel. 

\subsection{Our Contributions and Related Works}
This paper poses the question of whether it is possible, in a scenario where no one knows how many transmitters are active, for the receiver to almost always recover the messages sent by all active transmitters. Surprisingly, we find that not only is reliable decoding possible in this regime, but, for the class of permutation-invariant channels considered in \cite{polyanskiy2017perspective}, our proposed RAC code performs as well in its capacity and dispersion terms as the best-known code for a MAC with the transmitter activity known a priori \cite{ huang2012finite, jazi2012simpler, tan2014dispersions,  scarlett2015constantcompMAC}. 
Since the capacity region of a MAC varies with the number of transmitters, 
it is tempting to believe that the transmitters of a random access system must somehow vary their codebook size 
in order to match their transmission rate to the capacity region of the MAC in operation. 
Instead, we here allow the decoder to vary its decoding time 
depending on the observed channel output---thereby adjusting the rate at which each transmitter communicates 
by changing not the size but the blocklength of each transmitter's codebook.

Codes that can accommodate variable decoding times are called \emph{rateless codes}. Rateless codes originate with the work of Burnashev \cite{burnashev1976data}, who computed the error exponent of variable-length coding over a known point-to-point channel. Polyanskiy \textit{et al.} \cite{polyanskiy2011feedback} provide a dispersion-style analysis of the same scenario.  A practical implementation of rateless codes for an erasure channel with an unknown erasure probability appears in \cite{luby2002lt}. An analysis of rateless coding over an unknown binary symmetric channel appears in \cite{tchamkerten2002feedback} and is extended to an arbitrary discrete memoryless channel in \cite{draper2004efficient,shulman2003unknown} using a decoder that tracks Goppa's empirical mutual information and decodes once that quantity passes a threshold. In \cite{blits2012universal}, Jeffrey's prior is used to weight unknown channels. A rateless code for noiseless random access communication is described in \cite{stefanovic2013aloha}; each user transmits replicas of its message in multiple time slots, possibly colliding with the messages of other transmitters. At the end of each time slot, the decoder attempts to apply successive interference cancellation starting with the messages received without collision and subsequently removing the associated interference from the time slots in which replicas are transmitted. The decoder then decides whether to terminate an epoch or to ask the transmitters to send more replicas. 

%a linear combination of possible channel backward conditional distributions weighted with Jeffrey's prior is proposed as a proxy for the unknown channel during decoding. 

%Another key idea to make our universal transmission scheme feasible  is to allow sparse feedback, an acknowledgement bit used to inform the transmitters that the decoder is now free to start listening to new messages.  

Unlike the codes described in \cite{burnashev1976data,polyanskiy2011feedback,luby2002lt,tchamkerten2002feedback,draper2004efficient,shulman2003unknown,blits2012universal, stefanovic2013aloha}, which allow truly arbitrary decoding times,  in this paper we allow decoding only at a predetermined list of possible times $n_0, n_1, n_2, \ldots$. This strategy both eases practical implementation and reduces feedback. In particular, the schemes in\cite{burnashev1976data,polyanskiy2011feedback,luby2002lt,tchamkerten2002feedback,draper2004efficient,shulman2003unknown,blits2012universal, stefanovic2013aloha} transmit a single-bit acknowledgment message from the decoder to the encoder(s) once the decoder completes its decoding process. Because the decoding time is random, this so-called ``single-bit'' feedback forces the transmitter(s) to listen to the channel constantly, at every time step trying to discern whether or not a transmission was received. This either requires full-duplex devices or doubles the effective blocklength and can be quite expensive.  Thus while the receiver technically sends only ``one bit'' of feedback, the transmitters receive one bit of feedback (with the alphabet $\{\mbox{``transmission'',``no transmission''}\}$) in every time step, giving a feedback rate of 1 bit per channel use rather than a total of 1 bit.  In our framework, feedback bits are sent only at times $n_0 <n_1 < \cdots < n_t$, where each $n_i$ is the pre-determined decoding time used if the receiver believes that $i$ transmitters are active. Thus the transmitters must listen only at a sparse collection of time steps. The total number of feedback bits equals one plus the receiver's estimate of the number of transmitters, giving a feedback rate approaching $0$ bits per channel use as the blocklength grows. 

In the central portion of this paper, we view the random access channel as a collection of all possible MACs that might arise as a result of the transmitter activity pattern. Barring the intricacies of multiuser decoding, the model that views an unknown channel as a collection of possible channels without assigning an a priori probability to each is known as the \emph{compound channel} model \cite{blackwell1959capacity}. In the context of single-transmitter compound channels, it is known that if the decoding time is fixed, the transmission rate cannot exceed the capacity of the weakest channel from the collection \cite{blackwell1959capacity}, though the dispersion may be better (smaller) \cite{polyanskiy2013dispersion}. With feedback and a variable decoding time, one can do much better \cite{tchamkerten2002feedback,draper2004efficient,shulman2003unknown,blits2012universal}.

In \cite{polyanskiy2017perspective}, Polyanskiy argues for removing the transmitter identification task from the physical layer encoding and decoding procedures of a MAC. As he points out, such a scenario was previously discussed by Berger~\cite{berger1981poisson} in the context of conflict resolution. Polyanskiy further suggests studying MACs whose conditional channel output distributions are insensitive to input permutations.  For such channels, if all transmitters use the same codebook, then the receiver can at best hope to recover the messages sent without recovering who transmitted which message (the transmitter identity). In some networks the transmitter identification task can be insignificant. For example, in some sensor networks, we might be interested in the collected measurements but indifferent to the identities of the collecting sensors. In scenarios where transmitter identity is required, it can be included in the payload.

In \secref{sec:raccode}, we propose a code for a random access communication channel model built from a family of permutation-invariant MACs. Our code employs identical encoders at all transmitters and identity-blind decoding at the receiver. Although not critical for the feasibility of our approach, these assumptions lead to a number of pleasing simplifications of both our scheme and its analysis. For example, using identical encoders at all transmitters simplifies design and implementation. Further, the collection of MACs comprising our compound RAC model can be parameterized by the number of active transmitters rather than by the full transmitter activity pattern. \ver{}{If the maximum number of transmitters is finite, the analysis of identity-blind decoding differs little from traditional analyses that use independent realizations of a random codebook at each transmitter. We elaborate on this small difference in \secref{sec:transiden}, where we discuss an extension of our strategy to allow for transmitter identity decoding.} 

We provide a second-order analysis of the rate universally achieved by our multiuser scheme over all transmitter activity patterns,  taking into account the possibility that the decoder may misdetect the current activity pattern and decode for the wrong channel.  Leveraging our observation that for a symmetric MAC, the fair rate point is not a corner point of the capacity region, we are able to show that a single-threshold decoding rule attains the fair rate point. This differs significantly from traditional MAC analyses, which use $2^{k}-1$ simultaneous threshold rules.  In the context of a MAC with a known number of transmitters, second-order analyses of multiple-threshold decoding rules are given in \cite{huang2012finite,jazi2012simpler,tan2014dispersions,scarlett2015constantcompMAC} (finite alphabet MAC) and in \cite{molavianjazi2015second} (Gaussian MAC). A non-asymptotic analysis of variable-length coding with ``single-bit" feedback over a (known) Gaussian MAC appears in \cite{truong2017gaussian}. %While these works make significant progress in understanding the finite blocklength performance of the MAC, they do not take into account the random and potentially massive number of users in next-generation multiple access scenarios. 

Other relevant recent works on the MAC include the following. 
To account for massive numbers of transmitters, in \cite{chen2014many,chen2017capacity}, Chen and Guo introduce a notion of capacity for the multiple access scenario in which the maximal number of transmitters grows with the blocklength and an unknown subset of transmitters is active at a given time. They show that time sharing, which achieves the conventional MAC capacity, is inadequate to achieve capacity in that regime. In \cite{sarwate2009feedback}, Sarwate and Gastpar show that rate-0 feedback, such as the feedback in our approach, does not increase the capacity of the discrete memoryless MAC. In compound MACs, limited feedback can increase capacity. For example, one strategy uses a simple training phase to estimate the channel state and employs feedback to send the state estimate to the transmitter. Such schemes cannot increase the capacity beyond the rate achievable when the state is known to the encoders and the decoder \cite{sarwate2009feedback}.

The sparse recovery problem is identical to a special case of the RAC problem in which each transmitter sends only its ``signature" to the receiver. Here, the decoder's only task is to determine who is active. Active transmitters in this variant of the RAC problem may correspond to defective items or positive test outcomes in the sparse recovery problem, and successful decoding is identified with successfully detecting the set of defective or confirmed-positive elements. A group testing problem in which an unknown subset of $k$ defective items out of $K$ items total is observed through an OR MAC, is studied in \cite{malyutov1978, malyutov1980planning, atia2012, scarlettPhase2016, scarlett2017limits}; this problem is a special case of the sparse recovery problem. In these works, the decoder reaches a conclusion about tested items at a fixed blocklength $n$. Atia and Saligrama \cite{atia2012} consider a noiseless group testing scenario in which the number of transmitted elements, $k$, does not grow with the total number of elements, $K$, showing that the smallest possible number of measurements needed to detect the defective items is $O(k \log {\frac{K}{k}})$. In in \cite{scarlettPhase2016}, Scarlett and Cevher extend this result to the scenario where $k$ scales as $O(K^{\theta})$ for $\theta \in (0, 1)$. In \cite{scarlett2017limits}, Scarlett and Cevher derive the information-theoretic limits of the exact and partial support recovery problems for general probabilistic models, where exact recovery refers to detecting all $k$ defective items, and partial recovery refers to detecting at least $s$ out of $k$ defective items. While we consider a nonvanishing average error probability and operate in the central limit theorem regime, \cite{malyutov1978, malyutov1980planning,  atia2012, scarlettPhase2016, scarlett2017limits} assume vanishing average error probability and operate in the large deviations regime. The main difference between the decoder designs in  \cite{malyutov1978, malyutov1980planning, atia2012, scarlettPhase2016, scarlett2017limits} and our decoder design is that  \cite{malyutov1978, malyutov1980planning, atia2012, scarlettPhase2016, scarlett2017limits} use $2^k - 1$ simultaneous information density threshold tests at a single blocklength $n$, while our decoder uses a single information density threshold test at multiple decoding times, allowing successful detection with a computationally less complex decoder even when the number of active transmitters to be detected is unknown.

\subsection{Paper Organization}
% In short, our random access architecture handles transmissions of a large and unknown number of transmitters and delivers theoretical performance guarantees. Coordination is achieved using finite, scheduled feedback. 
Our system model and proposed communication strategy are laid out in \secref{sec:setup}. The main result, showing that for a nontrivial class of channels our proposed RAC code performs as well in terms of capacity and dispersion as the best-known code for a MAC with the transmitter activity known a priori, is presented in \secref{sec:main}. The proofs are presented in \secref{sec:raccode}. \secref{sec:furtherDis} includes discussions of the effect of using maximum likelihood decoding, the choice of an input distribution in the random code design, the difficulties in proving a converse, an extension of our strategy that enables transmitter identity decoding, and performance bounds under the per-user error probability criterion. Interestingly, the problem of decoding for $k \geq 1$ unknown transmitters is substantially different from the problem of detecting whether there are any active transmitters at all. In \secref{sec:analysis_0test}, we employ universal hypothesis testing to solve the latter problem. \secref{sec:summary} concludes the paper with a discussion of our results and their implications.

\section{Problem Setup}\label{sec:setup}
For any positive integers $i,j$, let $[i]=\{1,\dots,i\}$ and
$[i:j]=\{i,\ldots,j\}$, where $[i:j]=\emptyset$ when $i>j$.
We denote an $n$ dimensional vector by $x^n = (x_1, \dots, x_n)$. When the dimension of a vector $x^n$ is clear from the context, we denote $x^n$ by $\bx$. All-zero and all-one vectors are denoted by $\boldsymbol{0}$ and $\boldsymbol{1}$, respectively. For a collection of length-$n$ vectors $x_1^n, \dots, x_K^n$ and any subset $\mathcal{C} \subseteq [K]$, we denote the corresponding sub-collection of vectors by $x_{\mathcal{C}}^n = (x_c^n \colon c \in \mathcal{C})$. For collection of vectors $x_{\mathcal{C}}^n$ and index $i \in [n]$, $x_{\mathcal{C}, i}$ denotes the collection of scalars obtained by taking $i$-th coordinate from each vector in $x_{\mathcal{C}}^n$. For any vectors $x_{\mathcal{C}}$ and $y_{\mathcal{C}}$, we write $x_{\mathcal{C}} \leq y_{\mathcal{C}}$ if $x_c \leq y_c$ for all $c \in \mathcal{C}$, $x_{\mathcal{C}} \stackrel{\pi}{=} y_{\mathcal{C}}$ if there exists a permutation $\pi$ of $y_{\mathcal{C}}$ such that $x_{\mathcal{C}} = \pi(y_{\mathcal{C}})$, and 
$x_{\mathcal{C}} \stackrel{\pi}{\neq} y_{\mathcal{C}} $ if $x_{\mathcal{C}} \neq \pi(y_{\mathcal{C}})$ for all permutations $\pi$ of $y_{\mathcal{C}}$. For any set $\mathcal{A}$ and integer $k \leq |\mathcal{A}|$, $\binom{\mathcal{A}}{k} = \{ \mathcal{B} \colon \mathcal{B} \subseteq \mathcal{A}, |\mathcal{B}| = k \}$. For a random variable $X$, we write $X \sim P_X$ to specify that $X$ is distributed according to distribution $P_X$. We use $Q(\cdot)$ to denote the Gaussian complementary cumulative distribution function, giving $Q(x) \triangleq  \frac{1}{\sqrt{2\pi}} \int_{x}^{\infty} \exp \left \lbrace \frac{-u^2}{2} \right\rbrace du$.
We employ the standard $o(\cdot)$ and $O(\cdot)$ notations, giving $f(n) = o(g(n))$ if $\lim_{n \to \infty} \left \lvert \frac{f(n)}{g(n)} \right\rvert = 0$ and $f(n) = O(g(n))$ if $\limsup_{n \to \infty} \left \lvert \frac{f(n)}{g(n)} \right \rvert < \infty$.

A \emph{stationary, memoryless, symmetric, random access channel} (henceforth called simply a RAC)
is a memoryless channel with one receiver and an unknown number of
transmitters. It is described by a family of stationary, memoryless MACs
\begin{align}
\left\{\left(\cX^k, P_{Y_k | X_{[k]}}(y_k|x_{[k]}),
\cY_k\right)\right\}_{k=0}^K \label{eq:collection},
\end{align}
each indexed by a number of transmitters, $k$; the maximal number of transmitters is $K \leq \infty$. When $k = 0$, no transmitters are active; we discuss this case separately below.
For $k \geq 1$, the $k$-transmitter MAC has input alphabet $\cX^k$, output alphabet $\cY_k$,
and conditional distribution $P_{Y_k | X_{[k]}}$.
When $k$ transmitters are active, the RAC output is $Y=Y_k$. The input and output alphabets $\mathcal{X}$ and $\mathcal{Y}_k$ can be abstract.

\subsection{Assumptions on the Channel}
We assume that the impact of a channel input
on the channel output is independent of the transmitter from which it comes; therefore,
each channel in \eqref{eq:collection}
is assumed to be \emph{permutation-invariant} \cite{polyanskiy2017perspective}, giving
\begin{align}
%&\emph{Permutation invariance: } \\
P_{Y_k | X_{[k]}}(y_k | x_{[k]}) = P_{Y_k | X_{[k]}}(y_k | \hat{x}_{[k]})
\label{eq:permutationinvariance}
\end{align}
for all $\hat{x}_{[k]} \stackrel{\pi}{=} x_{[k]}$ and $y_k \in \mathcal{Y}_k$, $k \in [K]$.
We further assume that for any $s<k$,
an $s$-transmitter MAC is physically identical to a $k$-transmitter MAC
operated with $s$ active and $k-s$ silent transmitters. At each time step of the communication period, each silent transmitter transmits a silence symbol, here denoted by $0\in\cX$. This \emph{reducibility} constraint gives 
\begin{align}
 P_{Y_s | X_{[s]}}(y|x_{[s]}) = P_{Y_k
| X_{[k]}}(y|x_{[s]},0^{k-s}) \quad 
\label{eq:reducible}
\end{align}
for all $s < k$, $x_{[s]}\in\cX_{[s]}$, and $y\in \cY_s$.
An immediate consequence of reducibility is that $\cY_s
\subseteq \cY_k$ for any $s < k $. Another consequence is that when there are no active transmitters, the MAC $\left(\cX^0, P_{Y_0 | X_{[0]}}(y|x_{[0]}), \cY_0 \right)$ satisfies $\mathcal{X}^0 = \{0\}$ and $P_{Y_0 | X_{[0]}}(y|x_{[0]}) = P_{Y_k | X_{[k]}}(y|0^k) $ for all $k$.

\subsection{RAC Communication Strategy}
We here propose a new RAC communication strategy.
In the proposed strategy, communication occurs in epochs, with each epoch beginning in the time step following the previous epoch's end. Each epoch ends when the receiver's scheduled broadcast to all transmitters indicates a decoding event, signaling that the prior transmission can stop and a new transmission can begin. 
At this point, each transmitter
decides whether to be active or silent in the new epoch;
the decision is binding for the length of the epoch,
meaning that a transmitter must either actively transmit for all time steps
in the epoch
or remain silent for the same period.
Thus, while
the total number of transmitters, $K$, is potentially unlimited and can change arbitrarily from one epoch to the next, the number of
active transmitters, $k$, remains constant throughout each epoch.

Each active transmitter uses the epoch to describe a message $W$ from the alphabet $[M]$. When the active transmitters are $[k]$, the messages are $W_{[k]}\in[M]^k$, where the messages $W_1, \ldots, W_k$ of different transmitters are independent and uniformly distributed. The proposed strategy fixes the potential decoding times $n_0 < n_1 < \cdots < n_K$.\footnote{We focus the exposition on the scenario where the decoding blocklengths are ordered both for simplicity and because a particular choice of ordered blocklengths emerges as optimal within our architecture (see \eqref{eq:diff} in \secref{sec:achproof}, below).}
The receiver chooses to end the epoch (without decoding) at time $n_0$ if it believes at time $n_0$ that no transmitters are active and chooses to end the epoch and decode at time $n_t$ if it believes at time $n_t$ that the number of active transmitters is $t$. The transmitters are informed of the decoder's decision through a single-bit feedback $Z_s$ at each time $n_s$ with $s\in\{0, 1, \dots, t\}$; here $Z_s=0$ for all $s<t$ and $Z_t=1$, with ``1'' signaling the end of one epoch and the beginning of the next. Since the blocklength for a given epoch is the decoding time chosen by the receiver, the result is a \emph{rateless code}. As we show in Section \ref{sec:raccode} below, with an
appropriately designed decoding rule, correct decoding is performed at time $n_{k}$ with high probability.

It is important to stress that in this domain
each transmitter knows nothing about the set of active
transmitters $\mathcal A \subset \mathbb N$ beyond its own membership
and what it learns from the receiver's feedback,
and the receiver knows nothing about $\mathcal A$ beyond what it learns
from the channel output $Y$; we call this \emph{agnostic} random access.
In addition, since designing a different encoder
for each transmitter is expensive from the perspective
of both code design and code operation, as in \cite{polyanskiy2017perspective},
we assume through most of this paper that every transmitter employs the same encoder; we call this \emph{identical encoding}. Under the assumptions of permutation-invariance
and identical encoding,
what the transmitters and receiver can learn about $\cA$ is quite limited.
Together, these properties imply that the decoder can at best distinguish
{\em which messages were transmitted} rather than {\em by whom they were sent}. %The binary relation `` $\stackrel{\pi}{=}$ '' was introduced to capture this property.
In practice, transmitter identity could be included
in the header of each $\log M$-bit message
or at some other layer of the stack;
transmitter identity is not, however, handled by the RAC code.
Instead, since the channel output statistics depend on
the dimension of the channel input
but not the identity of the active transmitters,
the receiver's task is to decode the messages transmitted
but not the identities of their senders.
We therefore assume without loss of generality
that $|\mathcal A|=k$ implies $\mathcal A=[k]$.
Thus the family of $k$-transmitter MACs in \eqref{eq:permutationinvariance}  fully describes the behavior of a RAC.\footnote{\secref{sec:transiden} treats a variant of our RAC communication strategy that enables decoding of transmitter identity. Mathematically, the variants are quite similar. \label{footnote:treats}}

\subsection{Code Definition}
% The single-bit feedback strategy described above
% uses \emph{rateless coding} to deal
% with the agnostic nature of random access. 
% Specifically, the code design fixes the blocklengths $(n_0, n_1, \ldots, n_K$),
% where $n_t$ is the decoding blocklength when the decoder believes that
% the number of active transmitters $k$ is equal to $t$. 
% As we show in Section \ref{sec:raccode} below, with an
% appropriately designed decoding rule, correct
% decoding is performed at time $n_{k}$ with high probability. For nontrivial RACs, the greater the
% number of active transmitters, the longer it takes to decode (i.e., $n_0 < n_1  < \dots < n_K$).
%\footnote{The ordering of $n_0,n_1,\ldots,n_K$ may depend on the desired error probabilities $(\epsilon_0,\epsilon_1,\ldots,\epsilon_K)$. The proposed strategy and a similar analysis work for any ordering of $n_0,n_1,\ldots,n_K$. The error probability analysis requires mild modifications to accommodate a given ordering.}
The following
definition formalizes our code.
\begin{definition}\label{def:codecompound}
For any number of messages $M$, ordered blocklengths $n_0 < n_1 < \cdots < n_K$, and error probabilities $\epsilon_0, \dots, \epsilon_K$, an $(M, \lbrace (n_k, \epsilon_k) \rbrace_{k = 0}^K )$ RAC code comprises a (rateless) encoding function
\begin{align}
\mathsf f \colon \, \mathcal{U} \times [M] \to \mathcal X^{n_K} \label{eq:f}
\end{align}
and a collection of decoding functions
\begin{align}
\mathsf g_k \colon \, \mathcal{U} \times \mathcal Y_k^{n_k} \to [M]^k \cup \{\mathsf
e\}, \quad k = 0, 1, \ldots, K,
\label{eq:g}
\end{align}
%where $|\mathcal{U}| \leq K + 1$ and 
where $\mathsf{e}$ denotes the erasure symbol, which is the decoder's output when it is not ready to decode. At the start of each epoch, a common randomness random variable $U \in \mathcal{U}$, with $U \sim P_U$, is generated independently of the transmitter activity and revealed to the transmitters and the receiver,
%\footnote{\thmref{thm:caratheodory} bounds the cardinality of $\mathcal{U}$ by $K + 1$ using Carath\'eodory's theorem.}
thereby  initializing the encoders and the decoder. If $k$ transmitters are active, then with probability at least $1 - \epsilon_k$, the $k$ messages are correctly decoded
at time $n_k$. That is,\footnote{Recall that $\stackrel{\pi}=$ and $\stackrel{\pi}\neq$ denote equality and inequality up to a permutation.}
\begin{align}
\frac{1}{M^k}&\sum_{w_{[k]} \in [M]^k} \mathbb{P}\Bigg[\cB{\mathsf
g_k(U, Y_k^{n_k}) \stackrel{\pi}{\neq} w_{[k]}}\bigcup \notag\\
& \left.\cB{\bigcup_{t = 0}^{k-1}\cB{\mathsf{g}_t(U, Y_k^{n_t}) \neq \sfe}} \middle | \right. W_{[k]} = w_{[k]} \Bigg] \leq
\epsilon_k, \label{eq:errorcriterion}
\end{align}
where $W_{[k]}$ are the independent and equiprobable messages of transmitters $[k]$, and the given probability is calculated using the conditional distribution $P_{Y_k^{n_k} | X_{[k]}^{n_k}} = P_{Y_k | X_{[k]}}^{n_k}$; here $X_i^{n_k} = \mathsf f(U, W_i)^{n_k}$, $i = 1, \ldots, k$. At time $n_s$, the decoder outputs the erasure symbol ``$\mathsf{e}$" if it decides that the number of active transmitters is not $s$.
If $k = 0$ transmitters are active, the unique message ``\es", denoted $[M]^0 \triangleq \{\es\}$ to simplify the notation, is decoded at time $n_0$ with probability at least $1-\epsilon_0$. That is,
\begin{align}
\Prob{\mathsf{g}_0(U, Y_0^{n_0}) \neq \es | W_{[0]} = \es} \leq
\epsilon_0. \label{eq:probzeroerror}
\end{align}
\end{definition}

In Definition~\ref{def:codecompound}, we index the family of possible codes by the elements of some set $\mathcal{U}$ and include $u \in \mathcal{U}$ as an argument for both the RAC encoder and the RAC decoder. We then represent encoding as the application of a code indexed by some random variable $U \in \mathcal{U}$ chosen independently for each new epoch. Deterministic codes are represented under this code definition by setting the distribution on $U$ as $\Prob{U = u_0} = 1$ for some $u_0 \in \mathcal{U}$. 
In practice, we can implement a RAC code with random code choice $U$ using common randomness. Common randomness available to the transmitters and the receiver allows all nodes to choose the same random variable $U$ to specify a new codebook in each epoch. Operationally, this common randomness can be implemented by allowing the receiver to choose random instance $U$ at the start of each epoch and to broadcast that value to the transmitters just after the feedback bit that ends the previous epoch. Alternatively, all communicators can use synchronized pseudo-random number generators. Broadcasting the value of $U$ increases the epoch-ending feedback from $1$ bit to $\lceil \log |\mathcal{U}|\rceil + 1$ bits; \thmref{thm:caratheodory} shows that $|\mathcal{U}| \leq K + 1$ suffices to achieve the optimal performance. 

In \secref{sec:raccode}, we employ a general random coding argument to show that a given error vector $(\epsilon_0, \dots, \epsilon_K)$ is achievable when averaged over the ensemble of codes. Unfortunately, this traditional approach does not show the existence of a deterministic RAC code (i.e., a code with $|\mathcal{U}| = 1$) that achieves the given error vector $(\epsilon_0, \dots, \epsilon_K)$. The challenge here is that our proof showing that the random code's expected error probability meets each of the $K + 1$ error constraints does not suffice to show that any of the codes in the ensemble meets all of our error constraints simultaneously. A similar issue arises in \cite{polyanskiy2011feedback, tchamkerten2006variable}. For example, in \cite{polyanskiy2011feedback}, a variable-length feedback code is designed with the aim of achieving average error probability no greater than $\epsilon$ and expected decoding time no greater than $\ell$. To design a single code satisfying both constraints, \cite{polyanskiy2011feedback} relies on common randomness. Similarly, \cite{tchamkerten2006variable} describes a variable-length feedback code designed to satisfy an error exponent criterion for every channel in a continuum of binary symmetric or Z channels. Their proof that a single, deterministic code can simultaneously satisfy this continuum of constraints exploits the ordering among the channels in the given family. While channel symmetry can sometimes be leveraged to show the existence of a deterministic code \cite[eq. (29)]{polyanskiy2011feedback}, the symmetries in a RAC are quite different from those in point-to-point channels. We leave the question of whether a single-code solution exists for the RAC 
to future work.

The code model introduced in Definition \ref{def:codecompound} employs identical encoding in addition to common randomness. Under identical encoding, each transmitter uses the same encoder, $\mathsf f$, to form a codeword of length $n_K$. That codeword is fed into the channel symbol by symbol. According to Definition~\ref{def:codecompound}, if $k$ transmitters are active, then with probability at least $1 - \epsilon_k$, the decoder recovers the transmitted messages correctly after observing the first $n_k$ channel outputs. As noted previously, the decoder $\mathsf g_k$ does not attempt to recover transmitter identity; successful decoding means that the list of messages in the decoder output coincides with the list of messages sent. The error event defined in Definition \ref{def:codecompound} differs from the one in \cite{polyanskiy2017perspective}. Our definition \eqref{eq:errorcriterion} requires that all transmitted messages are decoded correctly. In contrast, \cite{polyanskiy2017perspective} bounds a per-user probability of error (PUPE),
which measures the fraction of transmitted messages that are missing from the list of decoded messages. In \secref{sec:PUPE}, we discuss the error probability for our code under the PUPE criterion.
%which measures the fraction of incorrectly decoded messages for each number of active transmitters. In \secref{sec:PUPE}, we discuss the scenario where PUPE is employed rather than the joint error probability \eqref{eq:errorcriterion}.}

\subsection{Information Density Definitions}
The following definitions are useful for the discussion that follows.
When $k$ transmitters are active, the input distribution is $P_{X_{[k]}}$, and the marginal output distribution is $P_{Y_k}$.
The information density and conditional information density are defined\footnote{We here employ notation for discrete alphabets. In the general case, it can be replaced by the logarithm of the Radon-Nikodym derivative, giving 
$\imath_k(x_{\mathcal A};y_k) = \log \frac{dP_{Y_k| X_{\scaleto{\mathcal{A}}{2.5pt}} = x_{\scaleto{\mathcal{A}}{2.5pt}}}}{dP_{Y_k}}(y_k)$.} as
\begin{align}
\imath_k(x_{\mathcal A};y_k) & \triangleq  \log\frac{P_{Y_k|X_{\mathcal A}}(y_k|x_{\mathcal A})}{P_{Y_k}(y_k)} \\
\imath_k(x_{\mathcal A} ; y_k | x_{\mathcal B}) & \triangleq  \log \frac{P_{Y_k|X_{\mathcal A}, X_{\mathcal B}}(y_k| x_{\mathcal A},x_{\mathcal B})}
{P_{Y_k | X_B}(y_k| x_{\mathcal B})}
\end{align}
for any ${\mathcal A},{\mathcal B}\subseteq [k]$, $x_{\mathcal{A}} \in \mathcal{X}_{\mathcal{A}}, x_{\mathcal{B}} \in \mathcal{X}_{\mathcal{B}}$, and $y_k \in \mathcal{Y}_k$;
here $\imath_k(x_{\mathcal A} ; y_k | x_{\mathcal B}) \triangleq \imath_k(x_{\mathcal A} ; y_k)$ when $\mathcal B=\emptyset$ and $\imath_k(x_{\mathcal A} ; y_k|x_{\mathcal B}) \triangleq 0$ when $y_k \notin \mathcal{Y}_k$ or ${\mathcal A} = \emptyset$.
The corresponding mutual informations are
\begin{align}
I_k(X_{\mathcal A} ; Y_k) & \triangleq \mathbb{E}[\imath_k(X_{\mathcal A} ; Y_k)]  \\
I_k(X_{\mathcal A} ; Y_k | X_{\mathcal B}) & \triangleq
\mathbb{E}[\imath_k(X_{\mathcal A} ; Y_k | X_{\mathcal B})].
\end{align}
Throughout the paper, we also denote for brevity
\begin{align}
 I_k &\triangleq I_k(X_{[k]}; Y_k)\\
 V_k &\triangleq \Var{\imath_k(X_{[k]}; Y_k)}. \label{eq:dispersionVk}
\end{align}
The multi-letter information density and conditional information densities are defined as
\begin{align}
    \imath_k(x_{\mathcal A}^n;y_k^n) & \triangleq  \log\frac{P_{Y_k^n|X_{\mathcal A}^n}(y_k^n|x_{\mathcal{A}}^n)}{P_{Y_k^n}(y_{k}^n)} \label{eq:imathnt}\\
\imath_k(x_{\mathcal{A}}^n ; y_{k}^n | x_{\mathcal{B}}^n) & \triangleq \log \frac{P_{Y_k^n|X_{\mathcal{A}}^n, X_{\mathcal{B}}^n}(y_k^n| x_{\mathcal{A}}^n,x_{\mathcal{B}}^n)}
{P_{Y_k^n | X_{\mathcal{B}}^n}(y_k^n| x_{\mathcal{B}}^n)}. \label{eq:imathn}
\end{align}

%Since the decoder does not know the number of transmitters $(k)$, the decoder may evaluate the $t$-th information density $\imath_t(x_{\mathcal{A}}; y)$ for an unmatching output $y \in \mathcal{Y}_k$, $k \neq t$. The expected value of the $t$-th information density $\imath_t(x_{\mathcal{A}}; y)$ evaluated on channel output $Y_k$ is denoted by $\mathbb{E}[\imath_t(X_{\mathcal{A}}; Y_k)]$.

\subsection{Assumptions on the Input Distribution} \label{sec:assumpchannel}
To ensure the existence of codes satisfying the error constraints in
Definition~\ref{def:codecompound}, we assume that there exists a $P_X$ such that when $X_1, X_2, \ldots, X_K$ are distributed i.i.d. $P_X$, then the conditions in  \eqref{eq:silence}--\eqref{eq:moment2a} below are satisfied. 

The \emph{friendliness} assumption states that for all $s \leq k \leq K$,
\begin{align}
I_k(X_{[s]}; Y_k | X_{[s+1:k]} =  0^{k-s})
\geq I_k(X_{[s]}; Y_k | X_{[s+1:k]}). \label{eq:silence}
\end{align}
Friendliness implies that by remaining silent, inactive transmitters enable communication by the active transmitters at rates at least as large as those achievable if the inactive transmitters had actively participated and their codewords were known to the receiver.

The \emph{interference} assumption states that for any $s$ and $t$,
$X_{[s]}$ and $X_{[s+1:t]} $ are conditionally dependent given $Y_k$, giving 
\begin{align}
P_{X_{[t]} | Y_k} \neq P_{X_{[s]}|Y_k} \, P_{X_{[s+1:t]}|Y_k} \quad \forall \, 1 \leq s < t \leq k, \, \forall k. \label{eq:interference}
\end{align}
Assumption \eqref{eq:interference} eliminates trivial RACs in which transmitters do not interfere.

In order for the decoder to be able to distinguish the time-$n_0$ output $Y_0^{n_0}$ that results when no transmitters are active from the time-$n_0$ output $Y_k^{n_0}$ that results when $k \geq 1$ transmitters are active, we assume that there exists a $\delta_0 > 0$ such that the output distributions satisfy
\begin{align}
\sup_{y \in \mathcal{Y}_K} \left \lvert F_k(y) - F_0(y) \right \rvert \geq \delta_0 \, \text{ for all } k \in [K], \label{eq:assump:Pyk}
\end{align}
where $F_k(y)$ denotes the cumulative distribution function (CDF) of $P_{Y_k}$ for $k \in \{0, \dots, K\}$.\footnote{Although the CDF is defined for real-valued random variables, i.e., $\mathcal{Y}_k \subseteq \mathcal{Y}_K \subseteq \mathbb{R}$ is required, it can be generalized to abstract alphabets by introducing a partial order $\leq$ on the set $\mathcal{Y}_K$. Then $F_k(y) \triangleq \Prob{Y_k \leq y}$.} The measure of discrepancy between distributions on the left-hand side of \eqref{eq:assump:Pyk} is known as the Kolmogorov-Smirnov distance. The assumption in \eqref{eq:assump:Pyk} is only needed to detect the scenario when no transmitters are active; the remainder of the code functions proceed unhampered when \eqref{eq:assump:Pyk} fails. When $K$ is finite, \eqref{eq:assump:Pyk} is equivalent to $P_{Y_0} \neq P_{Y_k}$ for all $k \in [K]$.

Finally, the \emph{moment} assumptions 
 \begin{align}
\mathrm{Var}\left[\imath_k(X_{[k]} ; Y_k) \right] &> 0 \label{eq:moment2}\\
 \hspace{-3mm} \bbE[|\imath_k(X_{[k]}; Y_k) - I_k(X_{[k]}; Y_k) |^3] &<\infty
\label{eq:moment3}
\end{align}
enable the second-order analysis presented in Theorem~\ref{thm:ach}, below.
% In the proofs, we use $k$ to denote the true number of active transmitters, $t$ to denote the receiver's estimate of the number of active transmitters, and $s$ to denote the number of incorrectly decoded messages among the $t$ decoded at the receiver.
In the case when $\imath_t(X_{[s]} ; Y_k) > -\infty$ almost surely, we also require 
\begin{align}
\mathrm{Var}\left[\imath_t(X_{[s]} ; Y_k) \right] &< \infty \quad \forall s \leq t \leq k  \label{eq:moment2a}. 
\end{align}
Moment assumptions like \eqref{eq:moment2}--\eqref{eq:moment2a} are common in the finite-blocklength literature, e.g.,  \cite{polyanskiy2010Channel, tan2014dispersions}. 

In the discussion that follows, we say that a channel satisfies our channel assumptions (\eqref{eq:permutationinvariance}, \eqref{eq:reducible}, \eqref{eq:silence}--\eqref{eq:moment2a}) if there exists an input distribution $P_X$ under which those conditions are satisfied. All discrete memoryless channels (DMCs) satisfy finite second- and third-moment assumptions
\eqref{eq:moment3}--\eqref{eq:moment2a} \cite[Lemma 46]{polyanskiy2010Channel}, as do Gaussian noise channels. Common channel models from the literature typically satisfy a non-zero second-moment assumption \eqref{eq:moment2} as well.
Example channels that meet our channel assumptions (\eqref{eq:permutationinvariance}, \eqref{eq:reducible}, and \eqref{eq:silence}--\eqref{eq:moment2a}) include the Gaussian RAC, 
\begin{align}
Y_k = \sum_{i = 1}^{k} X_i  + Z, \label{eq:Gaussian}
\end{align}
where each $X_i \in \mathbb R$ operates under power constraint $P$ and
$Z \sim \mathcal N(0, N)$ for some $N > 0$, 
and the adder-erasure RAC \cite{ebrahimi2017coded},
\begin{align}
Y_k =
\begin{cases}
\sum_{i = 1}^{k} X_i, & \text{w.p. } 1 - \delta\\
\mathsf e & \text{w.p. } \delta,
\end{cases}
\label{eq:addererasure}
\end{align}
where $X_i \in \{0, 1\}$ and $Y_k \in \left\{ 0, \ldots, k\right\} \cup \{\mathsf e\}$. In \cite{ebrahimi2017coded}, the adder-erasure RAC \eqref{eq:addererasure} is used to model a scenario where a digital encoder and decoder communicate over an analog channel using a modulator and demodulator. The modulator converts the bits into analog signals; the channel output equals the sum of the transmitted signals plus random noise; the demodulator quantizes that output, declaring an erasure, $\mathsf{e}$, if reliable quantization is not possible due to high noise. Thus, one can view the adder-erasure RAC as a discretization of the Gaussian RAC.

For the Gaussian RAC, $\imath_t(X_{[s]};Y_k)>-\infty$ almost surely, and \eqref{eq:moment2a} is satisfied. For the adder-erasure RAC, $\imath_t(X_{[s]} ; Y_k) = -\infty$ for some channel realizations and user activity patterns, and \eqref{eq:moment2a} is not required.

We conclude this section with a series of lemmas that describe the natural orderings possessed by RACs that satisfy our permutation-invariance, reducibility, friendliness, and interference constraints  (\eqref{eq:permutationinvariance}, \eqref{eq:reducible}, \eqref{eq:silence}, and  \eqref{eq:interference}). These properties are key to the feasibility of the approach proposed in our achievability argument in \secref{sec:main}. Proofs are relegated to \appref{app:proofs}.

The first lemma shows that the quality of the channel for each active transmitter deteriorates as the number of active transmitters grows (even though the sum capacity may increase). 
%This implies that decoding should be performed at a larger blocklength for a higher number of active transmitters. 
\begin{lemma} \label{lem:mutualinfofriendly}
Let $X_1, X_2, \ldots, X_k\sim \text{ i.i.d. } P_X$. Under permutation-invariance \eqref{eq:permutationinvariance},  reducibility \eqref{eq:reducible}, friendliness \eqref{eq:silence}, and interference \eqref{eq:interference},
\begin{align}
\frac{I_k}{k} < \frac{I_s}{s} \quad \mbox{ for } k > s \geq 1.
\end{align}
\end{lemma}

The second lemma shows that a similar relationship holds even when the number of transmitters is fixed.
\begin{lemma}\label{lem:mutualinfo}
Let $X_1, X_2, \ldots, X_k\sim \text{i.i.d. }P_X$. Under permutation-invariance \eqref{eq:permutationinvariance}, reducibility \eqref{eq:reducible} and interference \eqref{eq:interference}, 
\begin{align}
\frac 1 k I_k(X_{[k]}; Y_k) < \frac 1 s I_k(X_{[s]}; Y_k | X_{[s+1:k]}) \quad \mbox{ for } k > s \geq 1. \label{eq:mutualinfo}
\end{align}
\end{lemma}
\lemref{lem:mutualinfo} ensures that the equal-rate point of the $k$-MAC lies on the sum-rate boundary and away from all the corner points of the rate region achieved with $P_X$. In their work on the group testing problem \cite[Th.~3]{malyutov1980planning}, Malyutov and Mateev prove a non-strict version of \eqref{eq:mutualinfo} for permutation-invariant channels \eqref{eq:permutationinvariance}. They use this non-strict version of \eqref{eq:mutualinfo} to conclude that their achievability and converse results in \cite[Th.~1 and 2]{malyutov1980planning} coincide for permutation-invariant channels. Adding the reducibility \eqref{eq:reducible} and interference \eqref{eq:interference} assumptions to the permutation-invariance assumption \eqref{eq:permutationinvariance} enables us to prove the strict inequality in \lemref{lem:mutualinfo}, which in turn enables the use of a single threshold rule at the decoder, as discussed in \secref{sec:raccode}.

\lemref{lemma:expectation} compares the expected values of the information densities for different channels. 

\begin{lemma}\label{lemma:expectation}
Let $X_1, X_2, \ldots, X_k \sim \text{i.i.d. }P_X$. If a RAC is permutation-invariant \eqref{eq:permutationinvariance}, reducible \eqref{eq:reducible}, friendly \eqref{eq:silence}, and exhibits interference \eqref{eq:interference}, then for any $1 \leq s \leq t < k$,
\begin{align}
\mathbb{E}[\imath_t(X_{[s]}; Y_k)] \leq I_k(X_{[s]}; Y_k) < I_t(X_{[s]}; Y_t).
\end{align}
\end{lemma}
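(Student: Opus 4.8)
The plan is to handle the two inequalities by separate mechanisms. The left inequality is a generic ``mismatched information density'' bound requiring essentially nothing about the RAC, whereas the right inequality is the genuine ``more interferers hurt each transmitter'' statement, and it is here that reducibility, permutation-invariance, friendliness, and interference all enter.

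For the left inequality, first observe that $\imath_t(x_{[s]};y)$ depends on $(x_{[s]},y)$ alone, so $\mathbb{E}[\imath_t(X_{[s]};Y_k)]$ is computed with $X_{[s]}\sim P_X^s$ and $Y_k$ drawn from $P_{Y_k|X_{[s]}}(\cdot|X_{[s]})$, the law obtained by marginalizing the remaining $k-s$ i.i.d.\ inputs. Writing out $I_k(X_{[s]};Y_k)-\mathbb{E}[\imath_t(X_{[s]};Y_k)]$ and splitting the logarithm, I would identify this difference with
\[
\mathbb{E}_{X_{[s]}}\!\left[ D\!\left(P_{Y_k|X_{[s]}} \,\|\, P_{Y_t|X_{[s]}}\right) \right] - D\!\left(P_{Y_k} \,\|\, P_{Y_t}\right),
\]
i.e.\ a conditional relative entropy minus the relative entropy of the corresponding marginals, both taken under the common input law $P_X^s$. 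This is nonnegative by the chain rule for relative entropy together with the data-processing inequality (discarding $X_{[s]}$), which gives the left inequality. The bounds $0\le I_k(X_{[s]};Y_k)\le I_k<\infty$ (finiteness from \eqref{eq:moment3}) keep everything well-defined, and the inequality is trivial on any event where $\imath_t(X_{[s]};Y_k)=-\infty$, as occurs for the adder--erasure RAC.

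For the right inequality I would reduce to a one-step comparison. By reducibility \eqref{eq:reducible}, $I_t(X_{[s]};Y_t)=I_k(X_{[s]};Y_k|X_{[t+1:k]}=0^{k-t})$, so it suffices to prove $I_{m+1}(X_{[s]};Y_{m+1})<I_m(X_{[s]};Y_m)$ for every $m$ with $s\le m\le k-1$ and then chain. Recast via entropies, and using that $H(X_{[s]})$ is fixed and that reducibility gives $H(X_{[s]}|Y_m)=H(X_{[s]}|Y_{m+1},X_{m+1}=0)$, this is equivalent to
\[
0 < I(X_{[s]};X_{m+1}|Y_{m+1}) + \Big(\mathbb{E}_{X_{m+1}}\!\left[H(X_{[s]}|Y_{m+1},X_{m+1})\right]-H(X_{[s]}|Y_{m+1},X_{m+1}=0)\Big).
\]
The first term is strictly positive: interference \eqref{eq:interference} with $(s',t',k')=(s,s+1,m+1)$ shows $X_{[s]}$ and $X_{s+1}$ are conditionally dependent given $Y_{m+1}$, and permutation-invariance \eqref{eq:permutationinvariance} makes the posterior $P_{X_{[m+1]}|Y_{m+1}}$ exchangeable, so (relabeling $s+1\mapsto m+1$ when $s<m$) $X_{[s]}$ and $X_{m+1}$ are conditionally dependent given $Y_{m+1}$. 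It then remains to show the bracketed term is nonnegative, i.e.\ that $x=0$ maximizes $x\mapsto I_{m+1}(X_{[s]};Y_{m+1}|X_{m+1}=x)$, equivalently $I_{m+1}(X_{[s]};Y_{m+1}|X_{m+1}=0)\ge I_{m+1}(X_{[s]};Y_{m+1}|X_{m+1})$.

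This last step is the main obstacle. For $s=m$ it is exactly friendliness \eqref{eq:silence}, but for $s<m$ the ``middle'' transmitters $[s+1:m]$ are present (marginalized) throughout, and \eqref{eq:silence} does not apply verbatim. The route I would take is to pass to the channel $\tilde W(y|x_{[s]},x_{m+1})=\sum_{x_{[s+1:m]}}P_{Y_{m+1}|X_{[m+1]}}(y|x_{[m+1]})\prod_{i=s+1}^{m}P_X(x_i)$ obtained by integrating out the middle block; $\tilde W$ inherits permutation-invariance and reducibility from the family in \eqref{eq:collection} by direct computation, and the inequality needed is precisely friendliness of $\tilde W$ with ``useful'' block $[s]$ and the single extra input $x_{m+1}$. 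Verifying that $\tilde W$ remains friendly is the one genuinely non-mechanical point; granting it, the chain $I_k(X_{[s]};Y_k)<\cdots<I_{t+1}(X_{[s]};Y_{t+1})<I_t(X_{[s]};Y_t)$ completes the argument, each strict step being supplied by the interference term.
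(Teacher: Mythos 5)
Your treatment of the left inequality is exactly the paper's: identify $I_k(X_{[s]};Y_k)-\mathbb{E}[\imath_t(X_{[s]};Y_k)]$ with $D\bigl(P_{X_{[s]}}P_{Y_k|X_{[s]}}\,\|\,P_{X_{[s]}}P_{Y_t|X_{[s]}}\bigr)-D\bigl(P_{Y_k}\,\|\,P_{Y_t}\bigr)$ and apply the data-processing inequality for relative entropy. That part is correct and needs no further comment.

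The right inequality is where the proposal does not close. Your one-step descent $I_{m+1}(X_{[s]};Y_{m+1})<I_m(X_{[s]};Y_m)$ rests on the claim that the marginalized channel $\tilde W$ is ``friendly,'' i.e.\ $I_{m+1}(X_{[s]};Y_{m+1}|X_{m+1}=0)\ge I_{m+1}(X_{[s]};Y_{m+1}|X_{m+1})$, and you explicitly leave this unverified. This is a genuine gap rather than a routine verification: by reducibility the claim is precisely $I_{m+1}(X_{[s]};Y_{m+1}|X_{m+1})\le I_m(X_{[s]};Y_m)$, which already implies the non-strict version of the very inequality you are proving at that step (since conditioning on the independent $X_{m+1}$ cannot decrease mutual information), so reducing to ``$\tilde W$ is friendly'' buys you essentially nothing --- the unproven lemma is not easier than the target, and it does not follow from \eqref{eq:silence} by any mechanical manipulation, since \eqref{eq:silence} is stated only for the full complementary block. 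The paper avoids the induction altogether by handling the tail block in one shot: $I_k(X_{[s]};Y_k)<I_k(X_{[s]};Y_k|X_{[s+1:s+k-t]})$ (nonnegativity because the conditioning variables are independent of $X_{[s]}$, strictness from interference via the chain rule and \lemref{lemma:mutualincrease}), $=I_k(X_{[s]};Y_k|X_{[t+1:k]})$ by permutation-invariance, $\le I_k(X_{[s]};Y_k|X_{[t+1:k]}=0^{k-t})$ by a single application of friendliness to the entire silenced block (still with the middle block $[s+1:t]$ marginalized, but this is the form the assumption is evidently meant to cover and it is invoked only once), and finally $=I_t(X_{[s]};Y_t)$ by reducibility. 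I would replace your induction with this block argument; as written, the chain of one-step inequalities cannot be completed from the stated assumptions.
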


The orderings in \lemref{lem:mutualinfofriendly}--\ref{lemma:expectation} are used in bounding the performance of our agnostic random access code.

\section{Main Result}
\label{sec:main}
\subsection{An Asymptotic Achievability Result}
Our main result is the following bound on achievable rates for the RAC.
\begin{theorem}\label{thm:ach}(Achievability) 
For any RAC \[\left\{\left(\cX^k, P_{Y_k | X_{[k]}}(y_k|x_{[k]}), \cY_k\right)\right\}_{k=0}^K \] satisfying \eqref{eq:permutationinvariance} and \eqref{eq:reducible}, any $K < \infty$, and any fixed $P_X$ satisfying \eqref{eq:silence}--\eqref{eq:moment2a}, 
there exists an $(M, \{(n_k, \epsilon_k) \}_{k = 0}^K )$ code provided that
\begin{align}
k \log M \leq n_k I_k - \sqrt{n_k V_k} Q^{-1}(\epsilon_k) - \frac{1}{2}\log n_k + O(1)
\label{eq:mainresult}
\end{align}
for all $k \in [K]$, 
and
\begin{align}
n_0 \geq c_0 \log n_1 + o(\log n_1),  \label{eq:n0logn}
\end{align}
where $c_0$ is a known positive constant. The $O(1)$ term in \eqref{eq:mainresult} is constant with respect to $n_1$; it depends on the number of active transmitters, $k$, but not on the total number of transmitters, $K$.

\end{theorem}
The code in \thmref{thm:ach} assigns equal rates $R_{[k]}=(R,\ldots,R)$, $R=\frac{\log M}{n_k}$, to all active transmitters. 
The sum-rate $kR$ converges as $O\left(\frac1{\sqrt{n_k}}\right)$ to $I_k(X_{[k]}; Y_k)$
for some input distribution $P_{X_{[k]}}(x_{[k]})=\prod_{i=1}^kP_{X}(x_i)$ for all $k$. Note that
$P_X$ is independent of the number of active transmitters, $k$. 
If the RAC is discrete and memoryless and a single $P_X$ maximizes $I_k(X_{[k]}; Y_k)$ for every $k$, then
the achievable rate in \eqref{eq:mainresult} not only converges to the symmetrical rate point
on the capacity region of the MAC in operation but also achieves the best-known second-order term \cite{ huang2012finite, jazi2012simpler, tan2014dispersions,  scarlett2015constantcompMAC}\footnote{Note that we are comparing the RAC achievable rate with rate-0 feedback 
to the MAC capacity without feedback.  
Wagner \textit{et al.} \cite{wagner2020ANew} show that if a discrete, memoryless, point-to-point channel has at least two capacity-achieving input distributions and their dispersions $V_1$ \eqref{eq:dispersionVk} are distinct, then using one-bit feedback improves the achievable second-order term. Although rate-0 feedback does not change the capacity region of a discrete memoryless MAC \cite{sarwate2009feedback}, in light of \cite{wagner2020ANew} it is plausible that even one-bit feedback can improve the achievable second-order term for some MACs.} (see \secref{sec:comparison} for details.)

To better understand \thmref{thm:ach}, consider a channel satisfying \eqref{eq:silence}--\eqref{eq:moment2a} for which the same distribution $P_X$ maximizes $I_k$ for all $k$.  For example, for the adder-erasure RAC in \eqref{eq:addererasure}, setting $P_X$ to be Bernoulli(1/2) maximizes $I_k$ for all $k$. By \lemref{lem:mutualinfofriendly}, for $M$ large enough and any $\epsilon_1, \epsilon_2, \ldots, \epsilon_K$, one can pick $n_1 < n_2 < \cdots < n_K$ so that equality holds in \eqref{eq:mainresult} for all $k$.
Therefore, \thmref{thm:ach} certifies that for some channels, rateless codes with encoders that are, until feedback, agnostic to the transmitter activity pattern perform as well in both first- and second-order terms as the best-known scheme  \cite{ huang2012finite, jazi2012simpler, tan2014dispersions,  scarlett2015constantcompMAC} designed with complete knowledge of transmitter activity. Moreover for any fixed $0 < \epsilon_0 < 1$, the probability that at time $n_0 \geq c_0 \log n_1 + o(\log n_1)$ the decoder correctly detects the scenario where no transmitters are active is no smaller than $1-\epsilon_0$. Thus, a new epoch can begin very quickly when no transmitters are active in the current epoch. 
  
  The constant $c_0$ in \eqref{eq:n0logn} depends on the output distributions $P_{Y_k}$, $k = 0, \dots, K$, and on the hypothesis test chosen in \secref{sec:analysis_0test} but not on the target probability of error $\epsilon_0$. In contrast, the $o(\log n_1)$ term in \eqref{eq:n0logn} depends on $\epsilon_0$. See \secref{sec:analysis_0test} (eq. \eqref{eq:huangn0}) for an example where we bound the dependence of the $o(\log n_1)$ term on $\epsilon_0$ under the log-likelihood ratio test. 
  
 Our achievability result in \thmref{thm:ach} assumes that the total number of transmitters, $K$, is constant. The asymptotic regime in which $K$ grows with the decoding times, $n_1, n_2, \dots, n_K$, seeks to characterize scenarios with massive numbers of communicators \cite{polyanskiy2017perspective, chen2017capacity, scarlettPhase2016}. Understanding the fundamental limits of random access communications in that regime presents an interesting challenge for future work.
  
 \subsection{Comparison With the Existing Achievability Results} \label{sec:comparison}
 \subsubsection{Discrete Memoryless RACs}
Our achievable region (\thmref{thm:ach}) is consistent with the achievability results for the 2-transmitter MACs given in \cite{huang2012finite, jazi2012simpler, tan2014dispersions, scarlett2015constantcompMAC}. The proofs in \cite{huang2012finite, jazi2012simpler, tan2014dispersions} use i.i.d. random code design, an approach that we follow in \thmref{thm:ach}. In \cite{scarlett2015constantcompMAC}, Scarlett \textit{et al.} use constant-composition codes.
%Note that because of considering a type class, constant-composition codes require that the size of the input alphabets are finite, while the i.i.d. random coding does not require that. 
%We fix the time sharing  random variable in \cite{huang2012finite, jazi2012simpler, tan2014dispersions, scarlett2015constantcompMAC} to be constant.
In \cite{huang2012finite, jazi2012simpler, tan2014dispersions}, the achievable rate region of a discrete memoryless MAC is expressed as a three-dimensional vector inequality that relies on a $3 \times 3$ dispersion matrix $\mathsf{V}_2$ defined in \cite[eq.~(48)]{tan2014dispersions}; the entry of $\mathsf{V}_2$ at location $(3, 3)$ is $V_2$ \eqref{eq:dispersionVk} for some input distribution $(P_{X_1}, P_{X_2})$. 
For rate pairs approaching interior (i.e., non-corner) points on the sum-rate boundary for $(P_{X_1^*}, P_{X_2^*})$, i.e., rate pairs satisfying
\begin{align}
    (R_1, R_2) \in \big\{ &(r_1 + o(1), r_2 + o(1)) \colon \notag \\
    &r_1 < I_2(X_1^*; Y_2^*|X_2^*) \notag \\
    &r_2 < I_2(X_2^*; Y_2^*|X_1^*) \notag \\
    &r_1 + r_2 = I_2(X_1^*, X_2^*; Y_2^*) \big\}, \label{eq:rateregion}
\end{align}
%$(R_1, R_2) = (r_1 + o(1), r_2 + o(2))$, where $r_1 < I_2(X_1^*; Y_2^*|X_2^*)$, $r_2 < I_2(X_2^*; Y_2^*|X_1^*)$ and $r_1 + r_2 = I_2(X_1^*, X_2^*; Y_2^*)$,
the achievable region in \cite{huang2012finite, jazi2012simpler, tan2014dispersions} reduces to the scalar inequality
\begin{align}
R_1 + R_2 \leq I_2^* -\sqrt{\frac{V_2^*}{n}}Q^{-1}(\epsilon) + O\left( \frac{\log n}{n} \right), \label{eq:L1L2}
\end{align}
where 
\begin{align}
     I_2^* \triangleq I_2(X_1^*, X_2^*; Y_2^*) 
\end{align}
is the sum-rate capacity and $V_2^*$ is the dispersion $V_2$ \eqref{eq:dispersionVk} evaluated using $(P_{X_1^*}, P_{X_2^*})$. The bound in \eqref{eq:L1L2} implies that
the only component of $\mathsf{V}_2$ employed in the second-order characterization of the region \eqref{eq:rateregion} is $V_2^*$.  
The result in \eqref{eq:L1L2} is proved in \cite[Prop.~4 case ii)]{haim2012}. 

In \cite[Th.~1]{scarlett2015constantcompMAC}, Scarlett \textit{et al.} use constant-composition codes to show that the dispersion matrix $\mathsf{V}_2$ in the second-order achievable region can be improved to $\tilde{\mathsf{V}}_2$, defined in \cite[eq.~(13)]{scarlett2015constantcompMAC}. Further, they show that $\tilde{\mathsf{V}}_2 \preceq \mathsf{V}_2$, where $\preceq$ designates positive semidefinite order. Therefore, the second-order rate region that is obtained using constant-composition codes includes that achieved with i.i.d. random coding when the target error probability satisfies $\epsilon < \frac 1 2$. Scarlett \textit{et al.} \cite{scarlett2015constantcompMAC} present two examples for which $\tilde{\mathsf{V}}_2 \prec \mathsf{V}_2$, demonstrating that the inclusion can be strict. The $(3,3)$ component of $\tilde{\mathsf{V}}_2$ is
\begin{align}
    \tilde{V}_2^* &= V_2^* - \Var{\E{\imath_2(X_1^*, X_2^*; Y_2^*) | X_1^*}} \notag \\
    & \quad- \Var{\E{\imath_2(X_1^*, X_2^*; Y_2^*) | X_2^*}}, \label{eq:V2tilde}
\end{align}
where $P_{X_1^*} P_{X_2^*} P_{Y_2^*|X_1^*, X_2^*} = P_{X_1^*} P_{X_2^*} P_{Y_2|X_1, X_2}$. The right side of \eqref{eq:L1L2} is achievable with $V_2^*$ replaced by $\tilde{V}_2^*$. In \lemref{lem:scarlett}, below, we derive a saddle point condition for general MACs without cost constraints. \lemref{lem:scarlett} implies that
\begin{align}
     \tilde{V}_2^* = V_2^*. \label{eq:V2tildestar}
\end{align}
This means that while constant-composition code design can yield achievability results with second-order terms superior to those derived through i.i.d. code design, on the sum-rate boundary that superior performance is observed only at corner points. For any rate point approaching an interior point on the sum-rate boundary, the i.i.d. random code design employed in this paper achieves first- and second-order performance identical to that achieved by constant-composition code design.

\begin{lemma} \label{lem:scarlett}
Let $P_{Y_2|X_1, X_2}$ be a 2-transmitter MAC with finite sum-rate capacity. Assume that the $\sigma$-algebra on the abstract input alphabets $\mathcal{X}_i$ includes all singletons on $\mathcal{X}_i$, $i = 1, 2$.
%Denote $I_2(P_{X_1}, P_{X_2})$ the mutual information $I_2(X_1, X_2; Y_2)$ under arbitrary product distributions $P_{X_1} \times P_{X_2}$.
Let $(X_1^*, X_2^*, Y_2^*) \sim P_{X_1^*} P_{X_2^*} P_{Y_2|X_1, X_2}$, where $(P_{X_1^*}, P_{X_2^*})$ is a sum-rate capacity achieving input distribution, i.e., 
\begin{align}
    I_2^* \triangleq I_2(X_1^*, X_2^*; Y_2^*) = \sup\limits_{P_{X_1} P_{X_2}} I_2(X_1, X_2; Y_2) < \infty. \label{eq:I2star}
\end{align}
Then, for $i = 1, 2$, 
\begin{align}
    \E{\imath_2(X_1^*, X_2^*; Y_2^*) | X_i^*} &= I_2^*, \label{eq:EcondC}
\end{align}
where \eqref{eq:EcondC} holds $P_{X_i^*}$-almost surely.
\end{lemma}
\begin{IEEEproof}
See \appref{app:scarlett}. 
\end{IEEEproof}
A version of \lemref{lem:scarlett} for discrete memoryless MACs appears in \cite[Prop.~1]{watanabe1996}. The result is proved by verifying that \eqref{eq:EcondC} satisfies the Karush-Kuhn-Tucker (KKT) conditions for the maximization problem in \eqref{eq:I2star} (Although the maximization problem in \eqref{eq:I2star} is not convex, it satisfies a regularity condition ensuring the necessity of the KKT conditions for optimality \cite{watanabe1996}.) We extend \cite[Prop.~1]{watanabe1996} to general MACs by demonstrating a saddle point condition for MACs. The saddle point condition is more general in the sense that it applies to abstract alphabets.

%Recall that by \lemref{lem:mutualinfo}
%, the assumptions of \eqref{eq:permutationinvariance}, \eqref{eq:reducible} and \eqref{eq:interference} on the channel and the input distribution guarantee that 
%and identical encoding, our coding scheme always operates at a rate point that approaches to a point on the sum-rate boundary away from the corner points. 
From \eqref{eq:EcondC}, we deduce that
\begin{align}
    \Var{\E{\imath_2(X_1^*, X_2^*; Y_2^*) | X_i^*}} = 0, \quad i = 1, 2. \label{eq:VcondC}
\end{align}
Substituting \eqref{eq:VcondC} into \eqref{eq:V2tilde}, we obtain \eqref{eq:V2tildestar}.
% Therefore, \eqref{eq:L1L2} and \lemref{lem:scarlett} together imply that if the input distribution $(P_X, P_X)$ employed in \thmref{thm:ach} is a unique sum-rate CAID for 2 transmitters, using i.i.d. random coding performs as well as constant-composition coding in the first- and second-order terms in the rate region \eqref{eq:rateregion}.

The result in \eqref{eq:I2star}--\eqref{eq:EcondC} extends the following well-known properties of point-to-point DMCs to MACs. In \cite[Th.~4.5.1]{gallager1968}, the KKT conditions in \eqref{eq:I2star}--\eqref{eq:EcondC} for point-to-point DMCs are
\begin{align}
    I_1^* &\triangleq  \max_{P_{X_1}} I_1(X_1; Y_1) \\
    \E{\imath_1(X_1^*; Y_1^*) | X_1^*} &= I_1^* \quad \textrm{ if } P_{X_1^*}(x_1) > 0 \label{eq:EcondCpoint} \\
    \E{\imath_1(X_1^*; Y_1^*) | X_1^* = x_1} &\leq I_1^* \quad \textrm{ if } P_{X_1^*}(x_1) = 0; \label{eq:EcondCpoint2}
\end{align}
these conditions are necessary and sufficient for optimality.
As noted in \cite[Lemma~62]{polyanskiy2010Channel}, \eqref{eq:EcondCpoint}--\eqref{eq:EcondCpoint2} indicate that for a capacity-achieving input distribution $P_{X_1^*}$, 
\begin{align}
     \Var{\E{\imath_1(X_1^*; Y_1^*) | X_1^*}} = 0. \label{eq:VarEis0}
\end{align}
From \eqref{eq:VarEis0} and the law of total variance, it follows that the unconditional and conditional variances of $\imath_1(X_1^*; Y_1^*)$ given $X_1^*$ are equal, i.e.,
\begin{align}
    V_1 = \E{\Var{\imath_1(X_1^*; Y_1^*)|X_1^*}}. \label{eq:V1E}
\end{align} 
For point-to-point DMCs, Moulin \cite{moulin2012cc} shows that the second-order term $\tilde{V}_1$ achievable using constant-composition coding equals the right-hand side of \eqref{eq:V1E}, meaning that i.i.d. random code design and constant-composition random code design achieve the same fundamental limits for point-to-point DMCs.

\subsubsection{The Gaussian RAC}
While the RAC code definition (Definition~\ref{def:codecompound}) does not impose cost constraints on the codewords, cost constraints can be added where needed. In the case of the Gaussian RAC defined in \eqref{eq:Gaussian}, the maximal power constraint $P$ on the codewords requires that
\begin{align}
    \left\lVert\sff(u, w)^{n_k} \right\rVert^2 \leq n_k P \label{eq:maxpower}
\end{align}
for all $u \in \mathcal{U}$, $w \in [M]$, and $k \in [K]$, where $\left\lVert\cdot\right\rVert$ denotes the Euclidean norm. If any encoder attempts to transmit a codeword that does not satisfy \eqref{eq:maxpower}, we count that event as an error. Hence, the maximal power constraints add the term 
\begin{align}
    \Prob{\bigcup_{j = 1}^k \bigcup_{i = 1}^k \left\{\left \lVert X_i^{n_j} \right \rVert^2 > n_j P \right\}}
\end{align}
to the error terms in \eqref{eq:errorcriterion}. 

For the Gaussian $k$-MAC under maximal power constraints, drawing codewords i.i.d. according to distribution $P_X \sim \mathcal{N}(0, P-\delta_{n_k})$ for any $\delta_{n_k} \to 0$ as $n_k \to \infty$ yields a worse second-order performance bound than the one achieved by drawing codewords uniformly at random from the $n_k$-dimensional power sphere \cite{molavianjazi2014On, molavianjazi2015second}. MolavianJazi and Laneman \cite{molavianjazi2015second} and Scarlett \textit{et al.} \cite{scarlett2015constantcompMAC} derive the improved second-order term for the Gaussian MAC by drawing codewords uniformly at random over an $n_k$-dimensional power sphere and by combining constant-composition code design with a quantization argument, respectively. In \cite{yavas2020Gaussian}, for the Gaussian MAC and RAC, we prove the achievability of the same second-order term as \cite{molavianjazi2015second, scarlett2015constantcompMAC} with an improved third-order term $\frac{1}{2} \log n_k$. The proof employs codewords designed by concatenating spherically distributed sub-blocks and a maximum likelihood decoding rule combined with a threshold rule based on the output power.

\subsection{An Example RAC}
The following example investigates rates achievable for the adder-erasure RAC in \eqref{eq:addererasure}.
\begin{example}
\normalfont
For the adder-erasure RAC, the capacity achieving distribution is the equiprobable (Bernoulli(1/2)) distribution for all $k$. (See the proof of \thmref{thm:ikvk} in Appendix \ref{app:adder}.) For this channel, one can exactly calculate $I_k$ and $V_k$ for this channel for every $k$ (labelled ``True'' in \figref{fig:ikvk}). The approximating characterizations
\begin{align}
&I_k =   (1-\delta) \left( \frac {1} 2  \log \frac{\pi e k}{2} - \frac{ \log e}{12 k^2 } \right)+ O(k^{-3}) \label{eq:approxIk}\\
&V_k = (1 - \delta) \Bigg[ \frac{\delta}{4} \log^2 \frac{\pi e k}{2} + \frac{\log^2 e}{2}  - \frac{ \log^2 e}{2 k} \notag \\
&\quad - \left( \frac{ \log e}{2} + \frac{\delta \log \frac{\pi e k}{2} }{12} \right) \frac{\log e}{k^2} \Bigg] + O\left( \frac{\log k}{k^3} \right)  \label{eq:approxVk},
\end{align}  
which capture the first- and second-order behavior of $I_k$ and $V_k$ for each $k$, are, nonetheless, useful since they highlight how each depends on $k$ and $\delta$. These values, without the $O(\cdot)$ terms in \eqref{eq:approxIk}--\eqref{eq:approxVk}, are labelled ``Approximation" in \figref{fig:ikvk}.
The approximations are quite tight even for small $k$. Both $I_k$ and $\sqrt{V_k}$ are of order $O(\log k)$, indicating that as $k$ grows, the sum-rate capacity grows, albeit slowly, while the per-user rate vanishes as $\bigo{\frac {\log k}{k}}$. The dispersion $V_k$ also grows, and the speed of approach to the sum-rate capacity is slower. Interestingly, the dispersion behavior is different for the pure adder RAC $(\delta = 0)$, in which case $V_k = \frac{\log^2 e}{2}  + \bigo{\frac{1}{k}}$ is almost constant as a function of $k$. The derivation of \eqref{eq:approxIk} and \eqref{eq:approxVk} relies on an approximation for the probability mass function of the $(k, 1/2)$ Binomial distribution using a higher order Stirling's approximation (\appref{app:adder}). 
\begin{figure}[htbp] 
    \centering
    \subfloat[]{
    	\includegraphics[width=0.46\textwidth]{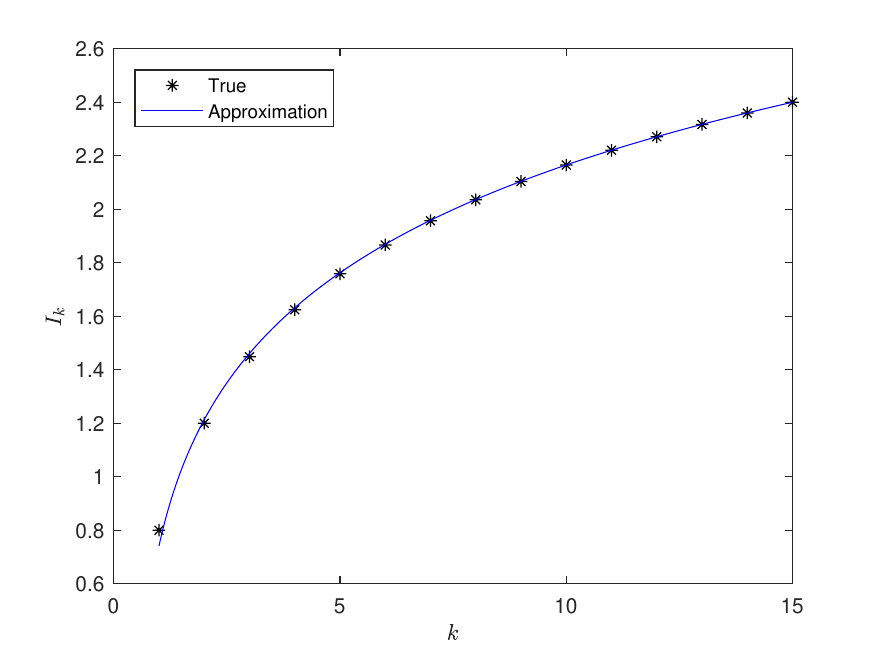} 
    	\label{fig:ik}} \\
    \subfloat[]{	
    	\includegraphics[width=0.46\textwidth]{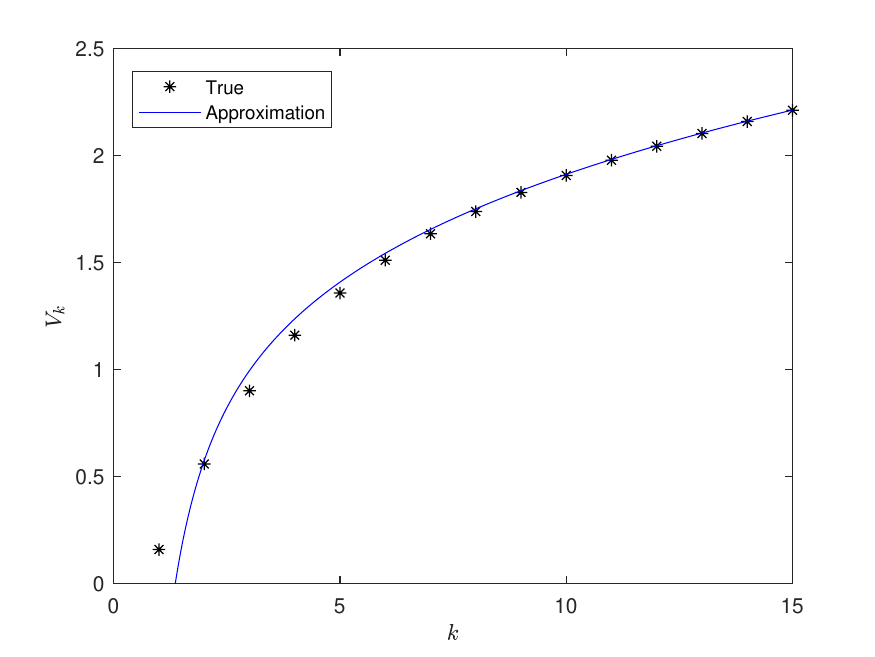}
    	\label{fig:vk}}
    \caption{\label{fig:ikvk} (a) Sum-rate capacity $I_k$ (in bits) and (b) dispersion $V_k$ (in $\mbox{bits}^2$) for the adder-erasure RAC with $\delta = 0.2$.}
\end{figure} 
\begin{figure}[htbp]  
    \centering
    \includegraphics[width=0.44\textwidth]{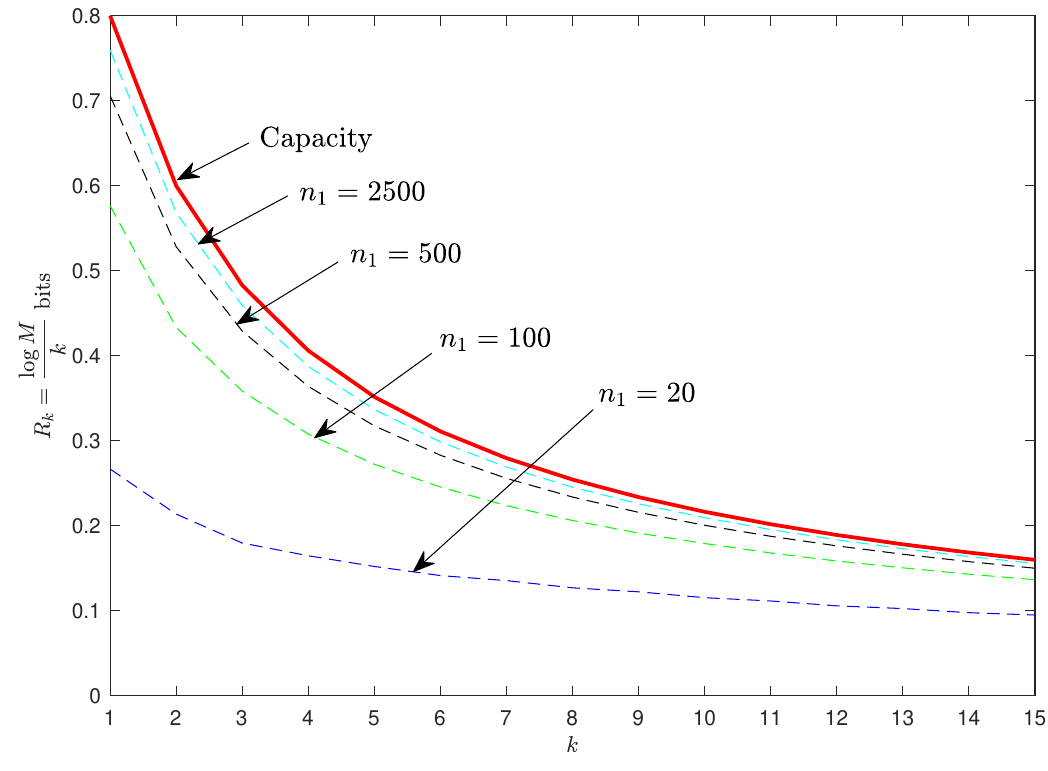}
    \caption{Capacity and approximate achievable rates (in bits per user) for the adder-erasure RAC with erasure probability $\delta = 0.2$ are given for the target error probability $\epsilon_k = 10^{-6}$ for all $k$. For each curve, the message size $M$ is fixed so that the rates $\{R_k\}$ are achievable with $n_1$ set to $20, 100, 500,$ and $2500$, respectively.}
    \label{fig:rk}
\end{figure} 

\figref{fig:rk} shows the approximate rate per transmitter, $R_k = \frac {\log M}{n_k}$ (neglecting the $O(1)$ term in \eqref{eq:mainresult}), achieved by the proposed scheme as a function of the number of active transmitters, $k$, and the choice of blocklength $n_1$ for a fixed error probability $\epsilon_k = 10^{-6}$ for all $k$. Fixing $n_1$ and $\epsilon_k$ fixes the maximum achievable message size, $M$, according to \eqref{eq:mainresult}. The remaining $n_k$ for $k \geq 2$ are found by choosing the smallest $n_k$ that satisfies \eqref{eq:mainresult} using the given $M$ and $\epsilon_k$. Each curve illustrates how the rate per transmitter ($R_k$) decreases as the number of active users $k$ increases. The curves differ in their choice of blocklength $n_1$ and the resulting changes in $M$ and $n_0, n_2, \dots, n_K$. Here $n_1$ is fixed to $20, 100, 500$ and $2500$. For a fixed $k$, the points on the same vertical line demonstrate how the gap between the per-user capacity and the finite-blocklength achievable rate decreases as blocklength increases.
\end{example}

\subsection{A Non-asymptotic Achievability Result}
Theorem~\ref{thm:ach} follows from Theorem~\ref{thm:nonasymp}, stated next,
which bounds the error probability of the RAC code 
defined in Section~\ref{sec:raccode}.  
When $k$ transmitters are active, the error probability $\epsilon_k$ captures both errors in the estimate $t$ of $k$ 
and errors in the reproduction $\hW_{[t]}$ of $W_{[k]}$ when $t=k$. Theorem~\ref{thm:nonasymp} is formulated for an arbitrary choice of a statistic $h\colon \mathcal{Y}^{n_0} \mapsto \mathbb{R}$ used to decide whether any transmitters are active. Possible choices for $h(\cdot)$ appear in \eqref{eq:hoefftest} and \eqref{eq:ks_test} in \secref{sec:analysis_0test}, below.

\begin{theorem}\label{thm:nonasymp}
Fix constants $\gamma_0$, $\lambda_{s, t}^k \geq 0$, and $\gamma_t > 0$ for all $1 \leq s \leq t \leq k$.
For any RAC \[\left\{\left(\cX^k, P_{Y_k | X_{[k]}}(y_k|x_{[k]}), \cY_k\right)\right\}_{k=0}^K\] satisfying \eqref{eq:permutationinvariance} and \eqref{eq:reducible}, any $K \leq \infty$\footnote{Note that while \thmref{thm:ach} requires $K<\infty$, \thmref{thm:nonasymp} allows $K=\infty$. For $K=\infty$, \eqref{eq:nonasymptotic} holds for every finite $k$ since the bound on $\epsilon_k$ depends only on the RAC with at most $k$ active transmitters.}, and any fixed input distribution $P_X$, there exists an $(M, \lbrace (n_k, \epsilon_k) \rbrace_{k = 0}^K )$ code such that
\begin{align}
    \epsilon_0 \leq \Prob{ h(Y_0^{n_0}) > \gamma_0  }, \label{eq:0treps0}
\end{align}
and for all $k\geq 1$,
		\begin{IEEEeqnarray}{rCl}
        \IEEEyesnumber \label{eq:nonasymptotic} \IEEEyessubnumber*
        \epsilon_k &\leq& \bbP[\imath_k(X_{[k]}^{n_k}; Y_k^{n_k}) \leq \log \gamma_k] \label{eq:dominating} \\
        &&+ \Prob{  h(Y_k^{n_0}) \leq \gamma_0  } \label{eq:0transmitter} \\
%        &\,\,+&1-\frac{\prod_{i=0}^{k-1}(M-i)}{M^k} \label{eq:repetition} \\
        &&+ \frac{k(k-1)}{2M} \label{eq:repetition} \\
        &&+\sum_{t = 1}^{k - 1} \binom{k}{t} \bbP[\imath_t(X_{[t]}^{n_t}; Y_k^{n_t}) > \log \gamma_t ] \label{eq:wrongtime} \\
        &&+\sum_{t = 1}^k \sum_{s = 1}^{t-1} \binom{k}{t-s}  \mathbb{P}\Big[\imath_t(X_{[s+1:t]}^{n_t}; Y_k^{n_t}) \notag \\
        &&\quad > n_t \mathbb{E} [\imath_t(X_{[s+1:t]}; Y_k) ] + \lambda_{s, t}^k \Big] \label{eq:confususer1} \\
        &&+ \sum_{t = 1}^k\sum_{s = 1}^t \binom{k}{t-s} \binom{M-k}{s}  \mathbb{P} \Big[\imath_t( \bar{X}_{[s]}^{n_t} ; Y_k^{n_t} | X_{[s+1: t]}^{n_t}) \notag \\
        &&\quad > \log \gamma_t - n_t \mathbb{E} [\imath_t(X_{[s+1:t]}; Y_k) ] - \lambda_{s, t}^k \Big],  \label{eq:confususer} 
		\end{IEEEeqnarray}
where for any $n$, $(X_{[k]}^n, \bar{X}_{[k]}^n, Y_{k}^n) $ is a random sequence drawn i.i.d. $\sim P_{X_{[k]} \bar{X}_{[k]} Y_{k}}(x_{[k]}, \bar{x}_{[k]}, y_{k}) = \left( \prod_{i = 1}^k P_X(x_i) P_X(\bar{x}_i) \right) P_{Y_k|X_{[k]}}(y_k | x_{[k]}) $.
\end{theorem} 

The operational regime of interest is when $\epsilon_0, \dots, \epsilon_k$ are constant; that is, $\epsilon_k$ does not vanish as $n_k$ grows. For $k = 0$, the error term in \eqref{eq:0treps0} is the probability that the decoder does not correctly determine that the number of active transmitters is 0 at time $n_0$. For $k > 0$,  \eqref{eq:dominating} is the probability that the true codeword set produces a low information density. This is the dominating term in the regime of interest. All remaining terms are negligible, as shown in the refined asymptotic analysis of the bound in \thmref{thm:nonasymp} (see Section~\ref{sec:achproof}, below.)
The remaining terms bound the probability that the decoder incorrectly estimates the number of active transmitters as 0 \eqref{eq:0transmitter}, the probability that two or more transmitters send the same message \eqref{eq:repetition},\footnote{Given the use of identical encoders, multiple encoders sending the same message can be beneficial or harmful, depending on the channel. To simplify the analysis, we treat this (exponentially rare) event as an error.} the probability that the decoder estimates the number of active transmitters as $t$ for some $1 \leq t<k$ and decodes those $t$ messages correctly \eqref{eq:wrongtime}, and the probability that the decoder estimates the number of active transmitters as $t$ for some $1 \leq t\leq k$ and decodes to $s$ messages that were not transmitted and $t-s$ messages that were transmitted \eqref{eq:confususer1}--\eqref{eq:confususer}. 

For $k = 1, 2$, the expression in \eqref{eq:nonasymptotic} particularizes to
 \begin{align}
 \epsilon_1 &\leq \bbP[\imath_1(X_1^{n_1}; Y_1^{n_1})\leq \log \gamma_1 ] + \Prob{  h(Y_1^{n_0} ) \leq \gamma_0  }\notag \\
  &\quad + (M-1) \bbP[ \imath_1(\bar{X}_1^{n_1} ; Y_1^{n_1}) > \log \gamma_1 - \lambda_{1,1}^1 ] \\
 \epsilon_2 &\leq \bbP[\imath_2(X_{[2]}^{n_2}; Y_2^{n_2})\leq \log \gamma_2 ] + \Prob{  h(Y_2^{n_0} ) \leq \gamma_0  } \nonumber\\
 &\quad + \frac{1}{M} + 2 \bbP[\imath_1(X_1^{n_1} ; Y_2^{n_1}) > \log \gamma_1 ] \nonumber \\
 &\quad + 2 \bbP[ \imath_2(X_{2}^{n_2}; Y_2^{n_2}) \geq n_2 I_2(X_2; Y_2) + \lambda_{1,2}^2 ] \nonumber \\
 &\quad + (M-1)\bbP[ \imath_1(\bar{X}_{1}^{n_1}; Y_2^{n_1}) > \log \gamma_1 - \lambda_{1,1}^2 ]  \notag\\
&\quad + 2 (M-2) \bbP[ \imath_2(\bar{X}_{1}^{n_2}; Y_2^{n_2} | X_{2}^{n_2} ) \nonumber \\
 &\quad  \quad > \log \gamma_2 - n_2 I_2(X_2; Y_2) - \lambda_{1, 2}^2] \notag\\
 &\quad  + \frac{(M-2)(M-3)}2 \bbP[ \imath_2(\bar{X}_{[2]}^{n_2}; Y_2^{n_2} ) > \log \gamma_2 - \lambda_{2,2}^2].
\end{align}
    
For the MAC with $K$ transmitters, i.e., the scenario where $K$ transmitters are always active, the only decoding time is $n_K$. The error terms associated with incorrect decoding times are no longer needed in this case, and the error probability bound in \eqref{eq:nonasymptotic} becomes
\begin{IEEEeqnarray}{rCl}
        \IEEEyesnumber \label{eq:MACbound} \IEEEyessubnumber*
        \epsilon_K &\leq& \bbP[\imath_K(X_{[K]}^{n_K}; Y_K^{n_K}) \leq \log \gamma_K] + \frac{K(K-1)}{2M} \\
        &&+ \sum_{s = 1}^{K-1} \binom{K}{K-s}  \mathbb{P}\Big[\imath_K(X_{[s+1:K]}^{n_K}; Y_K^{n_K}) \notag \\
        &&\quad > n_K \mathbb{E} [\imath_K(X_{[s+1:K]}; Y_K) ] + \lambda_{s, K}^K \Big]  \label{eq:MACbound1} \\
        &&+ \sum_{s = 1}^K \binom{K}{K-s} \binom{M-K}{s}  \mathbb{P} \Big[\imath_K( \bar{X}_{[s]}^{n_K} ; Y_K^{n_K} | X_{[s+1: K]}^{n_K}) \notag \\
        &&\quad > \log \gamma_K - n_K \mathbb{E} [\imath_K(X_{[s+1:K]}; Y_K) ] - \lambda_{s, K}^K 
        \Big]. \IEEEeqnarraynumspace \label{eq:MACbound2}
\end{IEEEeqnarray}

A description of the proposed RAC code and the proofs of Theorems~\ref{thm:ach} and \ref{thm:nonasymp} 
appear in Section~\ref{sec:raccode}. 

\section{The RAC Code and Its Performance}\label{sec:raccode}
\subsection{Code Design}\label{sec:codedesign}

We construct the RAC code used in the proofs of Theorems~\ref{thm:ach} and \ref{thm:nonasymp} 
as follows.

\noindent
{\bf Encoder Design:} The common randomness random variable $U = (U(1), \dots, U(M))$ has distribution
\begin{align}
P_U &\triangleq P_{U(1)}  \times \cdots \times  P_{U(M)}, \label{eq:PUtriangleq}
\end{align}
where $P_{U(w)} = P_X^{n_K}$, $w = 1, \dots, M$, and $P_{X}$ is a fixed distribution on alphabet $\cX$. Each realization of $U$ defines a codebook with $M$ i.i.d. vectors $U(1), \dots, U(M)$ of dimension $n_K$ (the codewords). Note that the cardinality of the alphabet $U$ is $|\cX|^{M n_K}$. In \cite[Th.~19]{polyanskiy2011feedback}, Polyanskiy \textit{et al.} use Carath\'eodory's Theorem to show that the common randomness $U$ can be replaced with common randomness $U'$ with cardinality at most $K + 2$. We reduce this alphabet size to $K + 1$ in \appref{app:caratheodory}. As described in Definition~\ref{def:codecompound}, 
an $(M,\{(n_k,\epsilon_k)\}_{k = 0}^K)$ RAC code with identical encoders employs the same encoder $\sff(\cdot)$ at every transmitter. The encoder $\sff(U, \cdot)$ depends on $U$ as
\begin{align}
\sff(U, w) = U(w) \quad \mbox{ for } w = 1, \dots, M.
\end{align}
For brevity, we omit $U$ in the encoding and decoding functions and write $\sff(U, w) = \sff(w) $ for $w = 1, \dots, M$, and $\sfg_k(U, y^{n_k}) = \sfg_k(y^{n_k})$ for $y^{n_k} \in \mathcal{Y}_K^{n_k}, k \in \{0, \dots, K\}$. Recall that $\sff(w)$ is a $n_K$-dimensional vector. We use $\sff(w)^{n_k}$ to denote the first $n_k$ coordinates of vector $\sff(w)$.
For any collection of messages $w_{[k]}\in[M]^k$, we use $\sff\left(w_{[k]}\right) \triangleq \left(\sff(w_1),\ldots,\sff(w_k)\right)$ 
to denote the collection of encoded descriptions produced by the encoders.

{\bf Decoder Design:} 
Upon receiving the first $n_0$ samples of the channel output $Y$, the decoder runs the following composite hypothesis test
\begin{align}
\sfg_0(y^{n_0})=
        \left\{\begin{array}{cl}
        \es & \mbox{if ~}
        h(y^{n_0}) \leq \gamma_0 \\
        \sfe & \mbox{otherwise}
        \end{array}\right. \label{eq:0test}
\end{align}
to decide whether there are any active transmitters.
Decoder output $\es$ signifies that the decoder decides that all transmitters are silent, sending a feedback bit `1' to all transmitters to start a new coding epoch. Decoder output $\sf{e}$ indicates that the receiver believes that there are active transmitters; the decoder transmits feedback bit `0' to the transmitters, telling them that it is not ready to decode, and therefore that transmissions must continue. Statistic $h\colon \, \mathcal{Y}^{n_0} \mapsto \mathbb{R}$ is used to decide whether any transmitters are active.

For each $k \geq 1$, decoder $\mathsf g_k$ observes output $y^{n_k}$ and employs a single threshold rule 
\begin{align}
\sfg_k(y^{n_k})=
        \begin{cases}
        w_{[k]} & \mbox{if ~}
        \imath_k(\sff\left(w_{[k]}\right)^{n_k};y^{n_k})>\log\gamma_k \\
        & \phantom{if ~}\mbox{and } w_i < w_j \,\,\forall \, i < j \\ 
        \sfe & \mbox{otherwise} 
        \end{cases}
        \label{eq:thres}
\end{align}
 for some constant $\gamma_k$ chosen before the transmission starts. 
Under permutation-invariance \eqref{eq:permutationinvariance} and identical encoding \eqref{eq:f}, all permutations of the message vector $w_{[k]}$ give the same information density. We use the ordered permutation specified in \eqref{eq:thres} as a representative of the equivalence class with respect to the binary relation $\stackrel{\pi}{=}$. The choice of a representative is immaterial since decoding is identity-blind.
When there is more than one ordered $w_{[k]}$ that satisfies the threshold 
condition, decoder $\mathsf g_k$ chooses among these options arbitrarily.  
All such events are counted as errors in the analysis in \secref{sec:thmpe}, below. If the decoder output is a message vector $w_{[k]}$, then the decoder sends feedback bit `1', telling them to stop transmission. Otherwise, the decoder sends feedback bit `0', and the epoch continues. For $k \geq 1$, the decoder $\sfg_k(y^{n_k})$ depends on $U$ through its dependence on the encoding function $\sff\left(w_{[k]}\right)$; for $k = 0$, $\sfg_0(y^{n_0})$ does not depend on $U$. 

The proof of Theorem~\ref{thm:nonasymp}, below, bounds the error probability for the proposed code.

\subsection{Proof of Theorem~\ref{thm:nonasymp}}
\label{sec:thmpe}

In the discussion that follows, we bound the error probability of the code $(\sff,\{\sfg_k\}_{k = 0}^K)$ 
defined above.  For $k = 0$, the only error event is that the received vector at time $n_0$, $Y_0^{n_0}$, fails to pass the test
\begin{align}
\epsilon_0 &\leq \Prob{\sfg_0(Y_0^{n_0}) \neq \es | W_0 = 0}
\end{align}
given in \eqref{eq:0test}.
For $k > 0$, the analysis relies on the independence of codewords $\sff(W_i)$ and $\sff(W_j)$ 
from distinct transmitters $i$ and $j$.
Given identical encoders and i.i.d. codeword design, this assumption is valid provided that $W_i\neq W_j$; 
we therefore count events of the form $W_i=W_j$ as errors.
% \footnote{It is interesting to notice that the event $W_i=W_j$ for distinct $i,j$ 
% is not uniformly bad over all channels. 
% For example, in a Gaussian channel, if two transmitters send the same codeword, 
% then the power of the transmission effectively doubles.  
% In contrast, in a channel where interference is modeled as the binary sum of a collection of binary codewords, 
% if two transmitters send the same codeword, then the codewords cancel.}
Let $\bbPrep$ denote the probability of such a repetition; the union bound gives
% \begin{equation}
% \bbPrep=1-\frac{\prod_{i=0}^{k-1}(M-i)}{M^k}.
% \end{equation}
\begin{align}
    \bbPrep \leq \frac{k(k-1)}{2M}. \label{eq:Prep}
\end{align}

The discussion that follows uses
$\ws_{[k]}=(1, 2, \ldots,k)$ as an example instance of a message vector $w_{[k]}$ 
in which $w_i\neq w_j$ for all $i \neq j$. The set
$\tcW_{[s]}$ describes all ordered message vectors that do not share any messages in common with $\ws_{[k]}$, i.e., 
\begin{align}
\tcW_{[s]} \triangleq \{\tw_{[s]}\in[M]^s \colon \tw_1  > k, \tw_i < \tw_j \,\,\, \forall i < j\}. \label{eq:Ws}
\end{align}
%Note that we need to include only ordered vectors in $\tcW_{[s]}$ in view of our identity-blind decoding rule in \eqref{eq:thres}. 
Let the components of the vectors $(X_{[k]}^{n_k}, \bar{X}_{[k]}^{n_k}, Y_k^{n_k})$ be i.i.d. with joint distribution
\begin{align}
    &P_{X_{[k]} \bar{X}_{[k]} Y_k}(x_{[k]}, \bar{x}_{[k]}, y_k) \notag \\ &\quad = P_{X_{[k]}}(x_{[k]}) P_{X_{[k]}}(\bar{x}_{[k]}) 
    P_{Y_k | X_{[k]}}(y_k | x_{[k]}). \label{eq:PXXbarY}
\end{align}

\begin{table*}[htbp]
\newcounter{mytempeqncnt}
% ensure that we have normalsize text
\normalsize
% Store the current equation number.
\setcounter{mytempeqncnt}{\value{equation}}
% Set the equation number to one less than the one
% desired for the first equation here.
% The value here will have to changed if equations
% are added or removed prior to the place these
% equations are referenced in the main text.
\setcounter{equation}{58}
% The spacer can be tweaked to stop underfull vboxes.
\vspace*{4pt}
% IEEE uses as a separator
%\hrulefill
\begin{IEEEeqnarray}{rCl}
\epsilon_k &&=  \frac1{M^k}\sum_{w_{[k]}\in [M]^k} 
        \bbP [\{\sfg_0(Y_k^{n_0}) \neq \mathsf{e}\} \cup \{\cup_{t=1}^{k-1}\sfg_t(Y_k^{n_t})\neq\sfe\} 
\cup \{\sfg_k(Y_k^{n_k}) \stackrel{\pi}{\neq} w_{[k]}\} | W_{[k]} = w_{[k]} ] \IEEEyesnumber \label{eqn:achboundfirst} \\
&&\leq \bbPrep+(1-\bbPrep)\bbP[\{\sfg_0(Y_k^{n_0}) \neq \mathsf{e}\} \cup \{\cup_{t=1}^{k-1}\sfg_t(Y_k^{n_t})\neq\sfe\}
\cup
        \{\sfg_k(Y_k^{n_k}) \stackrel{\pi}{\neq} \ws_{[k]}\} | W_{[k]} =\ws_{[k]}] \label{eqn:rep} \\
&&\leq \bbPrep + \bbP[\sfg_0(Y_k^{n_0}) \neq \mathsf{e} | W_{[k]} = \ws_{[k]} ] \hspace{-0.5mm} +  \sum_{t=1}^{k-1}\binom{k}{t}\bbP[\sfg_t(Y_k^{n_t}) \stackrel{\pi}{=} \ws_{[t]}|W_{[k]}=\ws_{[k]}] \label{eqn:instance} \\
&& \quad     +\sum_{t=1}^k\sum_{s=1}^t\binom{k}{t-s}
        \bbP[\cup_{\tw_{[s]}\in\tcW_{[s]}}\{\sfg_t(Y_k^{n_t})\stackrel{\pi}{=}(\tw_{[s]},\ws_{[s+1:t]})\}    
  |W_{[k]}=\ws_{[k]}] + \bbP[ \sfg_k(Y_k^{n_k}) = \sfe | W_{[k]}=\ws_{[k]} ] \label{eqn:errors}  \\
&&\leq \frac{k(k-1)}{2M} + \Prob{h(Y_k^{n_0}) \leq \gamma_0} 
     +\sum_{t=1}^{k-1}\binom{k}{t}\bbP[
                \imath_t(X_{[t]}^{n_t};Y_k^{n_t})>\log\gamma_t]   \label{eqn:replacedecoder}\\
&&  \quad    +\sum_{t=1}^k\sum_{s=1}^t\binom{k}{t-s}
        \bbP[\cup_{\tw_{[s]}\in\tcW_{[s]}}\{\imath_t(\bar{X}_{[s]}^{n_t}(\tw_{[s]}),X_{[s+1:t]}^{n_t};Y_k^{n_t}) 
	>\log\gamma_t \} ] + \bbP[\imath_k(X_{[k]}^{n_k}; Y_k^{n_k}) \leq \log \gamma_k]  \label{eq:barterms}
\end{IEEEeqnarray}
% Restore the current equation number.
\hrulefill
\vspace{-10pt}
\setcounter{equation}{\value{mytempeqncnt}}
\end{table*}
\setcounter{equation}{64}

Recall that the information density $\imath_t(x_{[t]}^{n_t}; y_t^{n_t})$ in \eqref{eq:imathnt} is defined with respect to $(X_{[t]}^{n_t}, Y_t^{n_t})$, not with respect to $(\bar{X}_{[t]}^{n_t}, Y_t^{n_t})$. The resulting error bound proceeds as shown in \eqref{eqn:achboundfirst}--\eqref{eq:barterms}; 
here $X_{[k]}$ is the vector of transmitted codewords, 
and $\bar{X}_{[s]}(\tw_{[s]})$ is an i.i.d. copy of $\bar{X}_{[s]}$, which represents the codeword for a collection of messages $\tw_{[s]} \in \tcW_{[s]}$ %$\stackrel{\pi}{\neq} w_{[s]}$ 
that was not transmitted.
Line (\ref{eqn:rep}) separates the case where at least one message is repeated
from the case where there are no repetitions.
Lines \eqref{eqn:instance}--\eqref{eqn:errors} enumerate the error events in the no-repetition case; 
these include 
        all cases where the transmitted codeword passes the binary hypothesis test \eqref{eq:0test} for ``no active transmitters" \eqref{eqn:instance}, 
        all cases where a subset of the transmitted codewords meets the threshold for some $t<k$ (\ref{eqn:instance}), 
        all cases where a codeword that is incorrect in $s$ dimensions and correct in $t-s$ dimensions meets the threshold for $t\leq k$ \eqref{eqn:errors}, and all cases where the transmitted codeword fails to meet the threshold \eqref{eqn:errors}. 
        We apply the union bound and the symmetry of the code design 
        to represent the probability of each case by the probability of an example instance times the number of instances. Equations \eqref{eqn:replacedecoder}-\eqref{eq:barterms} apply the bound in \eqref{eq:Prep} and replace decoders by the threshold rules in their definitions. 
    
The delay in applying the union bound to the first probability in \eqref{eq:barterms} is deliberate. It allows us to exploit the symmetry assumptions on the channel and to use a single threshold rule instead of $2^{k}-1$ threshold rules as in \cite{huang2012finite, jazi2012simpler, tan2014dispersions, scarlett2015constantcompMAC}. 
Applying the bound
\begin{IEEEeqnarray}{rCl}
\lefteqn{\bbP\Bigg[\bigcup_{\tw_{[s]}\in\tcW_{[s]}}\{\imath_t(\bar{X}_{[s]}^{n_t}(\tw_{[s]}),X_{[s+1:t]}^{n_t};Y_k^{n_t})>\log\gamma_t \}\Bigg]}  \\
& = & \bbP\Bigg[\bigcup_{\tw_{[s]}\in\tcW_{[s]}}\{\imath_t(\bar{X}_{[s]}^{n_t}(\tw_{[s]}),X_{[s+1:t]}^{n_t};Y_k^{n_t})>\log\gamma_t \} \notag \\
&&  \bigcap \left\{\imath_t(X_{[s+1:t]}^{n_t};Y_k^{n_t})>n_t\bbE[\imath_t(X_{[s+1:t]};Y_k)]+\lambda_{s,t}^k\right\}\Bigg] \notag  \\
& & +\bbP\Bigg[\bigcup_{\tw_{[s]}\in\tcW_{[s]}}\{\imath_t(\bar{X}_{[s]}^{n_t}(\tw_{[s]}),X_{[s+1:t]}^{n_t};Y_k^{n_t})>\log\gamma_t \} \notag \\
&&      \bigcap \left\{\imath_t(X_{[s+1:t]}^{n_t};Y_k^{n_t})\leq n_t\bbE[\imath_t(X_{[s+1:t]};Y_k)]+\lambda_{s,t}^k\right\}\Bigg] \notag\\
&\leq& \bbP\left[\imath_t(X_{[s+1:t]}^{n_t};Y_k^{n_t})>n_t\bbE[\imath_t(X_{[s+1:t]};Y_k)]+\lambda_{s,t}^k\right] \notag \\
& & +\bbP\Bigg[\bigcup_{\tw_{[s]}\in\tcW_{[s]}}\{\imath_t(\bar{X}_{[s]}^{n_t}(\tw_{[s]});Y_k^{n_t}|X_{[s+1:t]}^{n_t})>  \notag \\
&& \quad  \log\gamma_t -n_t\bbE[\imath_t(X_{[s+1:t]};Y_k)]-\lambda_{s,t}^k\}\Bigg] \label{eq:observation}
\end{IEEEeqnarray}
% Therefore 
% \begin{IEEEeqnarray}{rCl}
%  \epsilon_k &&\leq  \frac{k(k-1)}{2M} + \Prob{  h(Y_k^{n_0} ; P_{Y_0}) \leq \gamma_0  } \notag \\
% &&		 + \bbP\left[\imath_k (X_{[k]}^{n_k}; Y_k^{n_k}) \leq \log \gamma_k\right] \notag \hspace{-1em} \notag \\
% &&        +\sum_{t=1}^{k-1}\binom{k}{t}\bbP\left[
%                 \imath_t(X_{[t]}^{n_t};Y_k^{n_t})>\log\gamma_t\right] \hspace{-1em} \notag \\
% &&      +\sum_{t=1}^k\sum_{s=1}^{t-1}\binom{k}{t-s}
%         \bbP\left[\imath_t(X_{[s+1:t]}^{n_t};Y_k^{n_t}) \right. \notag \\
% &&      \quad    >n_t\bbE \left[\imath_t \left(X_{[s+1:t]};Y_k\right) \right]+\lambda_{s,t}^k\Big]  \hspace{-1em} \notag \\
% &&  +\sum_{t=1}^k\sum_{s=1}^t\binom{k}{t-s}\binom{M-k}{s}
%         \bbP\left[\imath_t (\bar{X}_{[s]}^{n_t};Y_k^{n_t} |X_{[s+1:t]}^{n_t} ) \right. \notag  \hspace{-1em} \notag \\
% &&\quad >  \log\gamma_t -n_t\bbE[\imath_t(X_{[s+1:t]};Y_k)]-\lambda_{s,t}^k\Big], \hspace{-1em} 
% \end{IEEEeqnarray}
before applying the union bound to the first probability in \eqref{eq:barterms} yields a tighter bound. Combining \eqref{eq:barterms} and \eqref{eq:observation} and applying the union bound to the second probability in \eqref{eq:observation} completes the proof. 

% \ver{}{
% \subsection{Proof of Theorem~\ref{thm:ach}} 

% The proof consists of an asymptotic analysis of \thmref{thm:nonasymp}; full details in \cite{yavas2018rac}. In addition to standard Berry-Esseen-style arguments,   
% the analysis relies on natural orderings of mutual informations possessed by the collection of channels in \eqref{eq:collection} and enabled by the symmetry assumptions of \secref{sec:setup}. For example, consider the following Lemma. 
% \begin{lemma}
% Let $X_1, X_2, \ldots, X_k$ be i.i.d. Under permutation-invariance \eqref{eq:permutationinvariance},  reducibility \eqref{eq:reducible} and friendliness \eqref{eq:silence},
% \begin{align}
% \frac{I_s}{s} > \frac{I_k}{k} \mbox{ for } s < k.
% \end{align}
% \end{lemma}
% It describes a natural property of the collection of channels in \eqref{eq:collection}: the quality of the channel for each transmitter deteriorates as more transmitters are added (even though the sum capacity increases), and it ensures the existence of the choice of encoding blocklengths $n_1 < n_2 < \ldots < n_K$ such that the right sides of \eqref{eq:mainresult} are all equal regardless $k \in [K]$. 
% }

\subsection{Proof of Theorem~\ref{thm:ach}}
\label{sec:achproof}
We fix $P_X$, $M$, $\{\epsilon_k\}_{k = 0}^K$, and we set the blocklengths $\{n_k\}_{k = 1}^K$ as
\begin{align}
    n_k & = \gamma_k^2 \left(\frac{e}{k}(M-k)\right)^{-2k}, \label{eq:choicenk}
\end{align}
where
\begin{align}
\log \gamma_k &=  n_k I_k - \tau_k \sqrt{n_k V_k} \label{setgamma} \\
\tau_k &\triangleq Q^{-1}\left(\epsilon_k-\frac{B_k+C_k}{\sqrt{n_k}}\right), \label{eq:tauk}
\end{align}
$C_k$ is a constant to be chosen in \eqref{eq:Ckchoice},
\begin{align}
    B_k &\triangleq \frac{6 T_k}{V_k^{3/2}} \label{eq:Bk}
\end{align}
is the Berry-Esseen constant~\cite[Chapter XVI.5 Th.~2]{feller1971introduction} 
(which is finite by the moment assumptions \eqref{eq:moment2}
and \eqref{eq:moment3}), and
\begin{align}
    T_k &\triangleq \E{\lvert\imath_k(X_{[k]} ; Y_k) - I_k \rvert^3}.
\end{align}

The choice of the threshold $\gamma_k$~\eqref{setgamma}
follows the approach established 
for the point-to-point channel in~\cite{polyanskiy2010Channel}. Solving \eqref{eq:choicenk} for $M$ and applying the Taylor series expansion to $Q^{-1}(\cdot)$, we see that the size of the codebook admits the following expansion 
\begin{align}
k \log M &=  n_k I_k - \sqrt{n_k V_k} Q^{-1} \left( \epsilon_k  \right) - \frac{1}{2}\log n_k + O(1) \label{eq:diff}
\end{align}
simultaneously for all $k \in [K]$. 
Note that the expansion in \eqref{eq:diff} is the best-known performance up to the second-order term for MACs without random access \cite{ huang2012finite, jazi2012simpler, tan2014dispersions,  scarlett2015constantcompMAC}, and we have chosen our parameters with the goal of matching that best prior performance. By \lemref{lem:mutualinfofriendly}, the resulting blocklengths satisfy $n_1 < n_2 < \cdots < n_K$ for $M$ large enough. 

We proceed to apply \thmref{thm:nonasymp} to show that under the given parameter choices, the probability of decoding error at time $n_k$ is bounded above by $\epsilon_k$.
The constants $\left\{\lambda_{s,t}^k\right\}$ used in the error probability bound \eqref{eq:confususer1}--\eqref{eq:confususer} are set as
\begin{align}
    \lambda_{s,t}^k & =  \frac{n_t}{2} \left(I_t(X_{[s]}; Y_t | X_{[s+1:t]}) 
	- \frac{s}{t} I_t \right) 	\label{eq:choiceDelta} 
\end{align}
to ensure that
$\lambda_{s,t}^k> 0$ when $s < t$ (see Lemma \ref{lem:mutualinfo}) 
and that $\lambda_{s,t}^k=0$ when $s = t$. Next, we sequentially bound
the terms in Theorem~\ref{thm:nonasymp} using the parameters chosen in \eqref{eq:choicenk}, \eqref{setgamma}, and \eqref{eq:choiceDelta}.
\begin{itemize}[leftmargin=*]
\item{\eqref{eq:dominating}:}
As noted previously, this is the dominant term. Since $\imath_k(X_{[k]}^{n_k};Y_k^{n_k})$
is a sum of $n_k$ independent random variables, by the Berry-Esseen theorem \cite[Chapter~XVI.5 Th.~2]{feller1971introduction} and \eqref{setgamma}--\eqref{eq:Bk},  
\begin{align}
\mathbb{P} \left[ \imath_k(X_{[k]}^{n_k} ; Y_k^{n_k}) \leq \log \gamma_k \right] \leq \epsilon_k - \frac{C_k}{\sqrt{n_k}}. \label{eq:outageprob}
\end{align}
\item{\eqref{eq:0transmitter}:} The test statistic $h(\cdot)$ and the threshold $\gamma_0$ given in \eqref{eq:0test} are chosen in Section \ref{sec:analysis_0test} to satisfy
\begin{align}
\Prob{h(Y_k^{n_0}) \leq \gamma_0} &\leq \frac{E_k}{\sqrt{n_k}} \label{eq:type2error} \\
\Prob{h(Y_0^{n_0}) > \gamma_0} &\leq \epsilon_0 \label{eq:type1error} 
\end{align}
for some constant $E_k > 0$. \lemref{lem:test}, below, bounds the type-II error in \eqref{eq:type2error} in terms of $n_0$ when the type-I error in \eqref{eq:type1error} is bounded by $\epsilon_0$.
\begin{lemma}\label{lem:test}
Fix $\epsilon_0 \in (0, 1)$. Assume that \eqref{eq:assump:Pyk} holds. Then there exists a test function $h(\cdot)$ such that \eqref{eq:type1error} is satisfied and
\begin{align}
\Prob{h(Y_k^{n_0}) \leq \gamma_0} \leq \exp\{-n_0 C' + o(n_0)\} \label{eq:Cprime}
\end{align}
for some $C' > 0$ depending on the output distributions $P_{Y_i}$ for $i = 0, \dots, K$.
\end{lemma}
\begin{IEEEproof}
See Section \ref{sec:analysis_0test}.
\end{IEEEproof}
From \eqref{eq:diff}, $n_k = \bigo{n_1}$ for $k \geq 1$. To make \eqref{eq:Cprime} behave as $O\left( \frac 1 {\sqrt{n_k}} \right)$ in \lemref{lem:test}, we pick $n_0$ as in \eqref{eq:n0logn} with $c_0 = \frac{1}{2C'}$.
%\begin{align}
%n_0 \geq \frac 1 {2C'} \log n_1 + o(\log n_1). \label{eq:n0n1hypothesis}
%\end{align}

\item{\eqref{eq:repetition}:} According to \eqref{eq:choicenk}, the upper bound $\frac{k(k-1)}{2M}$ on $\bbPrep$ in \eqref{eq:Prep} decays exponentially with $n_k$.
%since $k \leq K < \infty$.
\item{\eqref{eq:wrongtime}:} 
Define $p$ as
\begin{align}
p \triangleq \bbP[\imath_t (X_{[t]} ; Y_k) > - \infty]. 
\end{align}
We next analyze \eqref{eq:wrongtime} for the cases $p = 1$ and $p < 1$.

Case 1: $p = 1$. 
By Lemma~\ref{lemma:expectation} and moment assumption \eqref{eq:moment2a}, 
\begin{align}
    I_t - \mathbb E \left[ \imath_t (X_{[t]} ; Y_k)\right] -\tau_t \sqrt{\frac{V_t}{n_t}} > 0 \label{eq:Itgreater0}
\end{align} 
for sufficiently large $n_t$. Chebyshev's inequality gives
\begin{align}
&\mathbb{P}[\imath_t (X_{[t]}^{n_t} ; Y_k^{n_t}) > \log \gamma_t ] \nonumber \\
&\quad \leq \frac{\textnormal{Var}[\imath_t (X_{[t]} ; Y_k)]}{n_t \left(I_t - \mathbb E \left[ \imath_t (X_{[t]} ; Y_k)\right] -\tau_t \sqrt{\frac{V_t}{n_t}}  \right)^2}. \label{eq:usersconfusion}
\end{align}
The right side of \eqref{eq:usersconfusion} 
behaves as $O\left(\frac1{n_t}\right)$.

Case 2: $p < 1$.
Here
\begin{align}
&\mathbb{P}[\imath_t (X_{[t]}^{n_t} ; Y_k^{n_t}) > \log \gamma_t ]  \nonumber \\
& \quad \leq \mathbb{P}[\imath_t (X_{[t]}^{n_t} ; Y_k^{n_t}) > -\infty ] \\
& \quad = p^{n_t}, \label{eq:pnt} 
\end{align}
where \eqref{eq:pnt} holds because $\imath_t (X_{[t]}^{n_t} ; Y_k^{n_t})$ is the sum of $n_t$ i.i.d. random variables, and that sum is greater than $-\infty$ if and only if all the summands satisfy the same inequality.
From \eqref{eq:usersconfusion} and \eqref{eq:pnt}, \eqref{eq:wrongtime} contributes $O\left(\frac1{n_k}\right)$ to our error bound. 

\item{\eqref{eq:confususer1}:} 
As in the analysis of \eqref{eq:wrongtime}, we define
\begin{align}
q \triangleq \bbP[\imath_t (X_{[s+1: t]} ; Y_k) > - \infty],
\end{align}
and treat the cases $q = 1$ and $q < 1$ separately. Observe that for $q = 1$,  Chebyshev's inequality implies
\begin{align}
	&\bbP\left[\imath_t(X_{[s+1:t]}^{n_t};Y_k^{n_t})>n_t\bbE [\imath_t(X_{[s+1:t]};Y_k) ]
		+\lambda_{t,s}^k  \right] \notag \\
	&\quad \leq \frac{\mathrm{Var} \left[ \imath_t(X_{[s+1:t]};Y_k) \right] }	
			{n_t \left(\frac12(I_t(X_{[s]};Y_t|X_{[s+1:t]})
			-\frac {s}{t} I_t)\right)^2 } \label{bound1},
\end{align}
	which is of order $O \left(\frac 1 {n_t}\right)$ by the moment assumption \eqref{eq:moment2a} and Lemma~\ref{lem:mutualinfo}.

For $q < 1$, 
\begin{align}
\bbP\left[\imath_t(X_{[s+1:t]}^{n_t};Y_k^{n_t})>n_t\bbE [\imath_t(X_{[s+1:t]};Y_k) ]+\lambda_{t,s}^k  \right] \leq q^{n_t}.
\end{align}
Therefore \eqref{eq:confususer1} contributes $\bigo{\frac1{n_k}}$ to 
our error bound.
 
 \item{\eqref{eq:confususer}:}
First, consider the case where $s < t \leq k$.  By \lemref{lemma:expectation} and Chernoff's bound,  
\begin{align}
&\mathbb{P} [\imath_t( \bar{X}_{[s]}^{n_t} ; Y_k^{n_t} | X_{[s+1: t]}^{n_t}) \nonumber \\
&\quad > \log \gamma_t - n_t \bbE [\imath_t(X_{[s+1:t]} ; Y_k)] - \lambda_{s, t}^k ] \label{eq:Pitgreater}\\
&\leq \mathbb{P} [\imath_t( \bar{X}_{[s]}^{n_t} ; Y_k^{n_t} | X_{[s+1: t]}^{n_t}) \nonumber \\
&\quad > \log \gamma_t - n_t I_t(X_{[s+1:t]}; Y_t) - \lambda_{s, t}^k ] \\
&\leq \E{\exp \left \lbrace \imath_t \left( \bar{X}_{[s]}^{n_t} ; Y_k^{n_t} | X_{[s+1: t]}^{n_t} \right) \right \rbrace} \nonumber \\
&\quad \cdot \exp{ \lbrace -(\log \gamma_t - n_t I_t(X_{[s+1:t]}; Y_t) - \lambda_{s, t}^k )\rbrace }  \\
&= \exp{ \lbrace - (\log \gamma_t - n_t I_t(X_{[s+1:t]}; Y_t) - \lambda_{s, t}^k )\rbrace }.  \label{eq:boundtechnique}
\end{align}
Using Stirling's bound
\begin{align}
 \binom{n}{k} \leq \left( \frac{e n}{k} \right)^k, \label{eq:bino}
\end{align}
we find that for all $s \leq t \leq k$
\begin{align}
\log \binom{M-k}{s} &\leq s \log \left( \frac{e (M-k)}{s} \right) \label{eq:logMks} \\
&\leq s \log \left( \frac{e (M-t)}{t} \right) + s \log \left( \frac{t}{s} \right) \\
&= \frac{s}{t} \left( \log \gamma_t - \frac 1 2 \log n_t \right) + s \log \left( \frac{t}{s} \right), \label{eq:logMkt}
\end{align}
where \eqref{eq:logMkt} follows from \eqref{eq:choicenk}. From \eqref{setgamma}, \eqref{eq:choiceDelta}, \eqref{eq:boundtechnique}, and \eqref{eq:logMkt}, we have
\begin{align} 
&\binom{M-k}{s} \mathbb{P} [\imath_t( \bar{X}_{[s]}^{n_t} ; Y_k^{n_t} | X_{[s+1: t]}^{n_t}) \notag \\ 
&\quad \quad > \log \gamma_t - n_t I_t(X_{[s+1:t]}; Y_t) - \lambda_{s, t}^k ] \\
 &\quad \leq \exp  \bigg \lbrace - n_t \frac{1}{2} \left(I_t(X_{[s]}; Y_t | X_{[s+1: t]}) - \frac{s}{t} I_t \right) \nonumber \\
 &\quad \quad + \left( 1 - \frac{s}{t} \right) \tau_t \sqrt{n_t V_t} - \frac{s}{2t} \log n_t + s \log \left(\frac t s\right) \!\! \bigg \rbrace. \!\!\!  \label{eq:expnt12}
\end{align}
\lemref{lem:mutualinfo} ensures that the exponent in \eqref{eq:expnt12} is negative for $n_t$ large enough. 

For $s = t < k$, from \eqref{eq:boundtechnique} and \eqref{eq:logMkt} with $s = t$, we get
\begin{align} 
\binom{M-k}{t} \mathbb{P} [\imath_t( \bar{X}_{[t]}^{n_t} ; Y_k^{n_t}) > \log \gamma_t] &\leq \frac{\binom{M-k}{t}}{\gamma_t} \leq \frac 1 {\sqrt{n_t}}.\label{bound5}
\end{align}

For $s = t = k$, following the change of measure technique  (e.g., \cite[Prop.~17.1]{polyanskiyLectureNotes}), one can rewrite an expectation with respect to measure $Q$ as an expectation with respect to measure $P$, giving 
\begin{align}
Q \left[ Z \in \mathcal A \right]  = \mathbb E_P\left[ \left( \frac{P[Z]}{Q[Z]}\right)^{-1} 1 \left\lbrace Z \in \mathcal A \right \rbrace \right]. 
\end{align}
Switching to the measure $P_{X_{[k]}} P_{Y_k | X_{[k]}}$ in this way, by \eqref{eq:bino} and the parameter choice \eqref{eq:choicenk}, we write
	\begin{IEEEeqnarray}{rCl}
	\lefteqn{ \binom{M-k}{k} \bbP[\imath_k( \bar{X}_{[k]}^{n_k} ; Y_k^{n_k} )  > \log \gamma_k  ] \nonumber}\\
	 &\leq & \left(\frac{e}{k}(M-k)\right)^k \label{eq:temp} \bbE \left[ \exp\{-\imath_k( X_{[k]}^{n_k};Y_k^{n_k})\} \right. \\
	 &&  \cdot \left.1\{\imath_k(X_{[k]}^{n_k};Y_k^{n_k})  >  \log\gamma_k \} \right ] \notag \\
	 & \leq & \frac{D_k}{n_k} \label{eq:measure1},
	\end{IEEEeqnarray}
	where
\begin{align}
D_{k} \triangleq 2 \left( \frac{\log 2 }{\sqrt{2 \pi V_k }} + 2 B_k \right) \label{eq:Dk}
\end{align}
and $B_k$ is defined in \eqref{eq:Bk}.
To justify \eqref{eq:measure1}, notice that $\imath_k(X_{[k]}^{n_k};Y_k^{n_k})$ is a sum of i.i.d. random variables; in \cite[Lemma 47]{polyanskiy2010Channel}, Polyanskiy \textit{et al.} derive a sharp bound on the expectation
\begin{align}
    \E{\exp\left( -\sum_{i = 1}^n Z_i\right) 1\left\{ \sum_{i = 1}^n Z_i  > \gamma \right\}}
\end{align} 
when the $Z_i$'s are independent. Applying that bound with $Z_i = \imath_k(X_{[k],i}; Y_{k, i})$ yields \eqref{eq:measure1}. Note that $D_{k}$ is finite by the moment assumptions \eqref{eq:moment2} and \eqref{eq:moment3}. 
Combining the bounds for the three cases in \eqref{eq:expnt12}, \eqref{bound5}, and \eqref{eq:measure1}, we conclude that \eqref{eq:confususer} contributes $\bigo{\frac 1 {\sqrt n_k}}$ to the total error.

Finally, we set the constant $C_k$ in \eqref{eq:tauk} to ensure
\begin{align}
\eqref{eq:0transmitter} + \eqref{eq:repetition} + \eqref{eq:wrongtime}+\eqref{eq:confususer1}+\eqref{eq:confususer} \leq \frac{C_k}{\sqrt{n_k}}. \label{eq:Ckchoice}
\end{align}
The existence of such a constant is guaranteed by our analysis above demonstrating that the terms \eqref{eq:0transmitter}--\eqref{eq:confususer} do not contribute more than $\bigo{\frac 1 {\sqrt n_k}}$ to the total.\footnote{Our bounds on \eqref{eq:0transmitter}--\eqref{eq:confususer} technically depend on $\gamma_k$ and therefore on $C_k$. However, it is easy to see that their dependence on $C_k$ is weak, and for large enough $n_k$, it can be eliminated entirely.   Thus the choice of $C_k$ satisfying \eqref{eq:Ckchoice} is possible.}
\end{itemize}

Due to \eqref{eq:outageprob} and \eqref{eq:Ckchoice}, the total probability of making an error at time $n_k$ is bounded by $\epsilon_k$, and the proof of Theorem~\ref{thm:ach} is complete.

\section{Discussion of the Main Result} \label{sec:furtherDis}

\subsection{Refining the Third-Order Term Using a Maximum Likelihood Decoder} \label{sec:refining}

For a RAC that satisfies the conditions in \thmref{thm:ach} and the conditional variance condition
\begin{align}
    \E{\Var{\imath_k(X_{[k]}; Y_k)| Y_k}} > 0 \quad \forall s \in [k], \label{eq:conditionalvariance}
\end{align}
we can improve the achievable third-order performance in \eqref{eq:mainresult} from $-\frac 1 2 \log n_k$ to $+\frac 1 2 \log n_k$. Prior work showing the achievability of the $+ \frac 1 2 \log n$ third-order term includes \cite[Th.~53]{polyanskiy2010thesis} for point-to-point channels satisfying \eqref{eq:conditionalvariance} with $k = 1$, \cite[Th.~1]{tan2015Third} for the Gaussian point-to-point channel, \cite[Th.~7]{liu2020Finiteblocklength}, \cite[Th.~14]{liu2020FiniteblocklengthArxiv} for discrete memoryless MACs satisfying \eqref{eq:conditionalvariance}, and \cite[Th.~2~and~4]{yavas2020Gaussian}, \cite[Th.~2~and~4]{yavas2020GaussianArxiv} for the Gaussian MAC and RAC. We can achieve the result here by replacing the threshold rule in \eqref{eq:thres} with a combination of a hypothesis test and a maximum likelihood decoder, giving
\begin{align}
\sfg_k(U, y^{n_k}) &=
        \hspace{-0.2em} \begin{cases} 
        \arg \max \limits_{w_{[k]}} \, \imath_k(\sff(w_{[k]})^{n_k};y^{n_k}) \hspace{-0.2em}&\text{if } h_k(y^{n_k}) \leq \gamma_k \\
        \sfe \hspace{-0.2em} &\text{otherwise,}
        \end{cases}
        \label{eq:thresML}
\end{align}
where the maximum is over the ordered message vectors $w_{[k]}$, and $h_k(\cdot)$ is a suitable test function that allows us to distinguish $P_{Y_k}$ from any $P_{Y_t}$ with $t \neq k$. As in prior work, the analysis applies the random coding union bound from \cite[Th. 16]{polyanskiy2010Channel}. 
As discussed in \secref{sec:analysis_0test}, suitable test functions $h_k(\cdot)$ can be found provided that $P_{Y_k} \neq P_{Y_t}$ for all $t \neq k$. For instance, in \cite{yavas2020Gaussian}, we use $h_k(y^{n_k}) = \left\lvert \frac 1 {n_k} \norm{ y^{n_k} }^2 - (1 + kP) \right \rvert$ for the Gaussian RAC, where $P$ is the maximal power constraint. The result does not apply to channels such as the adder-erasure RAC \eqref{eq:addererasure}, which does not satisfy the condition in \eqref{eq:conditionalvariance}.

\subsection{Choosing the Input Distribution $P_X$} \label{sec:choosinginput}
Although there are RACs for which a single input distribution $P_X$ achieves the capacity for all $k$-MACs, $k \in [K]$, (e.g., the adder-erasure channel), the permutation-invariance \eqref{eq:permutationinvariance} and reducibility \eqref{eq:reducible} assumptions do not imply that such a distribution exists for all RACs. In the following, we discuss how to choose the input distribution when the optimal input distribution varies with $k$.

Given a permutation-invariant \eqref{eq:permutationinvariance} and reducible \eqref{eq:reducible} RAC, $M$, $\bm{\epsilon} = (\epsilon_0, \dots, \epsilon_K)$, and any $P_X$ such that \eqref{eq:silence}--\eqref{eq:moment2a} are satisfied for the given RAC under input distribution $P_X$, let  
\begin{align}
&\mathcal{R}(M, \bm{\epsilon}, P_X) = \{(R_0, \dots, R_K) \colon \text{\eqref{eq:mainresult} and \eqref{eq:n0logn} hold} \}
\end{align}
denote the achievable rate region under input distribution $P_X$.
Here
\begin{align}
R_k = \frac{\log M}{n_k} \text{ for all } k \in \{0, \dots, K\}.
\end{align}
Let
\begin{align}
\mathcal{R}(M, \bm{\epsilon}) = \bigcup_{P_X: \text{ \eqref{eq:silence}--\eqref{eq:moment2a} hold} } \mathcal{R}(M, \bm{\epsilon}, P_X) \label{eq:rset}
\end{align}
denote the achievable rate region over all i.i.d. input distributions. A point in this set is called \emph{dominant} if no other points in the set are element-wise greater than or equal to that point. To optimize the achievable rate vector over the allowed input distributions, we must choose a distribution $P_{X^*}$ that achieves a dominant point for the set $\mathcal{R}(M, \bm{\epsilon})$. %, i.e., there is no point in $\mathcal{R}(M, \bm{\epsilon})$ that is element-wise greater than or equal to all points in $\mathcal{R}(M, \bm{\epsilon}, P_{X^*})$. 
Note that for the dominant points of $\mathcal{R}(M, \bm{\epsilon})$ corresponding to different values of $P_{X^*}$, there is an $O(1)$ difference between the left and right sides of the inequalities in \eqref{eq:mainresult}. If the achievable rate region $\mathcal{R}(M, \bm{\epsilon})$ is not convex, it can be improved to its convex hull using time sharing. For the modifications to the coding strategy that enable us to incorporate time sharing, see \cite{huang2012finite, tan2014dispersions, scarlett2015constantcompMAC}. 

%Viewing the RAC as a compound channel we use \cite[Lemma 4]{polyanskiy2013dispersion} to find the largest achievable rate vector on the line $\frac{R_k}{R_1} = \alpha_k$ for all $k \in [K]$ belonging to the set of achievable rate vectors $\mathcal{R}(M, \bm{\epsilon})$. That rate vector may not be a dominant point for $\mathcal{R}(M, \bm{\epsilon})$ since fixing the ratio $\alpha_k$ for all $k$ fixes a direction in $\mathcal{R}(M, \bm{\epsilon})$ without regard to whether there exists a dominant point in that direction. As an alternative, we can limit attention to dominant points using standard techniques for Lagrangian optimization. 

To illustrate what happens when different $P_{X^*}$ values achieve different dominant points of $\mathcal{R}(M, \bm{\epsilon})$, we consider the following example.

\begin{example} 
\normalfont Consider a RAC with $K = 2$, $\mathcal{X} = \mathcal{Y}_2 = \{0, 1\}$, and transition probability matrix $P_{Y_2|X_1, X_2}$
\begin{equation} \label{eq:exchannel}
\begin{tabular}{|c|c|c|c|c|}
\hline
$Y_2$ \textbackslash $X_1 X_2$ & $00$ & $01$ & $10$ & $11$\\
\hline
$0$ & $1-b$ & $b$ & $b$ & $1-a$\\
\hline
$1$ & $b$ & $1-b$ & $1-b$ & $a$\\
\hline
\end{tabular} 
\end{equation}
where $a, b \in [0, 1]$. This RAC is permutation-invariant since the ``01" and the ``10" columns are identical. When $k = 1$, the channel reduces to the binary symmetric channel with crossover probability $b$.
\figref{fig:rateTot} illustrates the set of achievable rate vectors $\mathcal{R}(M, \bm{\epsilon})$ (neglecting the $O(1)$ term in \eqref{eq:mainresult}) with $\log M = 1000$ and $\boldsymbol{\epsilon} = 10^{-3} \boldsymbol{1}$ for two choices of parameters in the channel in \eqref{eq:exchannel}. In \figref{fig:rate1}, $a = 0.7, b = 0.11$, and in \figref{fig:rate2}, $a = b = 0.11$; for each, the finite blocklength and capacity boundaries are demonstrated. In \figref{fig:rate1}, the dominant points are highlighted. The input distribution $P_{X^*} = (0.65, 0.35)$ (i.e., the Bernoulli(0.35) distribution) achieves the dominant point $(R_1, R_2) = (0.400, 0.204)$; the corresponding region $\mathcal{R}(M, \bm{\epsilon}, P_{X^*})$ is shown as the region bounded by the dashed lines. In \figref{fig:rate2}, the only dominant point $(0.437, 0.227)$ is achieved by the input distribution $P_{X^*} = (0.5, 0.5)$ (i.e., the Bernoulli(0.5) distribution.) Therefore, for the channel in \figref{fig:rate2}, the achievable rate region $\mathcal{R}(M, \bm{\epsilon})$ coincides with $\mathcal{R}(M, \bm{\epsilon}, P_{X^*})$, and we must choose $P_{X^*}$ as our input distribution. For this channel, $P_{X^*} = (0.5, 0.5)$ simultaneously maximizes the mutual informations $I_1$ and $I_2$, and the maxima are $I_1 = I_2 = 0.5$.
\end{example}

\begin{figure}[!htbp] 
    \centering 
    \subfloat[]{
    	\includegraphics[width=0.45\textwidth]{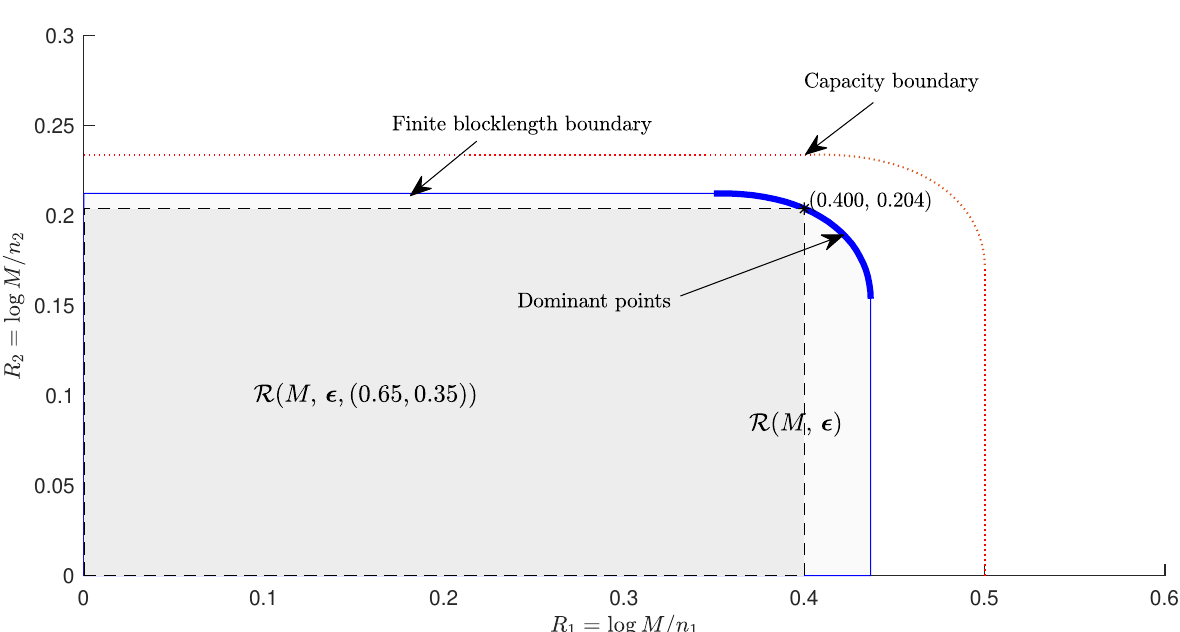} 
    	\label{fig:rate1}} \\
    \subfloat[]{	
    	\includegraphics[width=0.45\textwidth]{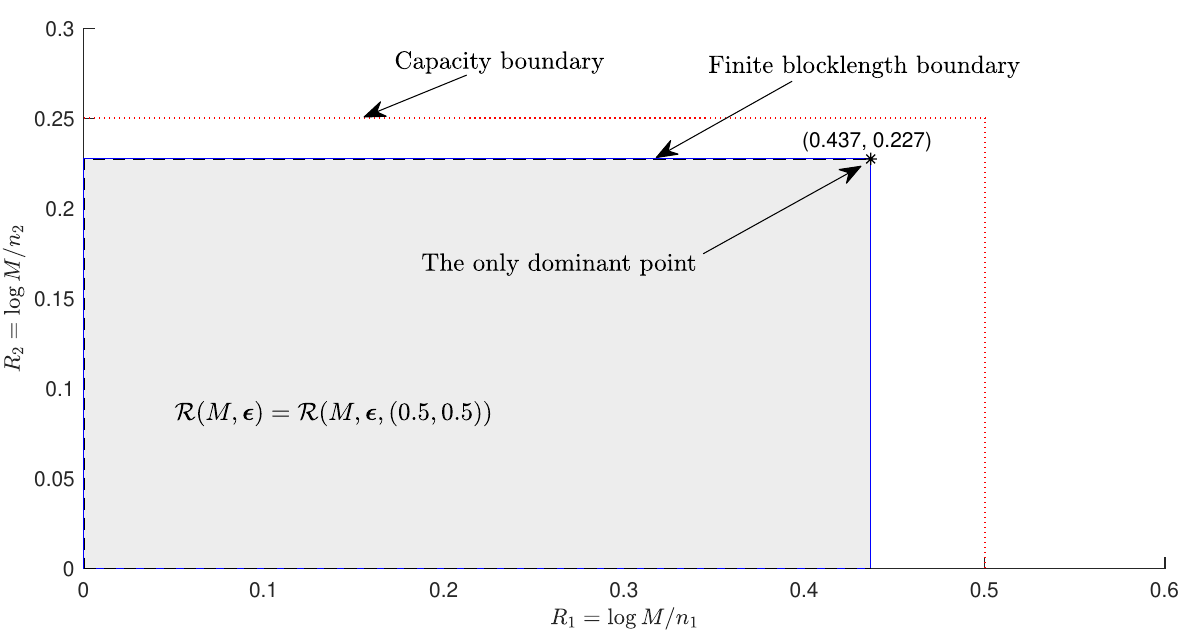}
    	\label{fig:rate2}}
    \caption{\label{fig:rateTot} The achievable rate region from \thmref{thm:ach} (excluding the $O(1)$ term) applied to the channel in \eqref{eq:exchannel} with $\log M = 1000$ and $\epsilon_k = 10^{-3}$ for $k \in [2]$. The results are shown for (a) $a = 0.7$ and $b = 0.11$ and blocklengths $(n_1, n_2) = (2501, 4904)$, and (b) for $a = b = 0.11$ and blocklengths $(n_1, n_2) = (2290, 4399)$.}
\end{figure} 

\subsection{Discussion of the Converse} \label{app:converse}
Even for MACs with only 2 transmitters, the capacity region for the MAC remains incompletely understood. A brief summary of related results follows. For any blocklength $n$ and average error probability $\epsilon \in (0, 1)$, let 
\begin{align}
    \mathcal{R}(n, \epsilon) = \left\{ \left(\frac{\log M_1}{n}, \frac{\log M_2}{n} \right) \colon \exists \text{ an } (n, M_1, M_2, \epsilon) \text{ code} \right\}
\end{align}
denote the set of achievable rate pairs, where $M_i$ is the message size for transmitter $i \in \{1, 2\}$. The capacity region of the MAC \cite{liao1971thesis, ahlswede1971multi} is 
\begin{align}
    \mathcal{C} = \bigcup_{P_Q P_{X_1|Q} P_{X_2|Q}} \{ (R_1, R_2) &\colon \notag \\
    R_1 &\leq I_2(X_1; Y_2|X_2, Q) \notag \\
    R_2 &\leq I_2(X_2; Y_2|X_1, Q)  \notag \\
    R_1 + R_2 &\leq I_2(X_1, X_2; Y_2|Q)\},
\end{align}
where $Q$ is the time sharing random variable. In \cite{dueck1981strong}, Dueck uses the blowing-up lemma to derive the first strong converse for discrete memoryless MACs. In \cite{ahlswede1982elementary}, for discrete memoryless MACs, Ahlswede uses a wringing technique to show 
\begin{align}
    \mathcal{R}(n, \epsilon) \subseteq \mathcal{C} + \bigo{\frac{\log n}{\sqrt{n}}} \boldsymbol{1}, \label{eq:ahlswede}
\end{align}
which improves Dueck's result. The coefficients of the term $\bigo{\frac{\log n}{\sqrt{n}}} \boldsymbol{1}$ in \eqref{eq:ahlswede} are bounded by a multiple of the product of input and output alphabet sizes $|\mathcal{X}_1| |\mathcal{X}_2||\mathcal{Y}_2|$. 
In \cite[Th.~1]{fong2016gaussianstrong}, Fong and Tan improve Ahlswede's second-order term $\bigo{\frac{\log n}{\sqrt{n}}} \boldsymbol{1}$ to $\bigo{\sqrt{\frac{\log n}{{n}}}} \boldsymbol{1}$ for the Gaussian MAC. They derive this result by applying Ahlswede's wringing technique \cite{ahlswede1982elementary} to quantized channel inputs. 
In \cite{kosut2020converse}, Kosut further improves the second-order term to $\bigo{\frac{1}{\sqrt{n}}} \boldsymbol{1}$. The second-order term in \cite[Th.~7]{kosut2020converse} has the same order and, for some channels, the same sign as the best-known second-order achievable term in \cite{scarlett2015constantcompMAC}. Kosut's result
% \begin{align}
%     \mathcal{R}(n, \epsilon) \subseteq \mathcal{C} + \bigo{\frac{1}{\sqrt{n}}} \boldsymbol{1}, \label{eq:kosut}
% \end{align}
%for MACs satisfying certain regularity conditions, which include 
applies to all discrete memoryless MACs and to the Gaussian MAC. To prove this converse, Kosut introduces a new measure of dependence between two random variables called ``wringing dependence." A key aspect of the approach is to restrict the channel inputs so that the wringing dependence between them is small. 
%The result in \eqref{eq:kosut} improves upon the converse results in both \cite{ahlswede1982elementary} and \cite{fong2016gaussianstrong}. For most channels, the coefficient in the $\bigo{\frac{1}{\sqrt{n}}}$ term in \eqref{eq:kosut} does not match the achievable second-order terms in
% \cite{huang2012finite, jazi2012simpler, scarlett2015constantcompMAC, molavianjazi2015second, tan2014dispersions}; yet, \eqref{eq:kosut} implies that the achievability results in \cite{huang2012finite, jazi2012simpler, scarlett2015constantcompMAC, molavianjazi2015second, tan2014dispersions} are order-wise optimal, i.e., 
% \begin{align}
%     \mathcal{R}(n, \epsilon) = \mathcal{C} + \bigo{\frac{1}{\sqrt{n}}} \boldsymbol{1}.
% \end{align}

In \cite{moulin2013anewmeta}, Moulin proposes a new converse technique for maximum-error capacity. His approach relies on strong large deviations for binary hypothesis tests and leads to a second-order term as in \eqref{eq:mainresult} when no time sharing is needed. Since the capacity regions for the maximum and average error probability can differ \cite{dueck1978maximal}, Moulin's result does not give a converse for the average-error capacity. 
% When we consider a subset of codewords with the same empirical distribution, $P_{\hat{X}_1 \hat{X}_2}$, the saddle point condition (e.g., \cite[Th. 4.4]{polyanskiyLectureNotes}) implies only that 
% \begin{align}
% D(P_{Y|X_1 X_2} \| P_{Y^*} | P_{\hat{X}_1 \hat{X}_2}) &\leq \max_{P_{X_1 X_2}} I(X_1, X_2; Y), \label{eq:saddle}
% \end{align}
% where $P_{Y^*}$ denotes the output distribution that achieves the maximum in the right side of \eqref{eq:saddle}. Here, the right-hand side is greater than the sum-capacity of the MAC, $\max_{P_{X_1} P_{X_2}} I(X_1, X_2; Y)$, which restricts the input to be a product distribution due to a lack of cooperation between transmitters. Hence, a naive application of the standard technique for point-to-point channels does not even yield the correct first-order term in the converse.
Whether it is possible to derive a converse for the average-error capacity with a second-order term matching the ones in \cite{huang2012finite, jazi2012simpler, scarlett2015constantcompMAC, molavianjazi2015second, tan2014dispersions} remains an open problem.

In the sparse recovery literature, where achievability proofs typically consider the expected error probability evaluated under i.i.d. codebook design (see, e.g.,  \cite{malyutov1978, malyutov1980planning, scarlettPhase2016, atia2012, scarlett2017limits}), converses derive lower bounds on the expected error probability \textit{assuming i.i.d. code design}. Although a lower bound on the expected error probability for our problem could be derived using tools from \cite{scarlettPhase2016}, such a bound would yield a bound for the best i.i.d. random code rather than a bound for all possible codes.

\subsection{A RAC Code That Decodes Transmitter Identity}  
\label{sec:transiden}
While the use of identical encoding at all transmitters has a number of practical advantages, the techniques employed in this work are not limited to that case. 

We next briefly explore the use of distinct encoders at each transmitter of a RAC. Under permutation-invariance \eqref{eq:permutationinvariance} and identical encoding, the decoder cannot distinguish which transmitter sent each of the decoded messages. Maintaining permutation-invariance but replacing identical encoders with a different instance of the same random codebook for each encoder, we get a code that achieves the same first- and second-order terms as in \thmref{thm:ach}, with a decoder that can also associate the corresponding transmitter identity to each decoded message. The following definition formalizes the resulting RAC codes.

%\addtocounter{definition}{-1}
%\renewcommand{\thedefinition}{\arabic{definition}'}%
\begin{definition} \label{def:codeprime}
An $(M, \lbrace (n_k, \epsilon_k) \rbrace_{k = 0}^K )$ identity-preserving code comprises a collection of encoding functions
\begin{align}
\mathsf f_k\colon \, \mathcal{U} \times [M] \to \mathcal X^{n_K}, \quad k = 1, \dots, K,
\end{align}
and a collection of decoding functions
\begin{align}
\mathsf g_k \colon \, \mathcal{U} \times \mathcal Y_k^{n_k} \to  \left\{ [M]^k  \times \binom{[K]}{k} \right\} \cup \{\mathsf
e\}, \,\,k = 0, 1, \ldots, K,
\end{align}
where erasure symbol $\mathsf{e}$ is the decoder's output when the decoder is not ready to decode.
At the start of each epoch, a random variable $U \in \mathcal{U}$, with $U \sim P_U$, is generated independently of the transmitter activity, and revealed to the transmitters and the receiver for use in initializing the encoders and the decoder. If the set of active transmitters $\mathcal{A} \subseteq [K]$ satisfies $|\mathcal{A}| = k > 0$, i.e., $k$ transmitters are active, then the messages of $\mathcal{A}$ and their corresponding transmitter identities are decoded correctly at time $n_k$, with probability at least $1-\epsilon_k$, i.e.,
\begin{align}
\frac{1}{M^k}&\sum_{w_{\mathcal{A}} \in [M]^k} \mathbb{P}\Bigg[\cB{\mathsf
g_k(U, Y_k^{n_k})\neq (w_{\mathcal{A}}, \mathcal{A})}\bigcup \notag\\
& \left.\cB{\bigcup_{t = 0}^{k-1}\cB{\mathsf{g}_t(U, Y_k^{n_t}) \neq \sfe}} \middle | \right. W_{\mathcal{A}} = w_{\mathcal{A}} \Bigg] \leq
\epsilon_k,
\end{align}
where $W_{\mathcal{A}}$ are the independent and equiprobable messages of the transmitters in $\mathcal{A}$, and the given probability is calculated using the conditional distribution $P_{Y_k^{n_k} | X_{\mathcal A}^{n_k}} = P_{Y_k | X_{\mathcal{A}}}^{n_k}$ where $X_i^{n_k} = \mathsf f_i(U, W_i)^{n_k}$, $i \in \mathcal{A} $. If $\mathcal{A} = \emptyset$, then the probability that at time $n_0$ the receiver decodes to the unique message in set $[M]^0 = \{0\}$ is no smaller than $1-\epsilon_0$. That is,
\begin{align}
\Prob{\mathsf{g}_0(U, Y_0^{n_0}) \neq \es | W_{[0]} = \es} \leq
\epsilon_0.
\end{align}
\end{definition} 
%\addtocounter{definition}{0}
%\renewcommand{\thedefinition}{\arabic{definition}}%

If we continue to assume 
permutation-invariance \eqref{eq:permutationinvariance} and to employ
the same input distribution $P_X$ at all encoders, then
the channel output statistics again depend on
the dimension of the channel input
but not on the identity of the active transmitters. In this case, we can apply the proof from the identical-encoding single-threshold-decoding argument in \secref{sec:codedesign} to derive an achievability result for the general case.\footnote{This simple argument was suggested by Dr. Jonathan Scarlett.} In particular, consider a code with $K M$ (rather than $M$) messages. Replacing $M$ by $KM$ in \thmref{thm:ach} implies that our RAC code with identical encoders gives a penalty of $-k \log K$ on the right-hand side of the rate bound \eqref{eq:mainresult}. Suppose that we use this identical-encoding code to design a general code in which codewords indexed from $(t-1)M + 1$ to $t M$ are used exclusively by transmitter $t$ for $t = 1, \dots, K$. Since each message belongs to a single transmitter, the list of decoded messages reveals the identities of the active transmitters. Under this allocation of codewords, the repetition error $\mathbb{P}_{\mathrm{rep}}$ in \eqref{eq:Prep} disappears since transmitters send messages from distinct sets. The error probability from decoding the wrong codeword values decreases since there are fewer legitimate codeword combinations to consider. Therefore, in the case where $K$ is a finite constant and the receiver decodes both messages and transmitter identities, the first three terms in \eqref{eq:mainresult} are preserved, and the penalty $-k \log K$ only affects the constant term $O(1)$ in \eqref{eq:mainresult}.

When applied to a scenario with $M = 1$ and identity decoding, the bound in \thmref{thm:nonasymp}, modified as described in the preceding paragraph, extends the non-asymptotic achievability bound in the group testing problem \cite[Th. 4]{scarlettPhase2016} to the scenario where an unknown number $k$ out of a total of $K$ items are defective. 
%Denote the set of length-$s$ message vectors $\tcW'_{[s]} \triangleq \{\tw_{[s]}\in[K]^s \colon \tw_1  > k, \tw_i < \tw_j \,\,\, \forall i < j\}$, which is similar to \eqref{eq:Ws} except for replacing $[M]$ in \eqref{eq:Ws} by $[K]$. 
In the scenario considered in \cite{scarlettPhase2016}, the number of defective items $k$ is known, and our MAC bound 
% Under the condition that the decoder knows the number of active transmitters $k$, i.e., $t = k$, the bound in
\eqref{eq:MACbound} with $K$ replaced by $k$, $M$ replaced by $KM = K$, and the term $\frac{K(K-1)}{2M}$ removed applies.
% \begin{align}
%     \epsilon_k &\leq \bbP[\imath_k(X_{[k]}^{n_k}; Y_k^{n_k}) \leq \log \gamma_k]  + \sum_{s=1}^k\binom{k}{k-s} \notag \\
%   &\,\,\Prob{\cup_{\tw_{[s]}\in\tcW'_{[s]}}\{\imath_k(\bar{X}_{[s]}^{n_k}(\tw_{[s]}),X_{[s+1:k]}^{n_k};Y_k^{n_k}) 
% 	>\log\gamma_k \} },  \label{eq:reducednonasymp}
% \end{align}
The resulting bound is similar to \cite[Th. 4]{scarlettPhase2016}. The difference is that the bound in \eqref{eq:MACbound} uses a single information density threshold rule, while \cite[Th. 4]{scarlettPhase2016} uses $2^{k}-1$ simultaneous information density threshold rules.

\subsection{Per-user Probability of Error} \label{sec:PUPE}
We extend the definition of the PUPE from \cite[Def.~1]{polyanskiy2017perspective} to the RAC with $k \in [K]$ active transmitters as
\begin{align}
    e_k \triangleq \frac{1}{M^k} \sum_{w_{[k]} \in [M]^k} \sum_{i = 1}^k \frac{1}{k} \Prob{w_i \notin \mathsf{g}_T(U, Y_k^{n_T}) | W_{[k]} = w_{[k]}}, \label{eq:PUPE}
\end{align}
where $Y_k^{n_T}$ is the received output at time $n_T$, and
\begin{align}
    T \triangleq \min \{t \in \{0\} \cup [K] \colon \mathsf{g}_t(U, Y_k^{n_t}) \neq \mathsf{e} \}
\end{align}
is the random variable describing the decoder's estimate of the number of active transmitters.\footnote{Note that the joint error probability in \eqref{eq:errorcriterion} can likewise be written as
\begin{align*}
    \frac{1}{M^k} \sum_{w_{[k]} \in [M]^k} \Prob{\mathsf{g}_T(U, Y_k^{n_T}) \stackrel{\pi}{\neq} w_{[k]} \middle | W_{[k]} = w_{[k]}}.
\end{align*}}
We set $T = K$ if $\mathsf{g}_t(U, Y_k^{n_t}) = \mathsf{e}$ for all $t \in \{0\} \cup [K]$. For $k = 0$, we define $e_0 \triangleq \Prob{\mathsf{g}_0(U, Y_0^{n_0}) \neq \es | W_{[0]} = \es}$ as in \eqref{eq:probzeroerror}.

For a RAC with a total of $K$ transmitters and a MAC with $K$ transmitters, the following corollary to \thmref{thm:nonasymp} gives non-asymptotic achievability bounds under the PUPE criterion \eqref{eq:PUPE}. 
\begin{corollary}\label{cor:pupe}
Fix constants $\gamma_0$, $\lambda_{s, t}^k \geq 0$, and $\gamma_t > 0$ for all $1 \leq s \leq t \leq k$. For any $k$ and $n$, let $(X_{[k]}^n, \bar{X}_{[k]}^n, Y_{k}^n) $ be a random sequence drawn i.i.d. $\sim P_{X_{[k]} \bar{X}_{[k]} Y_{k}}(x_{[k]}, \bar{x}_{[k]}, y_{k}) = \left( \prod_{i = 1}^k P_X(x_i) P_X(\bar{x}_i) \right) P_{Y_k|X_{[k]}}(y_k | x_{[k]}) $.
\begin{enumerate}[leftmargin=*, label=\Alph*)]
\item For any RAC $\left\{\left(\cX^k, P_{Y_k | X_{[k]}}(y_k|x_{[k]}), \cY_k\right)\right\}_{k=0}^K$ satisfying \eqref{eq:permutationinvariance} and \eqref{eq:reducible}, any $K \leq \infty$, and any fixed input distribution $P_X$, there exists an $(M, \lbrace (n_k, e_k) \rbrace_{k = 0}^K )$ RAC code under the PUPE criterion \eqref{eq:PUPE} such that
\begin{align}
    e_0 \leq \Prob{ h(Y_0^{n_0}) > \gamma_0  },
\end{align}
and for all $k\geq 1$,
		\begin{IEEEeqnarray}{rCl}
        \IEEEyesnumber \label{eq:achPUPE}
        \IEEEyessubnumber*
        e_k &\leq& \bbP[\imath_k(X_{[k]}^{n_k}; Y_k^{n_k}) \leq \log \gamma_k]  \\
        &&+  \Prob{  h(Y_k^{n_0}) \leq \gamma_0  }  + \frac{k(k-1)}{2M} \\
        &&+\sum_{t = 1}^{k - 1} \binom{k-1}{t} \bbP[\imath_t(X_{[t]}^{n_t}; Y_k^{n_t}) > \log \gamma_t ]  \label{eq:PUPEeqfirst} \\
        &&+\sum_{t = 1}^k \sum_{s = 1}^{t-1} \binom{k-1}{t-s}  \mathbb{P}\Big[\imath_t(X_{[s+1:t]}^{n_t}; Y_k^{n_t}) \notag \\
        &&\quad > n_t \mathbb{E} [\imath_t(X_{[s+1:t]}; Y_k) ] + \lambda_{s, t}^k \Big]  \\
        &&+ \sum_{t = 1}^k\sum_{s = 1}^t \binom{k-1}{t-s} \binom{M-k}{s}  \notag \\
        && \quad \mathbb{P} \Big[\imath_t( \bar{X}_{[s]}^{n_t} ; Y_k^{n_t} | X_{[s+1: t]}^{n_t})  \notag \\
        &&\quad  > \log \gamma_t - n_t \mathbb{E} [\imath_t(X_{[s+1:t]}; Y_k)]  - \lambda_{s, t}^k \Big]. \label{eq:PUPEeqlast} \IEEEeqnarraynumspace
		\end{IEEEeqnarray}
\item For a MAC with $K$ transmitters satisfying \eqref{eq:permutationinvariance}, there exists a MAC code for $M$ messages and decoding blocklength $n_K$ such that
\begin{IEEEeqnarray}{rCl}
    e_K &\leq& \bbP[\imath_K(X_{[K]}^{n_K}; Y_K^{n_K}) \leq \log \gamma_K] + \frac{K(K-1)}{2M} \notag \\
        &&\,\,+ \sum_{s = 1}^{K-1} \binom{K-1}{K-s} \mathbb{P}\Big[\imath_K(X_{[s+1:K]}^{n_K}; Y_K^{n_K}) \notag \\
        &&\quad \quad > n_K \mathbb{E} [\imath_K(X_{[s+1:K]}; Y_K) ] + \lambda_{s, K}^K \Big]  \notag  \\
        &&\,\,+ \sum_{s = 1}^K \binom{K-1}{K-s} \binom{M-K}{s}  \notag \\
        &&\quad \mathbb{P} \Big[\imath_K( \bar{X}_{[s]}^{n_K} ; Y_K^{n_K} | X_{[s+1: K]}^{n_K}) > \log \gamma_K \notag \\
        &&\quad \quad - n_K \mathbb{E} [\imath_K(X_{[s+1:K]}; Y_K) ] - \lambda_{s, K}^K \Big]. \label{eq:MACboundPUPE}
\end{IEEEeqnarray}
\end{enumerate}
\end{corollary}
\begin{IEEEproof} 
Notice that in \eqref{eq:achPUPE}, the only modification from \thmref{thm:nonasymp} is the replacement of the coefficients $\binom{k}{t}$ in \eqref{eq:wrongtime} and $\binom{k}{t-s}$ in \eqref{eq:confususer1}--\eqref{eq:confususer} by the coefficients $\binom{k-1}{t}$ and $\binom{k-1}{t-s}$, respectively. To see how \corref{cor:pupe} is derived from \thmref{thm:nonasymp}, observe that the PUPE \eqref{eq:PUPE} measures the fraction of transmitted messages missing from the list of decoded messages. Therefore, to bound the PUPE for the RAC, we can multiply the error probability bounds in \eqref{eq:nonasymptotic} that correspond to the case where $t$ out of $k$ messages are decoded by $\frac{k-(t-s)}{k}$, where $s$ is the number of messages decoded incorrectly. 

Similarly, under the PUPE, the coefficient $\binom{K}{K-s}$ in the $K$-transmitter MAC bound \eqref{eq:MACbound} is replaced by $\binom{K-1}{K-s}$ in \eqref{eq:MACboundPUPE} since we can multiply the error probability bounds in \eqref{eq:MACbound1}--\eqref{eq:MACbound2}, corresponding to the case where $s$ out of $K$ messages are decoded incorrectly, by $\frac{s}{K}$.
\end{IEEEproof}

From the proof of \thmref{thm:ach}, the error probability bounds in \eqref{eq:PUPEeqfirst}--\eqref{eq:PUPEeqlast} behave as $O\left(\frac{1}{\sqrt{n_k}}\right)$. This implies that under the PUPE criterion \eqref{eq:PUPE}, our encoding and decoding scheme described in \secref{sec:codedesign} achieves the same first three order terms as \thmref{thm:ach}. Only the constant $O(1)$ term in \eqref{eq:mainresult} is affected by the change from the joint error probability to the PUPE.

% First, we bound the probability
% \begin{align}
%     &\Prob{\{w_i \notin \mathsf{g}_T(U, Y_k^{n_T}) \} \cap \{T > k\} \mid W_{[k]}} \notag \\
%     &\quad \leq  \Prob{ \{T > k\} \mid W_{[k]} = w_{[k]}} \\
%     &\quad \leq \Prob{ \mathsf{g}_k(U, Y_k^{n_k}) = \mathsf{e} \mid W_{[k]} = w_{[k]}}, \label{eq:Tgreaterk}
% \end{align}
% where \eqref{eq:Tgreaterk} follows since $\{T > k\} \subseteq \{\mathsf{g}_k(U, Y_k^{n_k}) = \mathsf{e} \}$.

% Recall that \eqref{eq:wrongtime} bounds the probability that $T = t$ and $t$ messages are decoded correctly, and \eqref{eq:confususer1}--\eqref{eq:confususer} bound the probability that $T = t$ and $t-s$ messages are decoded correctly for $1 \leq s \leq t$. Comparing PUPE \eqref{eq:PUPE} and the joint error probability \eqref{eq:errorcriterion}, we can bound PUPE by multiplying the probability that $t \leq k$ messages are decoded, where $t-s$ out of $t$  messages are decoded correctly, by $\frac{k-(t-s)}{k}$, since PUPE measures the fraction of wrongly decoded messages.  
% From the proof of \thmref{thm:ach}, the error terms in \eqref{eq:wrongtime}--\eqref{eq:confususer} behaves as $O\left(\frac{1}{\sqrt{n_k}}\right)$, which implies that for the PUPE criterion \eqref{eq:PUPE} that \thmref{thm:ach} holds, and only the constant term $O(1)$ in \eqref{eq:mainresult} is affected by the change from the joint error probability to PUPE.

The PUPE criterion becomes critical in applications of the Gaussian RAC with $K \to \infty$, where the energy per bit ($\frac{nP}{2 \log_2 M}$) and the number of bits sent by each transmitter ($\log_2 M$) are fixed as the blocklength $n$ grows, and all $K$ transmitters are active. In \cite{polyanskiy2017perspective}, Polyanskiy shows that in this regime, the joint error probability goes to 1 as $K \to \infty$. As we saw in \eqref{eq:MACboundPUPE}, the
PUPE introduces scaling factors $\frac{s}{K}$ in front of the error terms corresponding to $s$ out of $K$ messages decoded incorrectly, for $s = 1, \dots, K$. In the regime $K \to \infty$, the number of these terms is infinite, and the PUPE can be strictly less than $1$ even as the joint error probability approaches 1. In \cite{polyanskiy2017perspective}, Polyanskiy shows that the PUPE behaves nontrivially in this regime.

\section{Tests for No Active Transmitters} \label{sec:analysis_0test}
In this section, we give an analysis of the error probabilities of the composite binary hypothesis test that we use to decide between $H_0$: ``no active transmitters," and $H_1$: ``$k \in [K]$ active transmitters;" that is 
\begin{align}
&H_0: Y^{n_0} \sim P_{Y_0}^{n_0}  \notag \\
&H_1: Y^{n_0} \sim P_{Y_k}^{n_0} \mbox{ for some } 1 \leq k \leq K. \label{eq:originalht}
\end{align}
In the context of \thmref{thm:nonasymp}, the maximal number of transmitters, $K$, can be infinite. In that case, enumerating all alternative possibilities as in \eqref{eq:originalht} becomes infeasible, and a universal (goodness-of-fit) test
\begin{align}
&H_0\colon Y^{n} \sim P_{Y_0}^{n} \notag \\
&H_1\colon Y^{n} \nsim P_{Y_0}^{n}
\end{align}
is appropriate.

 Following \cite{zeitouni1991}\label{def:dr}, a \emph{test statistic} $h_n \colon \mathcal{Y}^n \mapsto \mathbb{R}$ is a function that maps the observed sequence $y^n$ to a real number used to measure the correspondence between that sequence and the null hypothesis. A (randomized) test corresponding to the test statistic $h_n$ is a binary random variable that depends only on $h_n(Y^n)$. The test is deterministic if it outputs $H_0$ if $h_n(y^n) \leq \gamma_0$ for some constant $\gamma_0$, and $H_1$ otherwise.
 
Type-I and type-II errors corresponding to a deterministic test with the statistic $h_n$ are defined as
\begin{align}
\alpha(h_n) &\triangleq P_{Y_0}[h_n(Y^n) > \gamma_0] \\
\beta(h_n) &\triangleq Q[h_n(Y^n) \leq \gamma_0], 
\end{align}
where $Q$ is the unknown alternative distribution of $Y$, and $\gamma_0$ is a constant determined by the desired error criterion. Throughout the following discussion and in our application of these results in \lemref{lem:test}, we employ deterministic tests. For these deterministic tests, we choose $\gamma_0$ to ensure that we meet the zero-transmitter error bound $\alpha(h_n) \leq \epsilon_0$, and then we show that $\beta(h_n)$ decays exponentially with $n$ for each $Q$ in $\{P_{Y_1}, \dots, P_{Y_K}\}$ to ensure \eqref{eq:n0logn} in \thmref{thm:ach}. 

In Sections A and B, below, we consider Hoeffding's test and the Kolmogorov-Smirnov test as possible hypothesis tests for recognizing the zero-transmitter scenario. Both tests are universal in the sense that the test statistic does not vary with the alternative output distributions $P_{Y_1}, \dots, P_{Y_K}$. They both give an exponentially decaying type-II error for a fixed type-I error $\epsilon_0 \in (0, 1)$. The disadvantage of Hoeffding's test is that its traditional form requires the channel output alphabet to be finite for every $k$ (as in the adder-erasure RAC in \eqref{eq:addererasure}); the advantage of Hoeffding's test is that it achieves the same exponent as the Neyman-Pearson Lemma, which is optimal for a given collection of output distributions $P_{Y_1}, \dots, P_{Y_K}$, but is not universal, meaning that a different test statistic is necessary for each collection $\left\{P_{Y_k} \colon k \in [K]\right\}$. In contrast to Hoeffding's test, the Kolmogorov-Smirnov test does not require $\mathcal{Y}$ to be finite; however, when applied to a setting with finite $\mathcal{Y}$, it achieves a type-II error exponent that is inferior to that achieved by Hoeffding's test.
%The advantage of the Kolmogorov-Smirnov test over the Zeitouni-Gutman test is that the Kolmogorov-Smirnov test is easier to implement and analyze. On the other hand, Zeitouni-Gutman test can perform better in terms of type-II error exponents than the Kolmogorov-Smirnov test under some null and alternative distributions. 
In Section \ref{sec:LRTT}, we compare the performances of these universal test statistics to that of the log-likelihood ratio (LLR) threshold test, which is third-order optimal in terms of the type-II error exponent for composite hypothesis testing \cite{huang2014strong} and relies explicitly on alternative output distributions $P_{Y_1}, \dots, P_{Y_K}$.

\subsection{Hoeffding's Test}\label{sec:htest}
Denote the empirical distribution of an observed sequence $y_1, \dots, y_n$ by
\begin{align}
\hat{P}_{y^n}(a) \triangleq \frac 1 n \sum_{i = 1}^n 1\{y_i = a\}  \quad \forall \, a \in \mathcal{Y}.
\end{align}

Hoeffding's test is based on the relative entropy, denoted by $D(\cdot \| \cdot)$, between $\hat{P}_{y^n}$ and $P_{Y_0}$, giving the test statistic
\begin{align}
h_n^{H}(y^n) = D(\hat{P}_{y^n} \| P_{Y_0}). \label{eq:hoefftest}
\end{align}
Note that if $P_{Y_0}$ is a continuous distribution, $h_n^H(y^n) = +\infty$.

\begin{theorem}[Hoeffding's test\cite{hoeffding1965}]
	Let $\mathcal{Y}$ be a finite set, and let $Q$ be an unknown alternative distribution for $Y_0$. If $P_{Y_0}$ is absolutely continuous with respect to $Q$, and $P_{Y_0} \neq Q$, then the type-I and type-II errors of Hoeffding's test satisfy
	\begin{align}
	\alpha(h_n^H) &\leq \exp\{ -n \gamma_0 + O(\log n)\}  \\
	\beta(h_n^H) &\leq \exp\left\{ - n \inf_{P: D(P \| P_{Y_0}) < \gamma_0} D(P \| Q) + O(\log n) \right\}.
	\end{align}
\end{theorem}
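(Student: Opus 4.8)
The plan is a standard method-of-types argument on the finite alphabet $\mathcal{Y}$. First I would recall two elementary facts. (i) The number of distinct types (empirical distributions $\hat{P}_{y^n}$) realizable by length-$n$ sequences over $\mathcal{Y}$ is at most $(n+1)^{|\mathcal{Y}|} = \exp\{O(\log n)\}$. (ii) For any probability distribution $R$ on $\mathcal{Y}$ and any order-$n$ type $\hat{P}$, the probability that an i.i.d.\ $R^n$-sequence has type $\hat{P}$ is at most $\exp\{-n D(\hat{P}\|R)\}$; this is the usual Stirling bound on the multinomial coefficient. Only these upper bounds are needed here (the matching polynomial lower bound on type probabilities is not).

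For the type-I error, I would partition the rejection region $\{y^n : D(\hat{P}_{y^n}\|P_{Y_0}) > \gamma_0\}$ into type classes, write $\alpha(h_n^H)$ as a sum of $P_{Y_0}^n$-probabilities of these classes, bound each class by $\exp\{-n D(\hat{P}\|P_{Y_0})\} \le \exp\{-n\gamma_0\}$ using fact (ii) together with the defining inequality of the rejection region, and multiply by the number of types from fact (i). This yields $\alpha(h_n^H) \le (n+1)^{|\mathcal{Y}|}\exp\{-n\gamma_0\} = \exp\{-n\gamma_0 + O(\log n)\}$.

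For the type-II error, I would similarly partition the acceptance region $\{y^n : D(\hat{P}_{y^n}\|P_{Y_0}) \le \gamma_0\}$ into type classes and sum their $Q^n$-probabilities. Every type carrying positive $Q^n$-mass is supported inside $\mathrm{supp}(Q)$, so $D(\hat{P}\|Q) < \infty$, and fact (ii) bounds each class by $\exp\{-n D(\hat{P}\|Q)\}$, hence by $\exp\{-n\inf_{P : D(P\|P_{Y_0}) \le \gamma_0} D(P\|Q)\}$. Multiplying by the type count gives the claimed bound, but with the closed constraint set $D(P\|P_{Y_0}) \le \gamma_0$ in the exponent. To recover the open constraint $D(P\|P_{Y_0}) < \gamma_0$ appearing in the statement, I would invoke continuity of $P \mapsto D(P\|Q)$ and $P \mapsto D(P\|P_{Y_0})$ on the simplex, so that the infimum of $D(\cdot\|Q)$ over $\{P : D(P\|P_{Y_0}) < \gamma_0\}$ equals its infimum over the closure of that set, which contains $\{P : D(P\|P_{Y_0}) \le \gamma_0\}$ for $\gamma_0$ in the relevant range; the hypothesis $P_{Y_0}\ll Q$ ensures this infimum is finite (it is at most $D(P_{Y_0}\|Q) < \infty$, since $P_{Y_0}$ itself lies in the set once $\gamma_0 > 0$), while $P_{Y_0}\neq Q$ makes it strictly positive whenever $\gamma_0$ is small enough to exclude $Q$ from the acceptance region.

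The argument is essentially routine; the only delicate points are the open-versus-closed constraint set in the type-II exponent --- handled by the continuity/closure remark above --- and checking that the absolute-continuity and non-equality hypotheses are precisely what make the type-II exponent finite and (for a suitable range of $\gamma_0$) positive, so that the bound is not vacuous. I do not expect any serious obstacle beyond bookkeeping.
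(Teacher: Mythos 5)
Your proposal is correct, but note that the paper does not prove this theorem at all: it is stated as a cited classical result (Hoeffding 1965), with a remark that the weaker absolute-continuity hypothesis is justified by the proofs in Zeitouni--Gutman and Csisz\'ar--Shields, ``which both use Sanov's theorem.'' Your direct method-of-types argument is the standard self-contained route and, for a finite alphabet, is essentially the same machinery that underlies Sanov's theorem, so the two approaches are equivalent in substance; yours has the advantage of making the $O(\log n)$ polynomial prefactor (the type count $(n+1)^{|\mathcal{Y}|}$) explicit rather than hidden inside a large-deviations citation. The one step worth tightening is the passage from the closed constraint $D(P\|P_{Y_0})\le\gamma_0$ (which your type-class bound naturally produces, and which gives an a priori \emph{weaker} exponent) to the open constraint $D(P\|P_{Y_0})<\gamma_0$ in the statement: $P\mapsto D(P\|Q)$ is not continuous on the whole simplex when $Q$ has zeros, so ``continuity on the simplex'' is not quite the right justification. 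The clean fix is convexity: for any $P$ with $D(P\|P_{Y_0})\le\gamma_0$ and $D(P\|Q)<\infty$, the segment $P_t=(1-t)P_{Y_0}+tP$ satisfies $D(P_t\|P_{Y_0})\le t\gamma_0<\gamma_0$ for $t<1$ and $D(P_t\|Q)\le(1-t)D(P_{Y_0}\|Q)+tD(P\|Q)\to D(P\|Q)$, where $D(P_{Y_0}\|Q)<\infty$ is exactly what the absolute-continuity hypothesis provides; this shows the two infima coincide. With that repair your argument is complete.
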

In \cite{hoeffding1965}, a more restrictive assumption ($P_{Y_0}(y) > 0$ and $Q(y) > 0$ for all $y \in \mathcal{Y}$) is used. Absolute continuity is sufficient according to the proofs given in \cite{zeitouni1991} and \cite[Th.~2.3]{csiszar2004}, which both rely on Sanov's theorem.
The error exponents of Hoeffding's test coincide with the exponents of the optimal (Neyman-Pearson Lemma) binary hypothesis test. Therefore, Hoeffding's test is asymptotically universally most powerful.

Setting $\gamma_0 = \frac{|\mathcal{Y}| \log n}{n}$ achieves type-I error $\epsilon_0 \to 0$ as $n \to \infty$; therefore, the type-I error condition is satisfied for any $\epsilon_0 > 0$ and sufficiently large $n$. Under this choice, type-II error $\exp\{ -n D(P_{Y_0} \| Q) + o(n) \}$ is achieved (see \cite[Th.~2.3]{csiszar2004}). Therefore, in \eqref{eq:Cprime}, the maximum type-II error decays with exponent 
\begin{align}
C' &= \inf_{k \in [K]} D(P_{Y_0} \| P_{Y_k}) \label{eq:Hmaxbk} \\
& \geq 2 \inf_{k \in [K]} \Bigg\{\left( \sup_{x \in \mathbb{R}}  | F_k(x) - F_0(x) | \right)^2 \notag \\
&\quad + \frac 4 9 \left( \sup_{x \in \mathbb{R}}  | F_k(x) - F_0(x) | \right)^4 \Bigg\} \label{eq:gibbsused}\\
&\geq 2 \delta_0^2 + \frac 4 9 \delta_0^4. \label{eq:DKSrelation}
\end{align}
 The inequality in \eqref{eq:gibbsused} is due to \cite[eq. (5)-(6)]{gibbs2002onchoosing} and Pinsker's inequality \cite{kullback1967lower}. The inequality in \eqref{eq:DKSrelation} follows from \eqref{eq:assump:Pyk}.

In \cite{zeitouni1991}, Zeitouni and Gutman extend Hoeffding's test to continuous distributions. Their test, which also uses the empirical distribution, employs ``$\delta$-smoothing" of the decision regions obtained by a relative entropy comparison. The Zeitouni-Gutman test is optimal under a slightly weaker optimality criterion than the standard first-order type-II error exponent criterion. Using \cite[Th.~2]{zeitouni1991}, it can be shown that the Zeitouni-Gutman test also yields the desired exponentially decaying maximum type-II error.

\subsection{Kolmogorov-Smirnov Test}\label{sec:kstest}

The Kolmogorov-Smirnov test \cite{kolmogorov1933, smirnov1944} relies on the empirical CDF 
\begin{align}
\hat{F}^{(n)}(x|y^n) \triangleq \frac{1}{n} \sum_{i = 1}^n 1\{y_i \leq x\} \quad \forall\, x\in \mathbb{R} \label{eq:Fhatn}
\end{align}
of the observed sequence $y_1, \dots, y_n \in \mathbb{R}$. The Kolmogorov-Smirnov test uses a deterministic test
\begin{align}
h^{KS}_n(y^n) =  \sup_{x \in \mathbb{R}} | \hat{F}^{(n)}(x|y^n) - F_0(x) | \label{eq:ks_test}
\end{align}
to test whether the observed sequence $y^n$ is well-explained by $P_{Y_0}$ with the CDF $F_0$.

The following theorem bounds the probability that the Kolmogorov-Smirnov statistic exceeds a threshold $\gamma_0$.
\begin{theorem}[Dvoretzky-Kiefer-Wolfowitz \cite{dvoretzky1956, massart1990}] \label{thm:DKW}
Let $Y_1, \dots, Y_n$ be drawn i.i.d. according to an arbitrary distribution $P_{Y_0}$ with the CDF $F_0$ on $\mathbb{R}$. For any $n \in \mathbb{N}$ and $\gamma_0 > 0$, it holds that
\begin{align}
\alpha(h^{KS}_n) \leq 2 \exp\{-2 n \gamma_0^2\}. \label{eq:DKW}
\end{align}
\end{theorem}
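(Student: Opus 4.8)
The plan is to reduce the statement to the canonical case in which $F_0$ is the uniform CDF on $[0,1]$, then to handle the two one-sided deviations separately and combine them by a union bound. \emph{Step~1 (reduction to the uniform distribution).} Suppose first that $F_0$ is continuous. Then $F_0(Y_1),\dots,F_0(Y_n)$ are i.i.d.\ $\mathrm{Unif}[0,1]$, and since $x\mapsto F_0(x)$ is a nondecreasing surjection onto $(0,1)$, the change of variable $x\mapsto F_0(x)$ in \eqref{eq:ks_test} gives
\[
\sup_{x\in\mathbb{R}}\bigl|\hat F^{(n)}(x\,|\,Y^n)-F_0(x)\bigr|\;\stackrel{d}{=}\;\sup_{t\in[0,1]}|G_n(t)-t|,
\]
where $G_n(t)=\tfrac1n\sum_{i=1}^n 1\{U_i\le t\}$ is the empirical CDF of $n$ i.i.d.\ uniforms $U_1,\dots,U_n$. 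For a general, possibly atomic, $F_0$ one replaces each $Y_i$ by $F_0^{-}(U_i)$ (left-continuous generalized inverse), or equivalently adds an independent infinitesimal continuous perturbation and passes to the limit; in either case the Kolmogorov--Smirnov statistic is stochastically no larger than in the uniform case, so it suffices to prove \eqref{eq:DKW} for $G_n$.

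\emph{Step~2 (splitting into one-sided deviations).} Put $D_n^+=\sup_{t\in[0,1]}(G_n(t)-t)$ and $D_n^-=\sup_{t\in[0,1]}(t-G_n(t))$, so that the event defining $\alpha(h^{KS}_n)$ is contained in $\{D_n^+>\gamma_0\}\cup\{D_n^->\gamma_0\}$. The reflection $U_i\mapsto 1-U_i$ maps $D_n^+$ to $D_n^-$ in distribution, so these two events are equiprobable and a union bound gives $\alpha(h^{KS}_n)\le 2\,\Prob{D_n^+>\gamma_0}$. \emph{Step~3 (the sharp one-sided bound).} It remains to show $\Prob{D_n^+>\gamma_0}\le e^{-2n\gamma_0^2}$. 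Since $G_n$ is right-continuous and jumps by $1/n$ at the order statistics $U_{(1)}\le\cdots\le U_{(n)}$ (distinct a.s.), one has $D_n^+=\max_{1\le k\le n}\bigl(k/n-U_{(k)}\bigr)$, so $\Prob{D_n^+>\gamma_0}$ is the probability that $U_{(k)}<k/n-\gamma_0$ for some $k$, and only indices with $k>n\gamma_0$ contribute. This is precisely Massart's sharp one-sided inequality \cite{massart1990}, which can be derived either from Smirnov's exact formula for the law of $D_n^+$ together with a monotonicity/optimization estimate, or by a reflection-principle argument on the associated lattice path. Substituting into Step~2 yields \eqref{eq:DKW}.

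\emph{Main obstacle.} The only delicate ingredient is Step~3 with the \emph{optimal} constants (prefactor $2$ and exponent $2n\gamma_0^2$). The original argument of Dvoretzky, Kiefer and Wolfowitz \cite{dvoretzky1956} already gives an exponential bound $C\,e^{-2n\gamma_0^2}$ with an unspecified $C$, and an elementary estimate---covering $[0,1]$ by a fine grid, using $nG_n(t_j)\sim\mathrm{Bin}(n,t_j)$ with Hoeffding's inequality, and monotonicity of $G_n$ to control the gaps---recovers the exponent $2n\gamma_0^2$ up to lower-order corrections; either version would already suffice for \eqref{eq:n0logn} in \thmref{thm:ach}. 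Pinning down the constant $2$ exactly is exactly the content of \cite{massart1990}, and that is the step we invoke rather than reprove.
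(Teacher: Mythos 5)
The paper does not actually prove this theorem---it is quoted as a classical result, with the exponential bound attributed to \cite{dvoretzky1956} and the sharp constant $C=2$ to \cite{massart1990}---and your outline (reduction to the uniform case, symmetric one-sided split plus union bound, then Massart's one-sided inequality for the hard step) is the standard route to exactly that citation, so the two are essentially the same. One small caveat: Massart's one-sided bound $\Prob{D_n^+>\gamma_0}\le e^{-2n\gamma_0^2}$ is only valid when $e^{-2n\gamma_0^2}\le \tfrac12$, but the two-sided conclusion \eqref{eq:DKW} is trivially true otherwise, so your argument still closes.
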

In \cite{dvoretzky1956}, Dvoretzky \textit{et al.} prove \thmref{thm:DKW} with an unspecified multiplicative constant $C$ in front of the exponential on the right side of \eqref{eq:DKW}. In \cite{massart1990}, Massart establishes that $C = 2$. 
% \ver{}{
% Although the Kolmogorov-Smirnov test was originally designed for testing continuous distributions, the inequality \eqref{eq:DKW} remains valid for  distributions with discontinuities \cite[Th.~3]{shorack1986empirical}, which makes it possible to use the Kolmogorov-Smirnov test for distributions with point masses.
% }

In our operational regime of interest, we set the type-I error to a given constant $\epsilon_0$, which by \thmref{thm:DKW} corresponds to setting the threshold $\gamma_0$ to
\begin{align}
\gamma_0 = \sqrt{\frac{\log \frac{2}{\epsilon_0}}{2 n}} = O\left(\frac 1 {\sqrt{n}}\right). \label{eq:lambdaset}
\end{align}
We next bound the type-II errors for every $k \in [K]$. For each $k \in \{0, \dots, K\}$, let $F_k$ denote the CDF of $P_{Y_k}$. 
The type-II error when $k \geq 1$ transmitters are active is bounded as
\begin{align}
\beta_k(h_n^{KS}) &= \Prob{\sup_{x \in \mathbb{R}} | \hat{F}^{(n)}(x|Y_k^n) - F_0(x) | \leq \gamma_0} \\
&\leq \mathbb{P}\bigg[ \sup_{x \in \mathbb{R}} \Big(| F_k(x) - F_0(x) | \notag \\
&\quad - | \hat{F}^{(n)}(x|Y_k^n) - F_k(x) | \Big) \leq \gamma_0 \bigg] \label{eq:triangleineq}\\
&\leq \mathbb{P}\bigg[ \sup_{x \in \mathbb{R}} | \hat{F}^{(n)}(x|Y_k^n) - F_k(x) | \notag \\
&\quad \geq  \sup_{x \in \mathbb{R}}  | F_k(x) - F_0(x) |  - \gamma_0 \bigg]\\
&\leq 2 \exp\bigg\{-2 n \, \left(\sup_{x \in \mathbb{R}}  | F_k(x) - F_0(x) | \right)^2 \notag \\
&\quad + O(\sqrt{n})\bigg\} \label{eq:dkwused},
\end{align}
where \eqref{eq:triangleineq} follows from triangle inequality $|x + y| \geq |x| - |y|$, and \eqref{eq:dkwused} follows from \thmref{thm:DKW} and \eqref{eq:lambdaset}. Applying \eqref{eq:assump:Pyk} to \eqref{eq:dkwused}, we conclude that the maximum type-II error in \eqref{eq:Cprime} decays exponentially with $n$, with exponent 
\begin{align}
C' &= 2 \inf_{k \in [K]} \left( \sup_{x \in \mathbb{R}}  | F_k(x) - F_0(x) | \right)^2 \label{eq:KSactual} \\
&\geq 2 \delta_0^2. \label{eq:KSmaxbk}
\end{align}
Comparing \eqref{eq:KSactual} and \eqref{eq:gibbsused}, from \eqref{eq:assump:Pyk}, we see that the type-II error exponent achieved by the Kolmogorov-Smirnov test is always inferior to that achieved by Hoeffding's test.

\subsection{The Optimal Composite Hypothesis Test} \label{sec:LRTT}

From \eqref{eq:DKSrelation} and \eqref{eq:KSmaxbk}, we know that there exists a positive constant $c_0$ such that 
\begin{align}
n_0 \geq c_0 \log n_1 + o(\log n_1) \label{eq:n0first3}
\end{align}
suffices to meet the error requirements of the composite hypothesis test given in \eqref{eq:type2error} and \eqref{eq:type1error}. Since the proposed tests are universal, \thmref{thm:nonasymp} allows us to decode any message set of $k \leq K$ active transmitters without knowing the total number of transmitters, $K$. In this section, we find the smallest first three terms on the right side of \eqref{eq:n0first3} that we can achieve when $K$ is finite and we allow the composite hypothesis test to depend on the distributions $P_{Y_1}, \dots, P_{Y_K}$. 

Let $\beta_{\epsilon_0}(P_{Y_0}, \{P_{Y_k}\}_{k = 1}^K)$ denote the minimax type-II error among the alternative distributions $P_{Y_1}, \dots, P_{Y_K}$ such that type-I error (under $P_{Y_0}$) does not exceed $\epsilon_0$; that is,
\begin{align}
\beta_{\epsilon_0}(P_{Y_0}, \{P_{Y_k}\}_{k = 1}^K) \triangleq \min_{h_n: \alpha(h_n) \leq \epsilon_0} \max_{ k \in [K] } \beta_k(h_n), \label{eq:minmaxerror}
\end{align}
where the minimum is over all tests including deterministic and randomized tests.

The LLR test statistic $h_n^{\mathrm{LLR}} \colon \, \mathcal{Y}^n \mapsto \mathbb{R}^K$ is given by
\begin{align}
h_n^{\mathrm{LLR}}(y^n) = \sum_{i = 1}^n h_1^{\mathrm{LLR}}(y_i), \label{eq:LLR}
\end{align}
where
\begin{align}
h_1^{\mathrm{LLR}}(y) \triangleq  \begin{bmatrix} 
\log \frac{P_{Y_0}(y)}{P_{Y_1}(y)} \\
\log \frac{P_{Y_0}(y)}{P_{Y_2}(y)} \\
\vdots \\
\log \frac{P_{Y_0}(y)}{P_{Y_K}(y)} 
\end{bmatrix}.
\label{eq:0test2}
\end{align}
Given a threshold vector $\bm{\tau} \in \mathbb{R}^K$, the corresponding LLR test outputs $H_0$ if $h_n^{\mathrm{LLR}}(y^n) \geq \bm{\tau}$, and $H_1$ otherwise. 
%The LLR test achieves the smallest $n_0$ up to the second-order term $o(\log n_1)$ subject to the type-I error probability requirement \eqref{eq:type1error}.
%Let
%\begin{align}
%&\mathcal{E}_{\epsilon_0} (P_{Y_0}, \{P_{Y_j}\}_{j = 1}^K ) \triangleq \Big\{ (e_1, \dots, e_K) : \exists \textnormal { a decision rule } \notag \\
%& \Omega(n) \textnormal{ such that } \notag \\
%&\quad \Prob{Y_0^n \in \Omega_1(n)} \leq \epsilon_0, \notag \\
%& \quad \Prob{Y_j^n \in \Omega_0(n)} = e_j, 1 \leq j \leq K \Big\} \label{def:error_region}
%\end{align}
%denote the set of achievable type-II error vectors for any composite hypothesis test in the form of \eqref{eq:originalht} with a type-I error not exceeding $\epsilon_0 \in (0, 1)$,
%where $Y_k^n \sim P_{Y_k}^n$ for all $k \in \{0, \dots, K\}$.

The gap in the type-II error exponent ($C'$ in \eqref{eq:Cprime}) between the general optimal tests and the LLR tests with the optimal threshold vector $\bm{\tau}$ is $O\left(\frac{1}{n} \right)$ \cite{huang2014strong}; therefore, we only consider minimizing over the LLR tests in \eqref{eq:minmaxerror} for asymptotic optimality.

Denote by $\mathbf{D}$ and $\mathsf{V}$ the mean and covariance matrix of the random vector $h_1^{\mathrm{LLR}}(Y_0)$, respectively. Define 
\begin{align}
D_{\min} &\triangleq \min_{k \in [K]} D(P_{Y_0} \| P_{Y_k}) \label{eq:Dmin} \\
\mathcal{I}_{\min} &\triangleq \{k \in [K] \colon D(P_{Y_0} \| P_{Y_k}) = D_{\min}\} \label{eq:Imin}\\
\mathsf{V}_{\min} &\triangleq \textnormal{Cov}\left[\left(h_1^{\mathrm{LLR}}(Y_0)\right)_{\mathcal{I}_{\min}}\right] \in \mathbb{R}^{ |\mathcal{I}_{\min}| \times  |\mathcal{I}_{\min}|}.
\end{align}
The following theorem gives the asymptotics of the minimax type-II error defined in \eqref{eq:minmaxerror}. 
\begin{theorem} \label{thm:huangmoulin}
Assume that $P_{Y_0}$ is absolutely continuous with respect to $P_{Y_k}$, $0  < D(P_{Y_0}\|P_{Y_k}) < \infty $ for $k = 1, \dots, K$, $\mathsf{V}$ is positive definite,  and $T =  \mathbb{E}[ \lVert h_1^{\mathrm{LLR}}(Y_0) - \mathbf{D}\rVert_2^3 ] < \infty$. Then for any $\epsilon_0 \in (0, 1)$, the asymptotic minimax type-II error satisfies
\begin{align}
\beta_{\epsilon_0}(P_{Y_0}, \{P_{Y_k}\}_{k = 1}^K) 
&= \exp\Big\{-nD_{\min} + \sqrt{n} b \notag \\
&\quad -\frac 1 2 \log n + O(1) \Big\}, \label{eq:emin}
\end{align}
where $b$ is the solution to
\begin{align}
\Prob{\mathbf{Z} \leq b\boldsymbol{1}} = 1 - \epsilon_0, \label{eq:bsolution}
\end{align}
for $\mathbf{Z} \sim \mathcal{N}(\boldsymbol{0}, \mathsf{V}_{\min}) \in \mathbb{R}^{| \mathcal{I}_{\min} |}$. Moreover, the minimax error in \eqref{eq:emin} is achieved by a LLR test with some threshold vector $\bm{\tau}$.
\end{theorem}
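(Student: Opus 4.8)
The plan is to prove \eqref{eq:emin} in both directions --- achievability (some LLR test at level $\epsilon_0$ attains the right-hand side) and a matching converse (no randomized test at level $\epsilon_0$ does better) --- by specializing the composite-hypothesis-testing analysis of \cite{huang2014strong} to the case of finitely many alternatives. Two reductions come first. First, by the discussion preceding the theorem the optimal composite test is matched, up to an additive $O(1)$ in $\log\beta$, by an LLR test $\{h_n^{\mathrm{LLR}}(y^n)\geq\bm\tau\}$ \cite{huang2014strong}, so it suffices to optimize over threshold vectors $\bm\tau\in\mathbb R^K$. Second, only the alternatives in $\mathcal I_{\min}$ matter: writing $S_n^{(k)}\triangleq\sum_{i\leq n}\log\tfrac{P_{Y_0}(y_i)}{P_{Y_k}(y_i)}$ and $D_k\triangleq D(P_{Y_0}\|P_{Y_k})$, for $k\notin\mathcal I_{\min}$ we have $D_k>D_{\min}$, so we may set $\tau_k=na_k$ with $a_k<D_k$ chosen so that the Chernoff exponent of $\{S_n^{(k)}\geq\tau_k\}$ under $P_{Y_k}$ strictly exceeds $D_{\min}$; then $\beta_k$ is exponentially smaller than the target and the $k$-th contribution to the type-I error decays exponentially, so those indices are irrelevant. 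Hence I may assume $\mathcal I_{\min}=[K]$, $D_k=D_{\min}=:D$ for all $k$, and $\mathsf V=\mathsf V_{\min}$.

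For achievability I would take $\tau_k=nD-\sqrt n\,b$ for every $k$, with an $O(1)$ correction fixed as follows. The type-I error is $P_{Y_0}[\exists k:S_n^{(k)}<nD-\sqrt n\,b]=P_{Y_0}[\min_k(S_n^{(k)}-nD)/\sqrt n<-b]$, which by a multivariate Berry--Ess\'een bound for orthant probabilities equals $\Prob{\min_k Z_k<-b}+O(n^{-1/2})$ with $\mathbf Z\sim\mathcal N(\boldsymbol 0,\mathsf V)$; by Gaussian symmetry $\Prob{\min_k Z_k<-b}=1-\Prob{\mathbf Z\geq-b\boldsymbol 1}=1-\Prob{\mathbf Z\leq b\boldsymbol 1}=\epsilon_0$, so lowering $b$ by $O(n^{-1/2})$ (equivalently shifting each $\tau_k$ by $O(1)$) makes the type-I error at most $\epsilon_0$ while altering $\sqrt n\,b$ by only $O(1)$. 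For the type-II error, $\beta_k=P_{Y_k}[S_n^{(j)}\geq\tau_j\ \forall j]\leq P_{Y_k}[S_n^{(k)}\geq\tau_k]$; since $S_n^{(k)}$ is an i.i.d.\ sum whose cumulant generating function $\Lambda_k(\lambda)=\log\mathbb E_{P_{Y_k}}[(P_{Y_0}/P_{Y_k})^\lambda]$ obeys $\Lambda_k(1)=0$, $\Lambda_k'(1)=D$, $\Lambda_k''(1)=\mathsf V_{kk}$, a Bahadur--Rao expansion (with the lattice vs.\ non-lattice constant absorbed in $O(1)$) gives $\log P_{Y_k}[S_n^{(k)}\geq na]=-n\Lambda_k^*(a)-\tfrac12\log n+O(1)$, and a Taylor expansion of $\Lambda_k^*$ about $a=D$ (where $(\Lambda_k^*)'(D)=1$ and $(\Lambda_k^*)''(D)=1/\mathsf V_{kk}$) at $a=\tau_k/n$ yields $n\Lambda_k^*(\tau_k/n)=nD-\sqrt n\,b+O(1)$. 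As $K$ is finite these $O(1)$'s are uniform in $k$, so $\max_k\beta_k\leq\exp\{-nD+\sqrt n\,b-\tfrac12\log n+O(1)\}$, which gives the achievability half of \eqref{eq:emin}.

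For the converse, fix any randomized test $\phi$ with $\alpha(\phi)\leq\epsilon_0$ and bound $\max_k\beta_k(\phi)\geq\tfrac1K\sum_k\beta_k(\phi)$, which is the type-II error of $\phi$ for the binary test of $P_{Y_0}^{\otimes n}$ against the equal mixture $\bar P_n=\tfrac1K\sum_k P_{Y_k}^{\otimes n}$; by the Neyman--Pearson lemma this is at least the optimal type-II error $\beta^{\mathrm{NP}}_{\epsilon_0}(P_{Y_0}^{\otimes n},\bar P_n)$ at level $\epsilon_0$. Its likelihood-ratio statistic is $L_n=\tfrac1K\sum_k e^{-S_n^{(k)}}$, and since $\tfrac1K\max_k e^{-S_n^{(k)}}\leq L_n\leq\max_k e^{-S_n^{(k)}}$, the event $\{L_n>\eta\}$ is sandwiched between $\{\min_k S_n^{(k)}<-\log\eta-\log K\}$ and $\{\min_k S_n^{(k)}<-\log\eta\}$; imposing $P_{Y_0}[L_n>\eta]=\epsilon_0$ therefore forces $-\log\eta=nD-\sqrt n\,b+O(1)$ by the same multivariate Berry--Ess\'een estimate. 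Choosing $\tau=nD-\sqrt n\,b+O(1)$ with $e^{-\tau}\leq\eta$, the inclusion $\{S_n^{(j)}\geq\tau\ \forall j\}\subseteq\{L_n\leq\eta\}$ gives $\beta^{\mathrm{NP}}_{\epsilon_0}\geq\tfrac1K\,P_{Y_{k_0}}[S_n^{(j)}\geq\tau\ \forall j]$ for a fixed $k_0$; tilting the $k_0$-th coordinate by $\lambda\approx1$ turns $P_{Y_{k_0}}$ into a law close to $P_{Y_0}$, under which the remaining coordinates $S_n^{(j)}$, $j\neq k_0$, have mean $\approx nD$ and hence exceed $\tau\approx nD-\sqrt n\,b$ with probability bounded away from $0$ --- this is exactly where $\mathcal I_{\min}=[K]$ is used --- so a Bahadur--Rao lower bound for the $k_0$-th coordinate produces $\beta^{\mathrm{NP}}_{\epsilon_0}\geq\exp\{-nD+\sqrt n\,b-\tfrac12\log n+O(1)\}$, matching the achievability bound.

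The main obstacle is the sharp multivariate large-deviation bookkeeping at the $O(1)$ level: one needs a multivariate Berry--Ess\'een bound accurate to $O(n^{-1/2})$ so that $b$ is pinned down exactly by \eqref{eq:bsolution} (merely $o(1)$ precision would leave an unbounded error in $\sqrt n\,b$, hence in $\log\beta$), together with Bahadur--Rao asymptotics for the dominant coordinate that are sharp including the $-\tfrac12\log n$ prefactor, and --- most delicately --- a proof that the constraints coming from the non-dominant coordinates cost nothing in the exponent in both the achievability and converse bounds; this last point is precisely why the reduction to $\mathcal I_{\min}$, the positive-definiteness of $\mathsf V_{\min}$, and the third-moment hypothesis $T<\infty$ are required.
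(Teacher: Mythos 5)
Your proposal is correct in substance, but it takes a genuinely different route from the paper. The paper does not re-derive the asymptotics of the individual type-II errors: it invokes the full characterization of the achievable error region $\mathcal{E}_{\epsilon_0}$ from \cite[Th.~1]{huang2014strong} (patching a gap in that proof via Han's converse as discussed in \cite{chen2019SW}), reduces the resulting minimax over the region to a scalar optimization via \lemref{lem:generalminmax} (which shows the optimum is attained at a threshold vector proportional to $\boldsymbol{1}$), and then identifies the optimal scalar as $-nD_{\min}+\sqrt{n}b+O(1)$ by a Chernoff/Chebyshev argument that isolates $\mathcal{I}_{\min}$. You instead build the answer from scratch: a direct achievability with the equal-threshold LLR test (reduction to $\mathcal{I}_{\min}$ by giving non-dominant indices thresholds $na_k$ with $a_k\in(D_{\min},D_k)$, then multivariate Berry--Ess\'een for the type-I error and a change-of-measure/Bahadur--Rao bound for the type-II error), and a converse that lower-bounds $\max_k\beta_k$ by the average, applies Neyman--Pearson against the uniform mixture, sandwiches the mixture likelihood ratio between $\tfrac1K\max_k e^{-S_n^{(k)}}$ and $\max_k e^{-S_n^{(k)}}$, and finishes with a reverse change-of-measure estimate. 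Your converse is arguably cleaner in that it applies to arbitrary randomized tests without first invoking the ``LLR tests are within $O(1/n)$ of optimal'' reduction, and it avoids relying on the region characterization altogether; the price is that you must supply the sharp multivariate reverse estimate yourself (this is precisely the multivariate analogue of \cite[Lemma~47]{polyanskiy2010Channel} that the paper proves for its own converse of the region result). One technical caveat: your achievability invokes Bahadur--Rao via the cumulant generating function $\Lambda_k(\lambda)$ near $\lambda=1$, whose finiteness for $\lambda>1$ is not guaranteed by the stated hypotheses (only third moments of $h_1^{\mathrm{LLR}}(Y_0)$ are assumed, and when $b<0$ the optimal tilt sits above $1$); replacing that step with the exact identity $P_{Y_k}[S_n^{(k)}\geq\tau_k]=\mathbb{E}_{P_{Y_0}}[e^{-S_n^{(k)}}1\{S_n^{(k)}\geq\tau_k\}]$ followed by \cite[Lemma~47]{polyanskiy2010Channel} gives the same $\exp\{-\tau_k-\tfrac12\log n+O(1)\}$ bound under exactly the assumed moment conditions.
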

\begin{IEEEproof}
See \appref{app:hypothesis}.
\end{IEEEproof}

%
%
%Huang and Moulin \cite[Proposition 1]{huang2014strong} presented the theorem with $o(\sqrt{n})$ term replaced by $-\frac 1 2 \log n + O(1)$ in \eqref{eq:huang_conv}, indicating that the inner and outer bounds on the achievable error vectors match up to the third-order term in the exponent. However, according to the definition in \eqref{eq:qinvvec}, there exists some $\bm{\tau} \in Q_{\textnormal{inv}}(\mathsf{V}, \epsilon_0)$ such that $\bm{\tau}$ has an arbitrarily large coordinate. Therefore, the assumption in \cite{huang2014strong} that $Q_{\textnormal{inv}}(\mathsf{V}, \epsilon_0) = O(1) \boldsymbol{1}$ with respect to $n$, does not hold. We prove that the achievability part \eqref{eq:huang_conv} is unaffected by this observation. For the converse part of \eqref{eq:huang_conv}, we show that the second-order term in the error exponent remains unchanged, while we cannot confirm that the third-order term is $-\frac 1 2 \log n$ unless the threshold vector satisfies $\bm{\tau} = n \mathbf{D} - O(\sqrt{n}) \boldsymbol{1}$. In addition to the error region asymptotics, we also derive the achievable minimax error \eqref{eq:emin}, which is the maximum error over all type-II errors minimized over all tests.

Rewriting \eqref{eq:emin}, defining $b$ as given in \eqref{eq:bsolution}, and using the condition in \eqref{eq:type2error} with any fixed $E_k$, we see that a decision about whether any of the transmitters are active can be made at time 
\begin{align}
n_0 &= \frac 1 {2 D_{\min}} \log n_1 + \frac {b}{\sqrt{2D_{\min}^3}}\sqrt{\log n_1} \notag \\
&\quad - \frac 1 {2 D_{\min}} \log \log n_1 + O(1) \label{eq:huangn0}
\end{align}
while guaranteeing both that the probability that we do not decode at time $n_0$ when no transmitters are active does not exceed $\epsilon_0$ and that the probability that we decode at time $n_0$ when $k > 0$ transmitters are active does not exceed $\frac{E_k}{\sqrt{n_k}}$. Note that $E_k$ only affects the constant term $O(1)$ in \eqref{eq:huangn0}.
\thmref{thm:huangmoulin} implies that the coefficients in front of $\log n_1$, $\sqrt{\log n_1}$, and $\log \log n_1$ in \eqref{eq:huangn0} are optimal. Juxtaposing \eqref{eq:Hmaxbk} and \eqref{eq:huangn0}, we see that Hoeffding's test achieves the optimal first-order error exponent (that is, the optimal coefficient in front of $\log n_1$).

\section{Conclusion} \label{sec:summary}
We study the agnostic random access model, in which each transmitter knows nothing about the set of active transmitters beyond what it learns from limited scheduled feedback from the receiver, and the receiver knows nothing about the set of active transmitters beyond what it learns from the channel output. In our proposed rateless coding strategy, the decoder attempts to decode only at a fixed, finite collection of decoding times. At each decoding time $n_t$, it sends a single bit of feedback to all transmitters indicating whether or not its estimate for the number of active transmitters is $t$. We prove non-asymptotic and second-order achievability results for the equal rate point $(R, \dots, R)$ under our assumptions on the channel (permutation-invariance \eqref{eq:permutationinvariance}, reducibility \eqref{eq:reducible}, friendliness \eqref{eq:silence}, and interference \eqref{eq:interference}). For a nontrivial class of discrete, memoryless RACs, our proposed RAC code performs as well in its capacity and dispersion terms as the best-known code for the discrete memoryless MAC in operation; that is, it performs as well as if the transmitter set were known {\em{a priori}}. The assumptions of permutation-invariance \eqref{eq:permutationinvariance}, reducibility \eqref{eq:reducible}, and interference \eqref{eq:interference} together with our use of identical encoding guarantee (by \lemref{lem:mutualinfo}) that the equal rate point always lies on the sum-rate boundary rather than on one of the corner points. For example, for two users, the capacity region is a symmetric pentagon. This ensures that our simplified, single-threshold decoding rule results in no loss in the first- or second-order achievable rate terms, making the codes far more practical than prior schemes \cite{huang2012finite, jazi2012simpler, tan2014dispersions, scarlett2015constantcompMAC} in which decoders employ $2^k -1$ simultaneous threshold-rules. In \secref{sec:transiden}, we show that as long as $K < \infty$, there is no loss in the first two terms even if the decoder is tasked with decoding transmitter identity.

We also provide a tight approximation for the capacity and dispersion of the adder-erasure RAC \eqref{eq:addererasure}, which is an example channel satisfying our symmetry conditions. 

In order to decide whether there are any active transmitters without enumerating all $K$ alternative hypotheses, we analyze universal hypothesis tests. Results are given both for the case where the channel output alphabet is finite and the case where the channel output alphabet is countably or uncountably infinite. Using existing literature, it is possible in both cases to obtain exponentially decaying maximum type-II error under the condition that $\sup_{x \in \mathbb{R}} |F_k(x) - F_0(x)| \geq \delta_0 > 0 \, \text{ for all } k \in [K]$. We also derive the best third-order asymptotics of the minimax type-II error (\thmref{thm:huangmoulin}).

% Although the assumptions on the channel (permutation-invariance \eqref{eq:permutationinvariance}, reducibility \eqref{eq:reducible}, friendliness \eqref{eq:silence}, and interference \eqref{eq:interference}) enable the use of a single-threshold decoding rule and the ordered decoding times $n_0 < n_1 < \dots < n_K$, these assumptions are not critical to the approach, and our proposed strategy applies more broadly with some mild modifications in the strategy. 

We conclude the paper by giving some directions for generalizations and future work.

\begin{itemize}
    \item \textit{Achievability of unequal rate points}: While identical encoding is appealing from a practical perspective, it is also possible to design codes with different transmitters operating at different rates. Such codes would employ non-identical encoding at the transmitters, and they could also employ a decoding rule with multiple, simultaneous threshold rules. In \cite[Section~VI]{chen2019SW}, Chen \textit{et al.} use a similar strategy to derive third-order achievability and converse results for the random access source coding problem where operation at both identical and distinct rates is allowed.
    \item \textit{Unordered decoding times:}
    Example scenarios where unordered decoding times can arise include channels that do not satisfy the assumptions \eqref{eq:silence} or \eqref{eq:interference}, applications characterized by small message sizes (e.g., in the internet of things), and scenarios where the system designer chooses unordered decoding times (e.g., when a quick error is preferable to a long period of low individual data rates caused by unusually high traffic in the network). It is easy to modify our nonasymptotic bound (\thmref{thm:nonasymp}) to capture the case where $n_0, \ldots, n_K$ are unordered.
    \item \textit{Non-i.i.d. input distributions at the random encoders:} Our random coding design generalizes to scenarios where an arbitrary input distribution $P_{X^{n_K}}$ is employed instead of $P_{X} \times \cdots \times P_X$. For example, in \cite[Th.~4]{yavas2020Gaussian}, we improve the achievable second- and third-order term for the Gaussian RAC by employing uniform distributions over spheres instead of i.i.d. distributions. To prove \cite[Th.~4]{yavas2020Gaussian}, random codewords are formed by concatenating $K$ independent sub-codewords,
    drawn from uniform distributions over the power spheres with dimensions $n_1, n_2-n_1, \dots, n_K-n_{K-1}$. This non-product input distribution satisfies the maximal power constraints for all decoding times $n_1, n_2, \dots, n_K$. 
    More broadly, non-stationary input distributions can arise in communication over RACs where no single $P_X$ simultaneously maximizes all mutual informations $I_k$. While we explore in \secref{sec:choosinginput} how to choose the ``best" single-letter input distribution $P_X$ for this scenario, it is possible to employ different input distributions for each of the sub-codewords $n_1, n_2-n_1, \dots, n_K - n_{K-1}$ to achieve higher rates. 

    \item \textit{Fading channels}:  A rateless code design for quasi-static fading RACs where the channel fading coefficients are unavailable either at the transmitters or at the receiver would constitute one of the most practically relevant extensions of this work. In the quasi-static fading channel model with a fixed blocklength, the achievable rate is dictated by a quantity called the \textit{outage probability}\cite{wang2014quasi}. If the fading coefficient is small in a communication epoch, then the channel is declared to be in \textit{outage} and reliable communication is not achieved. However, using rateless codes, it is possible to maintain reliable communication at the expense of reduced rates (i.e., larger decoding times) when the fading coefficient is small while achieving larger rates when the fading coefficient is large. While Kowshik \textit{et al.} \cite{kowshik2020RAC} derive achievability results for the quasi-static fading RAC in the fixed blocklength regime under the PUPE \eqref{eq:PUPE} criterion, \textit{rateless coding} over fading RACs is yet to be fully explored.
\end{itemize}

\appendices
\section{Proofs of Lemmas \ref{lem:mutualinfofriendly}--\ref{lemma:expectation}} \label{app:proofs}
\renewcommand{\theequation}{\thesection.\arabic{equation}}
\setcounter{equation}{0}
We first state and prove Lemma~\ref{lemma:mutualincrease}, which we then use to prove Lemmas~\ref{lem:mutualinfo}, \ref{lem:mutualinfofriendly}, and \ref{lemma:expectation} (in that order).
\begin{lemma} \label{lemma:mutualincrease}
Let $X_1, X_2, \ldots, X_k$ be i.i.d., and let the interference \eqref{eq:interference}, permutation-invariance \eqref{eq:permutationinvariance}, and reducibility \eqref{eq:reducible} assumptions hold. Then $I_k(X_i ; Y_k | X_{[i-1]})$ is strictly increasing in $i$, i.e., for all $i < j \leq k$,
\begin{align}
I_k(X_i ; Y_k | X_{[i-1]}) < I_k(X_j ; Y_k | X_{[j - 1]}) \label{eq:mutualincreaseini}.
\end{align}
\end{lemma}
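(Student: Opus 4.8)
Write $g(m)\triangleq I_k(X_m;Y_k\mid X_{[m-1]})$, so that, by permutation-invariance \eqref{eq:permutationinvariance} and the i.i.d.\ structure, $g(m)$ depends only on $m$ and not on which $(m-1)$ inputs sit in the conditioning, and \eqref{eq:mutualincreaseini} is exactly the assertion that $g$ is strictly increasing on $\{1,\dots,k\}$. The plan is to rewrite each increment $g(m+1)-g(m)$ as a single conditional mutual information between two inputs given the output, which makes nonnegativity immediate, and then to pull strict positivity out of the interference hypothesis \eqref{eq:interference} by conditioning on an ``all silent'' event and invoking reducibility \eqref{eq:reducible}.

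First I would prove the identity $g(m+1)-g(m)=I_k(X_m;X_{m+1}\mid X_{[m-1]},Y_k)$ for $1\le m\le k-1$. Expanding $I_k(X_m;Y_k,X_{m+1}\mid X_{[m-1]})$ by the chain rule in the two possible orders gives $I_k(X_m;X_{m+1}\mid X_{[m-1]})+I_k(X_m;Y_k\mid X_{[m-1]},X_{m+1})=I_k(X_m;Y_k\mid X_{[m-1]})+I_k(X_m;X_{m+1}\mid X_{[m-1]},Y_k)$. The term $I_k(X_m;X_{m+1}\mid X_{[m-1]})$ vanishes since the inputs are i.i.d., and $I_k(X_m;Y_k\mid X_{[m-1]},X_{m+1})=I_k(X_{m+1};Y_k\mid X_{[m]})=g(m+1)$ by permutation-invariance (swap the labels $m$ and $m+1$); rearranging gives the identity, and in particular $g(m+1)\ge g(m)$. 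Telescoping then yields, for $i<j\le k$,
\begin{equation}
g(j)-g(i)=\sum_{m=i}^{j-1} I_k(X_m;X_{m+1}\mid X_{[m-1]},Y_k)\ \ge\ 0,
\end{equation}
so it remains only to show that at least one summand is strictly positive — say the one with $m=j-1$, which obeys $m\le k-1$.

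For that strict positivity I would condition on the event $\{X_{[m-1]}=0^{m-1}\}$ that the first $m-1$ transmitters are silent. By permutation-invariance \eqref{eq:permutationinvariance} these silent coordinates may be moved to the end, and then reducibility \eqref{eq:reducible} identifies $P_{Y_k\mid X_{[k]}}(\cdot\mid 0^{m-1},x_m,\dots,x_k)$ with the transition law of the $(k-m+1)$-transmitter MAC of the family \eqref{eq:collection} evaluated at $(x_m,\dots,x_k)$. Since $X_m,\dots,X_k$ are i.i.d.\ $P_X$ and independent of $X_{[m-1]}$, the conditional law of $(X_m,X_{m+1},Y_k)$ given this event coincides with the joint law of the first two inputs and the output of that $(k-m+1)$-transmitter MAC; as $m\le k-1$ forces $k-m+1\ge 2$, the interference assumption \eqref{eq:interference} (taken with $s=1$, $t=2$ for this MAC) says $X_m$ and $X_{m+1}$ are \emph{not} conditionally independent given $Y_k$ under that law. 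Hence $I_k(X_m;X_{m+1}\mid X_{[m-1]}=0^{m-1},Y_k)>0$; since this is one term — carrying positive weight $P_X(0)^{m-1}$ — in the nonnegative average $I_k(X_m;X_{m+1}\mid X_{[m-1]},Y_k)=\mathbb{E}_{X_{[m-1]}\sim P_X^{m-1}}\big[I_k(X_m;X_{m+1}\mid Y_k,X_{[m-1]})\big]$, the whole quantity is strictly positive, and the telescoping sum above is $>0$.

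The routine ``$\ge$'' half is just an interaction-information manipulation; the real work is in the last step, which is where I expect the main obstacle to lie: the interference hypothesis only asserts conditional \emph{dependence} of input blocks under the raw output $Y_k$, whereas we need strict positivity of a \emph{further-conditioned} mutual information, and bridging this gap seems to require precisely the reducibility-plus-permutation reduction to a smaller member of the channel family, plus the harmless observation that the conditioning event $\{X_{[m-1]}=0^{m-1}\}$ is non-degenerate (the silence symbol lies in the support of $P_X$).
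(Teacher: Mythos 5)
Your proof is correct and follows essentially the same route as the paper's: the paper likewise converts the increment into a conditional mutual information between inputs given $(Y_k, X_{[i-1]})$ via the chain rule (phrased as $I(U;Y|T)\le I(U;Y|T,V)$ for independent $U,V,T$, with equality iff conditional independence), and likewise obtains strictness by setting $X_{[i-1]}=0^{i-1}$ and using permutation-invariance plus reducibility to reduce to a smaller MAC where interference forbids the conditional independence. Your telescoping over adjacent indices versus the paper's single block step $V=X_{[i:j-1]}$ is only a cosmetic difference, and your closing caveat about the silence symbol lying in the support of $P_X$ is a point the paper's own proof implicitly relies on as well.
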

\begin{IEEEproof}[Proof of \lemref{lemma:mutualincrease}]
By permutation-invariance \eqref{eq:interference} and the i.i.d. distribution of $X_1, \dots, X_k$, we have
\begin{align}
I_k(X_i ; Y_k | X_{[i-1]}) = I_k(X_j ; Y_k | X_{[i - 1]}). \label{eq:permutationinvij}
\end{align}
Let $(U, V, T)$ be mutually independent random variables. Then $I(U; V) = I(U; T, V) = 0$. Since $I(U; T, Y ) \leq I(U; T, V, Y)$, the chain rule
implies that
\begin{align}
I(U; Y | T) \leq I(U; Y | T, V). \label{eq:indepchainrulegiven}
\end{align}
Setting $U$ to $X_{j}$, $Y$ to $Y_k$, $T$ to $X_{[i-1]}$, and  $V$ to $X_{[i: j-1]}$ in \eqref{eq:indepchainrulegiven} and then applying \eqref{eq:permutationinvij} gives \eqref{eq:mutualincreaseini} with $<$ replaced by $\leq$. Equality in  \eqref{eq:indepchainrulegiven} is attained if and only if $U$ and $V$ are conditionally independent given $(Y, T)$. As a result, equality in our modified form of \eqref{eq:mutualincreaseini} occurs if and only if $X_j$ and $X_{[i:j-1]}$ are conditionally independent given $(Y_k, X_{[i-1]})$. We proceed to show that this is not possible using a proof by contradiction.

Assume that $X_j$ and $X_{[i:j-1]}$ are conditionally independent given $(Y_k,  X_{[i -1]})$, i.e.,
\begin{align}
P_{X_{[i:j]} | Y_k, X_{[i-1]}} = P_{X_{[i:j-1]} | Y_k, X_{[i - 1]}} \, P_{X_j | Y_k, X_{[i - 1]}}   \label{eq:condindepassum}.
\end{align}
Set $X_{[i-1]} = 0^{i-1}$ and use Bayes' rule to show 
\begin{align}
P_{X_{[i:j]} | Y_k, X_{[i-1]} = 0^{i-1}} &= P_{X_{[j - (i-1)]} | Y_{k - (i-1)}}  \\
P_{X_{[i:j-1]} | Y_k, X_{[i-1]} = 0^{i-1}} &= P_{X_{[2:j-(i-1)]} | Y_{k - (i-1)}}  \\
P_{X_j | Y_k, X_{[i-1]} = 0^{i-1}} &= P_{X_{1} | Y_{k - (i-1)}}  
\end{align}
due to reducibility \eqref{eq:permutationinvariance}, permutation-invariance \eqref{eq:reducible}, and the i.i.d. distribution of $X_1, \dots, X_k$. 
Therefore, \eqref{eq:condindepassum} implies that $X_1$ and $X_{[2:j-(i-1)]}$ are conditionally independent given $Y_{k-(i-1)}$, which is not possible by interference assumption \eqref{eq:interference}. 
\end{IEEEproof}

% The following result generalizes \lemref{lemma:miincreasestrict}. 
% \begin{lemma}[Conditioning increases mutual information] \label{lemma:miincrease2}
%  Let $X_1, X_2, \ldots, X_k$ be independent, and let the interference assumption \eqref{eq:interference} hold. 
%  For all $ t \leq s < l \leq k $,
% \begin{align}
% I_k(X_{[t]} ; Y_k | X_{[t+1: s]}) < I_k(X_{[t]} ; Y_k | X_{[t+1: l]}). \label{eq:infoconditiona}
% \end{align}
% \end{lemma}

% \begin{IEEEproof}[Proof of \lemref{lemma:miincrease2}]
%  By the chain rule for mutual information, when $U, V$ are conditionally independent given $T$, 
%  \begin{align}
% I(U; Y | T) \leq I(U; Y | V, T), \label{eq:indepchainrule2}
% \end{align}
% and \eqref{eq:infoconditiona} (with $\leq$ instead of $<$)  follows by substituting $Y \leftarrow Y_k$, $U \leftarrow X_{[t]}$, $V \leftarrow X_{[s+1:l]}$, $T \leftarrow X_{[t+1:s]}$. Equality in  \eqref{eq:indepchainrule2} is attained if and only if $U$ and $V$ are conditionally independent given $Y$ and $T$. The interference assumption \eqref{eq:interference} eliminates the possibility of equality in \eqref{eq:infoconditiona}. 
% \end{IEEEproof}

\begin{IEEEproof}[Proof of \lemref{lem:mutualinfo}]
We wish to show that
\begin{align}
\frac1kI_k(X_{[k]};Y_k)<\frac1s I_k(X_{[s]};Y_k|X_{[s+1:k]}) \label{eq:lemma2states}.
\end{align}
 By the chain rule for mutual information, the left-hand side of \eqref{eq:lemma2states} equals the average of $k$ terms
\begin{align}
\frac1kI_k(X_{[k]};Y_k)=\frac1k\sum_{i=1}^kI_k(X_i;Y_k|X_{[i-1]}).
\end{align}
By permutation-invariance \eqref{eq:permutationinvariance} and the chain rule, the right-hand side of \eqref{eq:lemma2states} equals the average of the last $s$ of those $k$ terms
\begin{align}
\frac{1}{s} I_k(X_{[s]};Y_k|X_{[s+1:k]})
&= \frac{1}{s} I_k(X_{[k-s+1:k]};Y_k|X_{[k-s]}) \\
&= \frac{1}{s} \sum_{i=k-s+1}^k I_k(X_i;Y_k|X_{[i-1]}). 
\end{align}
Since the terms in these averages are strictly increasing in $i$ by \lemref{lemma:mutualincrease}, we have the desired result.
\end{IEEEproof}

\begin{IEEEproof}[Proof of \lemref{lem:mutualinfofriendly}]
We wish to show that $\frac 1 s I_s > \frac 1 k I_k$. We proceed by representing $I_s$ in terms of $I_k$ as
\begin{align}
\frac 1 s I_s &= \frac 1 s I_k(X_{[s]}; Y_k | X_{[s+1:k]} = 0^{k-s}) \label{eq:redused}\\
&\geq \frac 1 s I_k(X_{[s]}; Y_k | X_{[s+1:k]}) \label{eq:frused}\\\
&> \frac{1}{k} I_k, \label{eq:lemma2used}
\end{align}
where \eqref{eq:redused} follows from reducibility \eqref{eq:reducible}, \eqref{eq:frused} follows from friendliness \eqref{eq:silence}, and (\ref{eq:lemma2used}) follows from \lemref{lem:mutualinfo}.
\end{IEEEproof}

\begin{IEEEproof}[Proof of \lemref{lemma:expectation}]
To derive the bound $\mathbb{E}[\imath_t(X_{[s]}; Y_k)] \leq I_k(X_{[s]}; Y_k) < I_t(X_{[s]}; Y_t)$, we write
\begin{IEEEeqnarray}{rCl}
\mathbb{E}[\imath_t(X_{[s]}; Y_k)] 
&=& \bbE \left[{\log \frac{P_{Y_t | X_{[s]} }(Y_k | X_{[s]}) }{P_{Y_t}(Y_k)}} \right]\\
&=& -D(P_{X_{[s]}} P_{Y_k|X_{[s]}} \|  P_{X_{[s]}} P_{Y_t|X_{[s]}})  \nonumber \\
&& + D(P_{Y_k} \| P_{Y_t})  \nonumber \\
&& + D(P_{X_{[s]}} P_{Y_k|X_{[s]}} \|  P_{X_{[s]}} P_{Y_k} ) \\
&=& -D( P_{X_{[s]}} P_{Y_k|X_{[s]}} \|  P_{X_{[s]}} P_{Y_t|X_{[s]}} )  \nonumber \\
&&+ D(P_{Y_k} \| P_{Y_t})  + I_k(X_{[s]}; Y_k)  \\
&\leq& I_k(X_{[s]}; Y_k) \label{ineq51}\\
&=& \sum_{i = 1}^s I_k(X_i; Y_k | X_{[i-1]}) \label{eq:chainruleIk} \\
&<& \sum_{i = 1}^s I_k(X_i; Y_k | X_{[i-1]}, X_{[s+1:s+k-t]}) \IEEEeqnarraynumspace \label{eq:chainruleIk2} \\
&=& I_k(X_{[s]}; Y_k | X_{[t+1:k]}) \label{eq:expperm}\\
&\leq& I_k(X_{[s]}; Y_k | X_{[t+1:k]} = 0^{k-t}) \label{eq:reduceused}\\
&=& I_t(X_{[s]}; Y_t), \label{ineq52}  
\end{IEEEeqnarray}
where \eqref{ineq51} follows from data processing inequality of relative entropy (e.g., \cite[Th.~2.2.5]{polyanskiyLectureNotes}), \eqref{eq:chainruleIk} follows from the chain rule, \eqref{eq:chainruleIk2} follows from permutation-invariance \eqref{eq:permutationinvariance} and  \lemref{lemma:mutualincrease}, \eqref{eq:expperm} follows from permutation-invariance \eqref{eq:permutationinvariance} and the chain rule, and (\ref{eq:reduceused}) and (\ref{ineq52}) follow from friendliness (\ref{eq:silence}) and reducibility (\ref{eq:reducible}), respectively.

\end{IEEEproof}
\ver{}{
\section{Achievable Rates on the Sum-rate Boundary of a MAC}\label{app:sumrateboundary}
\begin{IEEEproof}[Proof of \eqref{eq:L1L2}]
	For the 2-transmitter MAC, the second-order achievability region is expressed in terms of dispersion matrix, $\mathsf{V}_2$, defined as the covariance matrix of the random vector
	\begin{align}
	\boldsymbol{\imath}_2 \triangleq  \begin{bmatrix}
	\imath_2(X_1; Y| X_2) \\ \imath_2(X_2; Y| X_1) \\ \imath_2(X_1, X_2; Y) \end{bmatrix}. 
	\end{align}
	
	Let $\mathbf{Z}$ be a $k$-dimensional Gaussian vector with mean $\mathbf{0}$ and covariance matrix $\mathsf{V}$. For $\epsilon \in (0, 1)$, define the multidimensional counterpart of the $Q^{-1}(\cdot)$ function
	\begin{align}
	\mathcal{Q}_{\textnormal{inv}}(\mathsf{V}, \epsilon) \triangleq \left\{ \bm{\tau} \in \mathbb{R}^k : \Prob{\mathbf{Z} \leq \bm{\tau}} \geq 1 - \epsilon \right\}. 
	\end{align}
	It is shown in \cite{huang2012finite, tan2014dispersions} that the achievable rate region satisfies
	\begin{align}
	(R_1, R_2): \begin{bmatrix}
	R_1 \\ R_2 \\ R_1 + R_2 \end{bmatrix} &\supseteq \bigcup_{P_{X_1}, P_{X_2}} \bigg\{\E{\boldsymbol{\imath}_2} - \frac{1}{\sqrt{n}} Q_{\textnormal{inv}}(\mathsf{V}_2, \epsilon) \notag\\
	&+ O\left( \frac{\log n}{n} \right) \mathbf{1} \bigg\}. \label{eq:vectorach}
	\end{align}
	Let $(R_1, R_2)$ converge to a point on the sum-rate boundary, i.e., $R_1 = D_1 + \frac{1}{\sqrt{n}} L_1 + O \left( \frac{\log n}{n} \right)$ and $R_2 = D_2 + \frac{1}{\sqrt{n}} L_2 + O \left( \frac{\log n}{n} \right)$, where
	\begin{align}
	D_1 &= \lambda I_2(X_1; Y) + \bar{\lambda} I_2(X_1; Y | X_2) \notag \\
	D_2 &= \bar{\lambda} I_2(X_2; Y) +  \lambda I_2(X_2; Y | X_1), \label{eq:I1I2}
	\end{align}
	for some $0 < \lambda < 1$, and $\bar{\lambda} = 1- \lambda$. We are going to prove that under this condition, \eqref{eq:vectorach} and \eqref{eq:L1L2} are equivalent. The proof is similar to \cite[Th.~2]{molavianjazi2015second}, which proves the reduction for Gaussian MACs on the sum-rate boundary. 
	
	For any nontrivial MAC, we have
	\begin{align}
	\Delta_1 &= I_2(X_1; Y | X_2) - I_2(X_1; Y) > 0 \\
	\Delta_2 &= I_2(X_1; Y | X_2) - I_2(X_1; Y) > 0.
	\end{align}
	
	We consider a Gaussian random vector $\mathbf{Z} = (Z_1, Z_2, Z_3) \sim \mathcal{N}(\boldsymbol{0}, \mathsf{V}_2) \in \mathbb{R}^3$. Due to the definition of $Q_{\textnormal{inv}}(\mathsf{V}, \epsilon)$ in \eqref{eq:qinvvec} and \eqref{eq:vectorach}, $\mathbf{Z}$ satisfies
	\begin{align}
	&1-\epsilon \nonumber \\
	&\leq \mathbb{P} \left[ \mathbf{Z} \leq \sqrt{n} \begin{bmatrix} I_2(X_1; Y| X_2) - R_1 \\ I_2(X_2; Y | X_1) - R_2 \\ I_2(X_1, X_2; Y) - (R_1 + R_2)  \end{bmatrix} \right. \notag \\
	& \quad +  O \left( \frac{\log n}{\sqrt{n}} \right) \mathbf{1} \Bigg] \\
	&= \mathbb{P} \left[ \mathbf{Z} \leq \begin{bmatrix} \sqrt{n} \lambda \Delta_1 - L_1 \\ \sqrt{n} \bar{\lambda} \Delta_2 - L_2 \\ -(L_1 + L_2)  \end{bmatrix}  \right] +  O \left( \frac{\log n}{\sqrt{n}} \right),  \label{eq:Zleq}
	\end{align}
	where \eqref{eq:Zleq} follows from \eqref{eq:I1I2} and Taylor expansion of $Q_{\textnormal{inv}}(\mathsf{V}, \cdot)$. 	Denote
	\begin{align}
	&p_{\textnormal{diff}} \triangleq \Prob{Z_3 \leq -(L_1 + L_2)} - 
	\mathbb{P} \left[ \mathbf{Z} \leq \begin{bmatrix} \sqrt{n} \lambda \Delta_1 - L_1 \\ \sqrt{n} \bar{\lambda} \Delta_2 - L_2 \\ -(L_1 + L_2)  \end{bmatrix}  \right] \notag \\
	&= \mathbb{P}\Big[\cB{\cB{Z_1 > \sqrt{n} \lambda \Delta_1 - L_1} \cup \cB{Z_2 > \sqrt{n} \lambda \Delta_2 - L_2}} \notag \\
	&\quad \cap \{Z_3 \leq -(L_1 + L_2)\} \Big] \\
	&\leq \exp\cB{-cn} \label{eq:gausschernoff} \\
	&= O \left( \frac{\log n}{\sqrt{n}} \right),
	\end{align}
	for some $c > 0$ constant, where \eqref{eq:gausschernoff} follows from Chernoff bound on Gaussian random variables and union bound. This implies that \eqref{eq:Zleq} is equivalent to
	\begin{align}
	1 - \epsilon \leq \Prob{Z_3 \leq -(L_1 + L_2)} + O \left( \frac{\log n}{\sqrt{n}} \right),
	\end{align}
	where $Z_3 \sim \mathcal{N}(0, V_2)$, which proves the claim in \eqref{eq:L1L2}. In \thmref{thm:ach}, notice that $I_2(X_1; Y| X_2) > I_2(X_1; Y)$ follows from \lemref{lem:mutualinfo} and that we pick the symmetrical rate point $R_1 = R_2$.
\end{IEEEproof} 
}

\section{Proof of \lemref{lem:scarlett}}\label{app:scarlett}
\setcounter{equation}{0}
To prove \lemref{lem:scarlett}, we first derive the saddle point condition for the MAC.
\begin{theorem}[Saddle point condition for the MAC]\label{thm:saddlepoint}
    Let $\mathcal{P}_1$ and $\mathcal{P}_2$ be convex set of distributions on alphabets $\mathcal{X}_1$ and $\mathcal{X}_2$, respectively. Suppose that there exists a product distribution $P_{X_1^*} P_{X_2^*}$ such that
    \begin{align}
        \sup\limits_{\substack{P_{X_1} P_{X_2} \\ P_{X_1} \in \mathcal{P}_1, P_{X_2} \in \mathcal{P}_2}} I_2(X_1, X_2; Y_2) = I_2(X_1^*, X_2^*; Y_2^*) = I_2^*, \label{eq:I2starthm}
    \end{align}
    where $P_{Y_2^*|X_1^*, X_2^*} = P_{Y_2|X_1, X_2}$. Then, for all $P_{X_1} \in \mathcal{P}_1$ and for all $Q_{Y_2}$, it holds that
    \begin{align}
        &D(P_{X_1}P_{X_2^*} P_{Y_2|X_1, X_2} \| P_{X_1}P_{X_2^*} P_{Y_2^*}) \notag \\
        \leq&I_2^*\label{eq:saddlepointleft} \\
        \leq& D(P_{X_1^*} P_{X_2^*} P_{Y_2|X_1, X_2} \| P_{X_1^*} P_{X_2^*} Q_{Y_2}). \label{eq:saddlepointright}
    \end{align}
\end{theorem}

\begin{IEEEproof}[Proof of \lemref{lem:scarlett}]
\lemref{lem:scarlett} follows by an application of \thmref{thm:saddlepoint} to the setting where 
$\mathcal{P}_1$ includes the set of all distributions with a singleton on $\mathcal{X}_1$ having probability 1, i.e., $\{\delta_{x_1} \colon x_1 \in \mathcal{X}_1\} \subseteq \mathcal{P}_1$, and $I_2^* < \infty$. Particularizing $P_{X_1}$ in \eqref{eq:saddlepointleft} to any $P_{X_1} = \delta_{x_1}$ with $x_1 \in \mathcal{X}_1$ yields
\begin{align}
    &D( P_{X_2^*} P_{Y_2|X_1 = x_1, X_2} \|  P_{X_2^*} P_{Y_2^*}) \leq I_2^* \label{eq:Dcondstar}
\end{align}
for all ${x_1 \in \mathcal{X}_1}$. Since the left-hand side of \eqref{eq:Dcondstar} is equal to the conditional expectation of $\imath_2(X_1^*, X_2^*; Y_2^*)$ given $X_1^* = x_1$,
\eqref{eq:EcondC} follows with less than or equal to. The equality in \eqref{eq:EcondC} follows since otherwise \eqref{eq:Dcondstar} would give the contradiction $I_2(X_1^*, X_2^*; Y_2^*) < I_2^*$. 
\end{IEEEproof}

\begin{IEEEproof}[Proof of \thmref{thm:saddlepoint}]
    The proof of \thmref{thm:saddlepoint} is similar to the proof of the saddle point condition for point-to-point channels in \cite[Th.~4.4]{polyanskiyLectureNotes} and extends \cite[Th.~4.4]{polyanskiyLectureNotes} to the MAC. Although the optimization in \eqref{eq:I2starthm} is not convex in general \cite{watanabe1996}, the optimization
\begin{align}
    \sup\limits_{P_{X_1} \in \mathcal{P}_1} I_2(X_1, X_2^*; Y_2),
\end{align}
where $P_{X_1 X_2^* Y_2} = P_{X_1} P_{X_2^*} P_{Y_2|X_1, X_2}$ is convex.

Inequality \eqref{eq:saddlepointright} follows from the golden formula (e.g., \cite[Th.~3.3]{polyanskiyLectureNotes})
\begin{align}
    I_2^* &= D(P_{X_1^*} P_{X_2^*} P_{Y_2|X_1, X_2} \| P_{X_1^*} P_{X_2^*} P_{Y_2^*}) \\
    &= D(P_{X_1^*} P_{X_2^*} P_{Y_2|X_1, X_2} \| P_{X_1^*} P_{X_2^*} Q_{Y_2}) - D( P_{Y_2^*}\| Q_{Y_2})
\end{align}
and the nonnegativity of the relative entropy. Notice that for $I_2^* = \infty$, \eqref{eq:saddlepointleft} is trivial. Assume that $I_2^* < \infty$. Fix any $P_{X_1} \in \mathcal{P}_1$. Let $\lambda \in (0, 1)$. Set
\begin{align}
    P_{X_{1\lambda}} = \lambda P_{X_1} + (1-\lambda) P_{X_1^*} \in \mathcal{P}_1.
\end{align} 
Let $\theta \sim \textrm{Bernoulli}(\lambda)$, so that $P_{X_{1\lambda} | \theta = 0} = P_{X_1^*}$ and $P_{X_{1\lambda}| \theta = 1} = P_{X_1}$, and let 
\begin{align}
    P_{X_{1\lambda} X_2^* Y_{2\lambda}} = P_{X_{1\lambda}} P_{X_2^*} P_{Y_2|X_1, X_2}.
\end{align}
Then
\begin{IEEEeqnarray}{rCl}
    I_2^* &\geq& I_2(X_{1\lambda}, X_2^*; Y_{2\lambda}) \label{eq:saddlepointfirst} \\
    &=& D(P_{X_{1\lambda}} P_{X_2^*} P_{Y_2|X_1, X_2} \| P_{X_{1\lambda}} P_{X_2^*} P_{Y_{2\lambda}}) \\
    &=& \lambda D( P_{X_1} P_{X_2^*} P_{Y_2|X_1, X_2} \|  P_{X_1} P_{X_2^*} P_{Y_{2\lambda}}) \notag \\
    && + (1-\lambda) D(P_{X_1^*} P_{X_2^*} P_{Y_2|X_1, X_2} \| P_{X_1^*} P_{X_2^*} P_{Y_{2\lambda}}) \IEEEeqnarraynumspace \\
    &\geq& \lambda D(P_{X_1} P_{X_2^*}  P_{Y_2|X_1, X_2} \| P_{X_1} P_{X_2^*} P_{Y_{2\lambda}}) \IEEEeqnarraynumspace \notag \\
    &&+ (1-\lambda) I_2^*, \label{eq:saddlepointlaststep}
\end{IEEEeqnarray}
where \eqref{eq:saddlepointlaststep} follows from \eqref{eq:saddlepointright}. By subtracting $(1-\lambda) I_2^*$ from both sides of \eqref{eq:saddlepointlaststep} and dividing by $\lambda$, we get
\begin{align}
    I_2^* \geq D(P_{X_1} P_{X_2^*} P_{Y_2|X_1, X_2} \| P_{X_1} P_{X_2^*} P_{Y_{2\lambda}}). \label{eq:lowersc}
\end{align}
By taking $\liminf_{\lambda \to 0}$ in \eqref{eq:lowersc} and applying the lower semicontinuity of the relative entropy (e.g., \cite[Th.~3.6]{polyanskiyLectureNotes}), \eqref{eq:saddlepointleft} is proved.
\end{IEEEproof}
Note that $(P_{X_1^*}, P_{X_2^*})$ does not have to be unique for \thmref{thm:saddlepoint} and \lemref{lem:scarlett} to hold.

\section{Adder-erasure RAC}\label{app:adder}
Here, we approximate the sum-capacity and dispersion of the adder-erasure RAC for a large number of transmitters $(k)$. 
\begin{theorem}\label{thm:ikvk}
The optimal input distribution for the adder-erasure RAC defined in \eqref{eq:addererasure} is the Bernoulli(1/2) distribution at all encoders. That input distribution achieves the sum-rate capacity, and
\begin{align}
\setcounter{equation}{0}
&I_k = (1 - \delta) \left( \frac 1 2  \log \frac{\pi e k}{2} - \frac{\log e}{12 k^2 } \right)+ O(k^{-3})  \label{eq:addercapacity}\\
&V_k = (1 - \delta) \Bigg[ \frac{\delta}{4} \log^2 \frac{\pi e k}{2} + \frac{\log^2 e}{2}  - \frac{ \log^2 e}{2 k} \notag \\
&\quad - \left( \frac{ \log e}{2} + \frac{\delta \log \frac{\pi e k}{2} }{12} \right) \frac{\log e}{k^2} \Bigg] + O\left( \frac{\log k}{k^3} \right)  \label{eq:adderdispersion}.
\end{align}  
\end{theorem}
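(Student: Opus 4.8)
The plan is to reduce both quantities to moments of the self-information of a $(k,\tfrac12)$ Binomial and then expand with a higher-order Stirling approximation. First I would compute the information density under the equiprobable input. Because the erasure event is independent of $X_{[k]}$, on $\{Y_k=\mathsf e\}$ (probability $\delta$) one has $\imath_k(X_{[k]};Y_k)=0$, while on $\{Y_k\neq\mathsf e\}$ one has $\imath_k(X_{[k]};Y_k)=k\ln 2-\ln\binom{k}{S}$, where $S=\sum_{i=1}^k X_i\sim\mathrm{Binomial}(k,\tfrac12)$. Setting $G_k\triangleq k\ln 2-\ln\binom{k}{S}$, the identity $\mathbb E[G_k]=H(S)$ and the independence of the erasure indicator $B\sim\mathrm{Bernoulli}(1-\delta)$ from $S$ (so $\imath_k=BG_k$) give $I_k=(1-\delta)\,\mathbb E[G_k]$ and $V_k=(1-\delta)\,\mathrm{Var}[G_k]+\delta(1-\delta)\,(\mathbb E[G_k])^2$. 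Optimality of the equiprobable input follows because for product inputs $I(X_{[k]};Y_k)=(1-\delta)\,H\!\left(\sum_i X_i\right)$, and the entropy of a sum of independent Bernoullis is maximized when every parameter equals $\tfrac12$ — either by invoking the known optimality of uniform inputs on the binary adder MAC, or self-containedly via the fact that $H(\sum_i X_i)$ is concave in each $p_i$ separately (its second derivative in $p_1$, with the others fixed, equals $-\sum_j (t_{j-1}-t_j)^2/q_j\le 0$, where $t_j$ is the pmf of $\sum_{i\ge 2}X_i$ and $q_j=p_1t_{j-1}+(1-p_1)t_j$) together with a symmetrization over the transmitter labels; since the other marginals symmetric forces $p_i=\tfrac12$ to be the coordinatewise critical point. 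This product distribution therefore attains the MAC sum-rate capacity.

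Next I would expand $G_k$ around its mean. Put $W\triangleq S-k/2$ and $\Theta\triangleq 2W/k$, and apply $\ln n!=n\ln n-n+\tfrac12\ln(2\pi n)+\tfrac1{12n}+O(n^{-3})$ to each of $\ln k!$, $\ln(k/2+W)!$, $\ln(k/2-W)!$. Collecting the terms yields
\begin{align}
G_k &= \tfrac12\ln\tfrac{\pi k}{2}+\tfrac{k}{2}\big[(1+\Theta)\ln(1+\Theta)+(1-\Theta)\ln(1-\Theta)\big]\notag\\
&\quad +\tfrac12\ln(1-\Theta^2)+\tfrac1{12}\Big(\tfrac1{k/2+W}+\tfrac1{k/2-W}-\tfrac1k\Big)+O(k^{-3}),\notag
\end{align}
and Taylor-expanding each block in $\Theta$ (using $(1+\Theta)\ln(1+\Theta)+(1-\Theta)\ln(1-\Theta)=\sum_{m\ge 1}\Theta^{2m}/(m(2m-1))$) writes $G_k$ as $\tfrac12\ln\tfrac{\pi k}{2}+\tfrac{2W^2}{k}$ plus explicit lower-order terms in $W^2/k^2$, $W^4/k^3$, $1/k$, $W^6/k^5$, and so on. To make this rigorous I would restrict to the event $\{|W|\le C\sqrt{k\ln k}\}$, on which the expansion is valid; since $|G_k|\le k\ln 2$ and $\Pr[|W|>C\sqrt{k\ln k}]\le 2k^{-2C^2}$ by Hoeffding's inequality, the complementary event contributes $o(k^{-N})$ to every moment once $C$ is large enough, so the truncated expansion may be used throughout.

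Then I would substitute the exact central moments of $\mathrm{Binomial}(k,\tfrac12)$ — $\mathbb E[W^2]=k/4$, $\mathbb E[W^4]=k(3k-2)/16$, $\mathbb E[W^6]=\tfrac{15}{64}k^3+O(k^2)$, and the next one — term by term. Summing the means gives $\mathbb E[G_k]=H(S)=\tfrac12\ln\tfrac{\pi e k}{2}-\tfrac1{12k^2}+O(k^{-3})$, which after multiplying by $1-\delta$ is \eqref{eq:addercapacity}. For the dispersion, the leading contribution to $\mathrm{Var}[G_k]$ is $\tfrac4{k^2}\mathrm{Var}[W^2]=\tfrac{k-1}{2k}=\tfrac12-\tfrac1{2k}$; the covariances of $\tfrac{2W^2}{k}$ with the correction terms conspire to cancel at order $1/k$ (the $-\tfrac{2W^2}{k^2}$ term and the $\tfrac{4W^4}{3k^3}$ term contribute $\mp\tfrac1k$) and leave the $-\tfrac1{2k^2}$ refinement, so $\mathrm{Var}[G_k]=\tfrac12-\tfrac1{2k}-\tfrac1{2k^2}+O\!\big(\tfrac{\ln k}{k^3}\big)$; finally $\delta(1-\delta)(\mathbb E[G_k])^2=(1-\delta)\big[\tfrac\delta4\ln^2\tfrac{\pi e k}{2}-\tfrac{\delta}{12k^2}\ln\tfrac{\pi e k}{2}\big]+O\!\big(\tfrac{\ln k}{k^4}\big)$, the $\ln k$ arising from squaring the $O(k^{-3})$ error in $\mathbb E[G_k]$ against $\tfrac12\ln\tfrac{\pi e k}{2}$. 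Assembling $V_k=(1-\delta)\mathrm{Var}[G_k]+\delta(1-\delta)(\mathbb E[G_k])^2$ produces \eqref{eq:adderdispersion}, with the stated $O(\tfrac{\ln k}{k^3})$ error.

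The main obstacle is precisely the bookkeeping in the last two steps: one must retain enough orders in the Stirling formula and in each Taylor series, and enough sub-leading terms in the Binomial central moments (through $\mathbb E[W^8]$ for the $1/k^2$ coefficient of $V_k$), so that all contributions of order $k^{-2}$ — and, for the dispersion, also the $k^{-3}\ln k$ term — are tracked consistently, including the nontrivial cancellations at order $1/k$, while also controlling the truncated tails. A secondary point requiring care is the optimality claim: a fully self-contained proof needs the entropy comparison $H(\sum_i\mathrm{Bernoulli}(p_i))\le H(\mathrm{Binomial}(k,\tfrac12))$, for which the coordinatewise-concavity-plus-symmetrization argument sketched above (or a Shepp–Olkin-type majorization argument) suffices.
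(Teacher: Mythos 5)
Your proposal is correct and follows essentially the same route as the paper: reduce $I_k$ and $V_k$ to the entropy and varentropy of a $\mathrm{Binomial}(k,\tfrac12)$ via the independence of the erasure event, invoke the Shepp--Olkin-type fact that the entropy of a sum of independent Bernoullis is maximized at parameter $\tfrac12$, and then expand the binomial log-pmf with a higher-order Stirling approximation restricted to a $\sqrt{k\ln k}$ window whose complement is controlled by Hoeffding's inequality, substituting the exact central moments of the binomial. The only (cosmetic) differences are that you Stirling-expand $\ln(1/P_S(S))$ directly rather than first approximating the pmf and then taking logarithms, and a harmless slip in labeling one cross-term error as $O(\ln k/k^4)$ when it is $O(\ln k/k^3)$ — still within the theorem's stated remainder.
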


The calculation leading to \thmref{thm:ikvk} is presented in Lemmas~\ref{lem:demoivre}--\ref{lem:binom}, which rely on Stirling's approximation and the Taylor series expansion.  

Consider a binomial random variable $X\sim \mbox{Binom}(n, 1/2)$.
\lemref{lem:demoivre}, below, shows that the probability mass that this Binomial distribution puts at $k$ is well approximated by
\begin{align}
\tilde{P}_X(k) &\triangleq \frac{1}{\sqrt{ \frac {\pi n} 2 }} e^{-\frac{(k - \frac n 2)^2}{{\frac n 2} }}  \left( 1 + \frac{f(k)}{n} + \frac{g(k)}{n^2} \right),  \label{eq:ptilde}
\end{align}
where
\begin{align}
f(x) &\triangleq -\frac 1 {12} \frac{\left( 2x - n \right)^4}{n^2} + \frac 1 2 \frac{\left( 2 x - n \right)^2}{n} - \frac 1 4 \\
g(x) &\triangleq \frac 1 {288} \frac{\left(2 x - n \right)^8}{n^4}  - \frac 3 {40} \frac{\left( 2 x - n \right)^6}{n^3}  + \frac {19}{48} \frac{\left( 2 x - n \right)^4}{n^2} \notag \\
&\quad - \frac {11}{24} \frac{\left(2 x - n \right)^2}{n} + \frac 1 {32}.
\end{align}
Define the interval
\begin{align}
\mathcal{K} \triangleq \left[\frac{n}{2} - \frac{A}{2}  \sqrt{n \log n}, \,\, \frac{n}{2} + \frac{A}{2}  \sqrt{n \log n} \right] \label{eq:kinterval}
\end{align}
for some constant $A > 0$. 

\begin{lemma} \label{lem:demoivre}
Let $X \sim \mathrm{Binom}(n, 1/2)$. Then for any $k \in \mathcal{K}$,
\begin{align}
P_X(k) = \binom{n}{k} 2^{-n} = \tilde{P}_X(k) \left( 1 + O \left( \frac {\log^6 n}{n^3} \right) \right). \label{eq:pptilde}
\end{align}
\end{lemma}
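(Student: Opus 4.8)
The plan is to pass to logarithms and apply Stirling's series factorial by factorial. Write $L(k)\triangleq\ln P_X(k)=\ln n!-\ln k!-\ln(n-k)!-n\ln 2$ and substitute $\ln m!=m\ln m-m+\frac12\ln(2\pi m)+\frac1{12m}-\frac1{360m^3}+O(m^{-5})$. Because $|k-\frac n2|\le\frac A2\sqrt{n\ln n}=o(n)$, both $k$ and $n-k$ are $\Theta(n)$, so the three Stirling remainders together contribute only $O(n^{-5})$, far below the target $O(\ln^6 n/n^3)$. I would then introduce the centred variable $t\triangleq 2k-n$ and $s\triangleq t/n$, so $k=\frac n2(1+s)$, $n-k=\frac n2(1-s)$, and membership $k\in\mathcal{K}$ forces $|s|=O(\sqrt{\ln n/n})$; equivalently, with $u\triangleq t^2/n=ns^2$ one has $u=O(\ln n)$, and every term that arises below can be written as a power $u^j/n^m$, whose magnitude is then $O(\ln^j n/n^m)$.

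The next step is to Taylor-expand $L(k)+\frac12\ln\frac{\pi n}{2}+\frac{t^2}{2n}$ in $s$, organised into three blocks:
\[
-\frac n2\big[(1+s)\ln(1+s)+(1-s)\ln(1-s)\big]+\frac{t^2}{2n}=-\sum_{m\ge 2}\frac{n\,s^{2m}}{2m(2m-1)}=-\frac{u^2}{12n}-\frac{u^3}{30n^2}-\frac{u^4}{56n^3}-\cdots,
\]
\[
-\frac12\ln(1-s^2)=\frac{s^2}{2}+\frac{s^4}{4}+\cdots=\frac{u}{2n}+\frac{u^2}{4n^2}+\frac{u^3}{6n^3}+\cdots,
\]
\[
\frac1{12}\Big(\frac1n-\frac1k-\frac1{n-k}\Big)-\frac1{360}\Big(\frac1{n^3}-\frac1{k^3}-\frac1{(n-k)^3}\Big)=-\frac1{4n}-\frac{u}{3n^2}+O(n^{-3}),
\]
using $(1+s)\ln(1+s)+(1-s)\ln(1-s)=\sum_{m\ge1}\frac{s^{2m}}{m(2m-1)}$, $k(n-k)=\frac{n^2}{4}(1-s^2)$, and $\frac1k+\frac1{n-k}=\frac{4}{n(1-s^2)}$. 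Collecting the $O(n^{-1})$ pieces gives exactly $\frac{f(k)}{n}$ with $f(k)=-\frac{u^2}{12}+\frac u2-\frac14$ (recall $u=(2k-n)^2/n$), and collecting the $O(n^{-2})$ pieces gives $\frac1{n^2}\big(\frac{u^2}{4}-\frac{u^3}{30}-\frac u3\big)$, which a term-by-term comparison identifies with $\frac1{n^2}\big(g(k)-\frac12 f(k)^2\big)$. Everything discarded is a power $u^j/n^m$ with $m\ge3$ (plus the $O(n^{-5})$ Stirling remainders), the largest being the degree-eight entropy term $u^4/n^3=O(\ln^4 n/n^3)$, so $L(k)=-\frac12\ln\frac{\pi n}{2}-\frac{t^2}{2n}+\frac{f(k)}{n}+\frac{g(k)-\frac12 f(k)^2}{n^2}+O(\ln^4 n/n^3)$ uniformly over $\mathcal{K}$.

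Finally I would exponentiate. Since $f(k)/n=O(\ln^2 n/n)$ and $\big(g(k)-\frac12 f(k)^2\big)/n^2=O(\ln^4 n/n^2)$, the exponential series gives
\[
e^{L(k)}=\frac{e^{-t^2/(2n)}}{\sqrt{\pi n/2}}\Big(1+\frac{f(k)}{n}+\frac{g(k)-\frac12 f(k)^2}{n^2}+\frac12\frac{f(k)^2}{n^2}+O\big(\tfrac{\ln^6 n}{n^3}\big)\Big)=\frac{e^{-t^2/(2n)}}{\sqrt{\pi n/2}}\Big(1+\frac{f(k)}{n}+\frac{g(k)}{n^2}+O\big(\tfrac{\ln^6 n}{n^3}\big)\Big),
\]
because the cubic term $\frac16(f/n)^3$ and the cross term $(f/n)\big(g-\frac12 f^2\big)/n^2$ are the only contributions of order $\ln^6 n/n^3$, while the $O(\ln^4 n/n^3)$ error from the previous step and $\frac12\big((g-\frac12 f^2)/n^2\big)^2=O(\ln^8 n/n^4)$ are $o(\ln^6 n/n^3)$. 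Factoring out $1+\frac fn+\frac g{n^2}$, which is bounded away from $0$ for large $n$, converts the additive error into the multiplicative factor $1+O(\ln^6 n/n^3)$ and yields \eqref{eq:pptilde}. The hard part is purely computational bookkeeping: one must confirm that the order-$n^{-2}$ coefficients of $L(k)$ collapse \emph{precisely} to $g(k)-\frac12 f(k)^2$, so that exponentiating reconstitutes the polynomial $1+\frac fn+\frac g{n^2}$ rather than something else, and one must verify that the remainder is genuinely $O(\ln^6 n/n^3)$ uniformly over all $k\in\mathcal{K}$, where $u$ can be as large as $\frac{A^2}{4}\ln n$; both reduce to tracking the highest power of $u$ attached to each power of $1/n$.
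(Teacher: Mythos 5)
Your proof is correct and follows essentially the same route as the paper's: Stirling's series applied to $n!$, $k!$, and $(n-k)!$ followed by a Taylor expansion in the deviation of $k$ from $n/2$, the only cosmetic difference being that you work additively with $\ln P_X(k)$ and the variable $u=(2k-n)^2/n$ while the paper expands the multiplicative Stirling correction in $x$ with $k=\tfrac n2+\tfrac x2\sqrt{n\ln n}$. Your key bookkeeping identity $g(k)-\tfrac12 f(k)^2=-\tfrac{u^3}{30}+\tfrac{u^2}{4}-\tfrac u3$ checks out, so exponentiating does reconstitute $1+\tfrac{f(k)}{n}+\tfrac{g(k)}{n^2}$ with the claimed $O(\ln^6 n/n^3)$ relative error coming from the $f^3/(6n^3)$ and $f\bigl(g-\tfrac12 f^2\bigr)/n^3$ terms.
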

\begin{IEEEproof}[Proof of \lemref{lem:demoivre}]
We apply Stirling's approximation \cite[eq.~(6.1.37)]{abramowitz1972handbook} 
\begin{align}
n! = \sqrt{2 \pi} n^{n + \frac 1 2} e^{-n} \left( 1 + \frac{1}{12n} + \frac{1}{288n^2 } + O(n^{-3}) \right), \label{eq:stirlingseries}
\end{align}
and a Taylor series expansion of $\binom{n}{k}$ around $x = 0$, where 
\begin{align}
k = \frac{n}{2} + \frac{x}{2} \sqrt{n \log n}, \label{eq:kequals}
\end{align}
to $P_X(k) = \binom{n}{k} 2^{-n}$, to derive \eqref{eq:pptilde}.
\ver{}{
\begin{align}
\phi(j) \triangleq 1 + \frac {1}{12 j} + \frac 1 {288j^2}.
\end{align}
By \lemref{lem:stirling}, we obtain
\begin{align}
\binom{n}{k} 2^{-n} &= \frac{1}{\sqrt{2 \pi}} \frac{2^{-n} n^{n + \frac 1 2}}{k^{k + \frac 1 2} (n-k)^{n-k + \frac 1 2}} \notag\\
&\,\, \cdot \frac{\phi(n)}{\phi(k) \phi(n-k)} + O(n^{-3}). \label{eq:binommult} 
\end{align}
Using \eqref{eq:kequals}, we expand the Taylor series around $x = 0$ of each factor in \eqref{eq:binommult} to get
\begin{align}
 &\left(\frac{ 2^{-n} n^{n + \frac 1 2}}{k^{k + \frac 1 2} (n-k)^{n-k + \frac 1 2}}\right) = \frac 1 {\sqrt { \frac n 4} } \exp \left\{ -  \frac {x^2 \log n}{2} \right\} \notag \\
 &\cdot \bigg( 1 - \frac { x^4 \log^2 n}{12n} + \frac{ x^2 \log n}{2n} + \frac{x^8 \log^4 n}{288 n^2} - \frac{3 x^6 \log^3 n}{40 n^2} \notag \\
 & + \frac{3 x^4 \log^2 n}{8 n^2} + O \left( \frac{\log^6 n}{n^3} \bigg) \right) \label{eq:taylor2},
\end{align}
and
\begin{align}
& \frac{\phi(n)}{\phi(k) \phi(n-k)} = 1 - \frac {1}{4n} + \frac{3 - 32x^2 \log n}{96n^2} + O \left( \frac{\log^2 n}{n^3} \right) \label{eq:taylor3}.
\end{align}

Substituting the Taylor series expansions in \eqref{eq:taylor2} and \eqref{eq:taylor3} into \eqref{eq:binommult} completes the proof.
}
\end{IEEEproof}

Let $V(X)$ 
\begin{align}
V(X)& = \mathrm{Var} \left[ \log \frac{1}{P_X(X)} \right].\label{eq:varent_sum}
\end{align}
denote the varentropy of $X$.

\begin{lemma}[Entropy and varentropy of $\mathrm{Binom}\left(n, 1/2 \right)$] \label{lem:binom}
For $X \sim \mathrm{Binom}\left(n, 1/2 \right)$, 
\begin{align}
H(X) &= \frac 1 2 \log  \frac {\pi e n} {2} - \frac{\log e}{12n^2} + O(n^{-3}) \label{eq:HXbinom}\\
V(X) &= \frac{\log^2 e} 2  - \frac{\log^2 e}{2n} -  \frac{\log^2 e}{2n^2} + O(n^{-3}). \label{eq:VXbinom}
\end{align}
\end{lemma}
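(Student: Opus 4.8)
The plan is to treat both quantities as functionals of the self-information $-\ln P_X(X)$ of $X\sim\mathrm{Binom}(n,\tfrac12)$, namely $H(X)=\E{-\ln P_X(X)}$ and $V(X)=\E{(\ln P_X(X))^2}-H(X)^2$, and to evaluate these using the uniform local expansion of $P_X$ supplied by \lemref{lem:demoivre}. Throughout I write $t=t(k)\triangleq(2k-n)/\sqrt{n}$, so that $t$ is a centered and unit-variance recentering of $X$, the exponent in \eqref{eq:ptilde} equals $t^2/2$, and $f,g$ there are even polynomials in $t$ (since $(2x-n)^{2}/n=t^2$, $(2x-n)^{4}/n^{2}=t^4$, and so on).

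First I would split the range of $k$ at the boundary of the interval $\mathcal{K}$ of \eqref{eq:kinterval}. For $k\notin\mathcal{K}$, Hoeffding's inequality gives $\Prob{X\notin\mathcal{K}}\le 2n^{-A^2/2}$, while trivially $|\ln P_X(k)|\le n\ln 2$ for every $k$; choosing the constant $A$ large enough (e.g.\ $A^2\ge 10$) therefore makes the contribution of $\{X\notin\mathcal{K}\}$ to $H(X)$ and to $\E{(\ln P_X(X))^2}$ — and likewise the tail mass of the approximating expressions introduced below — of order $o(n^{-3})$, hence negligible. On $\mathcal{K}$, \lemref{lem:demoivre} gives $-\ln P_X(k)=-\ln\tilde{P}_X(k)+(\text{small})$; the point of the construction is that one keeps the \emph{exact} weights $P_X(k)$ and replaces only the logarithm, so that the incurred error, after multiplication by $P_X(k)$ and summation over $\mathcal{K}$, stays of the target order.

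Next, on $\mathcal{K}$ equation \eqref{eq:ptilde} gives
\[
-\ln\tilde{P}_X(k)=\tfrac12\ln\tfrac{\pi n}{2}+\tfrac{t^2}{2}-\ln\!\Big(1+\tfrac{f(t)}{n}+\tfrac{g(t)}{n^2}\Big),
\]
and expanding the last logarithm through order $n^{-2}$ via $\ln(1+u)=u-\tfrac{u^2}{2}+O(u^3)$ turns $-\ln\tilde{P}_X(k)$ into $\tfrac12\ln\tfrac{\pi n}{2}+\tfrac{t^2}{2}-\tfrac{f(t)}{n}+\tfrac{f(t)^2/2-g(t)}{n^2}+O\!\big(n^{-3}\,\mathrm{poly}(t)\big)$, an explicit polynomial in $t$ up to an $n^{-3}$ remainder. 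It then remains to take the expectation and the second moment of this polynomial under $X\sim\mathrm{Binom}(n,\tfrac12)$. For that I would use the exact central moments of the binomial, $\E{t^2}=1$, $\E{t^4}=3-\tfrac2n$, $\E{t^6}=15-\tfrac{30}{n}+\tfrac{16}{n^2}$, $\E{t^8}=105+O(n^{-1})$, keeping the $O(n^{-1})$ corrections exactly where they multiply an $n^{-1}$ coefficient and using the Gaussian values $\E{Z^{2j}}=(2j-1)!!$ for moments that occur only inside an $n^{-2}$ coefficient. The $O(1)$ terms give $\tfrac12\ln\tfrac{\pi n}{2}+\tfrac12=\tfrac12\ln\tfrac{\pi e n}{2}$; the $O(n^{-1})$ term vanishes because the Gaussian value of $\E{f(t)}$ is $0$, leaving only an $O(n^{-2})$ residue; and collecting all $O(n^{-2})$ contributions yields the coefficient $-\tfrac1{12}$ of $n^{-2}$ in \eqref{eq:HXbinom}. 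For the varentropy the leading term is $\Var{t^2/2}=\tfrac14(\E{t^4}-1)=\tfrac12-\tfrac1{2n}$, and the cross terms between $t^2/2$ and the $-f(t)/n$ and $n^{-2}$ pieces, together with the subtraction of $H(X)^2$, supply the $-\tfrac1{2n^2}$ correction in \eqref{eq:VXbinom}.

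The main difficulty is bookkeeping rather than conceptual: one must propagate the $O(n^{-1})$ corrections to the binomial moments $\E{t^{2j}}$ for $j\le 4$ through their products with the Taylor coefficients of $\ln(1+\tfrac fn+\tfrac g{n^2})$ without sign or combinatorial errors, and — the genuinely delicate point — one must verify that the pointwise error of \lemref{lem:demoivre}, once integrated against $P_X$ over $\mathcal{K}$ and added to the tail estimate, is of order $n^{-3}$; this is exactly what dictates both the split at $\mathcal{K}$ with a sufficiently large constant $A$ and the use of the \emph{exact} binomial moments in place of a Gaussian approximation of the whole distribution.
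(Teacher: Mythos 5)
Your proposal follows essentially the same route as the paper's proof: split the sum at the interval $\mathcal{K}$, bound the tail via Hoeffding's inequality together with the trivial bound $\ln\frac{1}{P_X(k)}\le n\ln 2$, replace $\ln\frac{1}{P_X(k)}$ on $\mathcal{K}$ by the Taylor-expanded logarithm of $\tilde P_X(k)$ from \lemref{lem:demoivre}, and evaluate the resulting polynomial with the exact binomial central moments. The only point you gloss over is that the three-term Stirling expansion underlying \lemref{lem:demoivre} leaves a remainder of order $\ln^{6}n/n^{3}$ rather than the claimed $O(n^{-3})$; the paper closes this by redoing the computation with a fourth term of the Stirling series.
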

\begin{IEEEproof}[Proof of \lemref{lem:binom}]
\ver{}{We denote the centered version of $X$ by 
\begin{align}
\bar X \triangleq X - \frac n 2, 
\end{align}
and we calculate its moments as follows: 
\begin{align}
\mathbb{E}\left[\bar X^2\right ] &= \frac n 4  \notag \\
\mathbb{E}\left[\bar X^4\right ] &= \frac n {16} (3n - 2) \notag \\
\mathbb{E}\left[\bar X^6\right ] &= \frac n {64}  (15n^2 - 30 n + 16) \notag  \\
\mathbb{E}\left[\bar X^8\right ] &= \frac n {256} (105n^3 - 420 n^2 + 588 n - 272). \label{eq:binommoments}
\end{align}
}
Let $\tilde{T}(k)$ denote the first 3 terms of the Taylor series expansion of $\log \frac 1{\tilde{P}_X(k)} $ around $ \frac n 2$ evaluated at $k$, giving
\begin{align}
\tilde{T}(k) &\triangleq \frac 1 {2 } \log { \frac {\pi n} 2 } +  \log e \bigg(\frac {(k-\frac n 2)^2}{ \frac n 2} \notag \\
&\quad -  \frac{f(k)}{n} + \frac{-g(k) + \frac{f^2(k)}{2}}{n^2} \bigg). \label{eq:Tk}
\end{align}
Recall the definition of interval $\mathcal{K}$ from \eqref{eq:kinterval}. Then we can write the entropy $H(X)$ as
\begin{IEEEeqnarray}{rCl}
H(X) &=&  \sum_{k = 0}^n  \frac{\binom{n}{k}}{2^n} \log \left( \frac{2^n}{\binom{n}{k}}  \right)  \label{eq:binoment} \\
&=& \E{\tilde{T}(X)} \notag \\
&& + \E{ \left( \log \frac 1{P_X(X) } - \tilde{T}(X) \right) 1\{ X \in \mathcal{K} \} } \notag \\
&& + \E{ \left( \log \frac 1{P_X(X)} - \tilde{T}(X) \right) 1\{ X \notin \mathcal{K} \} }. \IEEEeqnarraynumspace  \label{eq:HX2}
\end{IEEEeqnarray}
Using the moments of $\text{Binom}\left(n, 1/2 \right)$ (e.g., \cite[eq.~(26.1.20)]{abramowitz1972handbook}), the first term in \eqref{eq:HX2} is 
\begin{align}
\E{\tilde{T}(X)} =  \frac 1 2 \log \frac{\pi e n}{2} -\frac{\log e}{12n^2}.  \label{eq:Tfirst}
\end{align}
By \lemref{lem:demoivre}, the second term in \eqref{eq:HX2} is
\begin{align}
 \E{ \left( \log \frac 1{P_X(X) } - \tilde{T}(X) \right) 1\{ X \in \mathcal{K} \} } = O \left( \frac{\log^6 n}{n^3}\right). \label{eq:Tsecond}
\end{align}
By Hoeffding's inequality, 
\begin{align}
\mathbb{P}\left[X \notin \mathcal{K} \right] \leq 2 n^{-\frac{A^2 \log e} 2 } \label{eq:hoeffding},
\end{align}
where $A$ is the constant in \eqref{eq:kinterval}.
Since the minimum of $P_X(k)$ over $k$ is achieved at $k = n$, using \eqref{eq:hoeffding}, we get 
\begin{align}
\E{  \log \frac 1{P_X(X) } 1\{ X \notin \mathcal{K} \} }  = O \left( \frac{\log^6 n}{n^3}\right) \label{eq:hoeffdingsmall}
\end{align} 
for $A \geq \frac{3}{\sqrt{\log e}}$. Similarly, by taking the derivative of $\tilde{T}(k)$, one can show that $\tilde{T}(k) \leq \tilde{T}(n) \leq n$ for all $k \in [0, n]$, which gives 
\begin{align}
\E{  \tilde{T}(X)  1\{ X \notin \mathcal{K} \} }  = O \left( \frac{\log^6 n}{n^3}\right). \label{eq:Tthird2}
\end{align}
Combining \eqref{eq:HX2}--\eqref{eq:Tsecond}, \eqref{eq:hoeffdingsmall}--\eqref{eq:Tthird2} gives 
\begin{align}
H(X) = \frac 1 2 \log \frac{\pi e n}{2} -\frac{\log e }{12n^2} + O \left( \frac{\log^6 n}{n^3} \right) \label{eq:bmoments2}.
\end{align}

Via an argument similar to \eqref{eq:hoeffdingsmall} and \eqref{eq:Tthird2}, we can show that for $A \geq \frac{4}{\sqrt{\log e}}$, the contribution of $k \notin \mathcal{K}$ to the varentropy is $O \left( \frac{\log^6 n}{n^3} \right) $. Therefore, using the moments of $\text{Binom}(n, 1/2)$ and \lemref{lem:demoivre}, we can approximate the varentropy $V(X)$ as 
\begin{align}
V(X) &= \E{\log^2 \frac 1 {P_X(X)} } - (H(X))^2 \\
& = \E{(\tilde{T}(X))^2} - (H(X))^2 + O \left( \frac{\log^6 n}{n^3} \right) \\
%& = \frac 1 4 \log^2 2 \pi \sigma^2 +  \mathbb{E}\left[ \frac {\bar{X}^4}{4 \sigma^4} \right] + \frac{1}{2} \log 2 \pi \sigma^2 \, \mathbb{E}\left[\frac{\bar{X}^2}{\sigma^2} \right] \notag \\
%& \quad - \frac 1 4 \log^2 2 \pi e \sigma^2 + \frac{1}{n} \Bigg[ { -\log 2 \pi \sigma^2  }  \mathbb{E}\left[f\left( X  \right)\right]   \notag \\
%& \quad - \mathbb{E} \left[ \frac{f(X) \bar{X}^2}{\sigma^2} \right] \Bigg] + \frac{1}{n^2} \Bigg[  -\log 2\pi \sigma^2  \mathbb{E}\left[ g\left(X \right) \right]  \notag \\
%& \quad + \left( \frac{  \log 2\pi \sigma^2 }{2 }  + 1 \right) \mathbb{E}\left[ f^2 \left(X  \right) \right] +   \frac 1 2 \mathbb{E}\left[ \frac{f^2(X) \bar{X}^2}{\sigma^2} \right] \notag \\
%& \quad + \frac 1 {12 } { \log 2\pi e \sigma^2 } -  \mathbb{E}\left[ \frac{g(X) \bar{X}^2}{\sigma^2} \right] \Bigg ] + O \left( \frac{\log^6 n}{n^3} \right) \notag \\
&=  \log^2 e \left(\frac{1}{2} - \frac {1} {2n} - \frac {1} {2n^2} \right) + O \left( \frac{\log^6 n}{n^3} \right).
\end{align}

The above analyses use the first 3 terms of the Stirling series \eqref{eq:stirlingseries} to obtain the remainder $O \left( \frac{\log^6 n}{n^3} \right)$. Applying the same analyses with 4 terms of the Stirling series improves the remainder to $O(n^{-3})$, as claimed in \eqref{eq:HXbinom} and \eqref{eq:VXbinom} in the statement of \lemref{lem:binom}.
\end{IEEEproof}
We are now equipped to prove \thmref{thm:ikvk}.
\begin{IEEEproof}[Proof of \thmref{thm:ikvk}]
Define
\begin{align}
E \triangleq 1\{Y = \mathsf{e} \}.
\end{align} 
By the chain rule for entropy, we have for the adder-erasure RAC
\begin{align}
I_k(X_{[k]}; Y_k) &= H(Y_k) - H(Y_k | X_{[k]}) \\
&= H(Y_k, E) - H(E) \\
&= H(Y_k | E) \\
&= (1-\delta) H(Y_k | E = 0). 
\end{align}
Given the independent inputs $X_i \sim \mbox{Bernoulli}(p_i)$ for $i \in [k]$, $H(Y_k | E = 0)$ is equal to the entropy of the sum of $k$ independent Bernoulli random variables with parameters $(p_1, \dots, p_k)$, which is maximized when $p_i = 1/2 $ for all $i$ \cite{shepp1981multinomial}. Therefore, for any $\delta \in [0, 1]$, the equiprobable input distribution at all encoders, $X_i^* \sim \text{Bernoulli}(1/2)$, maximizes the mutual information $I_k(X_{[k]}; Y_k)$ for all $k$. Let $(X_{[k]}^* Y_k^*) \sim P_{X_{[k]}^*} P_{Y_k|X_{[k]}}$. Then
\begin{align}
I_k(X_{[k]}^*; Y_k^*) = (1-\delta) H(Z),
\end{align}
where $Z \sim \mbox{Binom}(k, 1/2)$, and \eqref{eq:addercapacity} follows from \lemref{lem:binom}.
Furthermore,
\begin{align}
\imath_k(X^*_{[k]}; Y_k^*) = \begin{cases}
0 &\mbox{ w.p. } \delta \\
\log \frac{2^k}{\binom{k}{i}} &\mbox{ w.p. } (1-\delta) \frac{\binom{k}{i}}{2^k},  \quad 0 \leq i \leq k \label{eq:midensityadder},\\ 
\end{cases}
\end{align}
which gives 
\begin{align}
V_k = \Var{\imath_k(X^*_{[k]}; Y_k^*)} 
&= (1-\delta) \left[ V(Z) + \delta (H(Z))^2 \right],
\end{align}
and \eqref{eq:adderdispersion} follows from \lemref{lem:binom}.
\end{IEEEproof}

\section{Bound on the Cardinality $|\mathcal{U}|$}\label{app:caratheodory}
\setcounter{equation}{0}
While the analysis in Section~\ref{sec:thmpe} employs common randomness $U$ with $|\mathcal{U}| = |\cX|^{Mn_K}$, \cite[Th.~19]{polyanskiy2011feedback} shows that $|\mathcal{U}| \leq K + 2$ suffices to achieve the optimal performance. \thmref{thm:caratheodory}, stated next, improves the cardinality bound on $|\mathcal{U}|$ from $K + 2$ \cite[Th.~19]{polyanskiy2011feedback} to $K + 1$ by using the connectedness of the set of achievable error vectors defined in \eqref{eq:Gu}.
\begin{theorem}\label{thm:caratheodory} If an $(M, \lbrace (n_k, \epsilon_k) \rbrace_{k = 0}^K )$ RAC code exists, then there exists an $(M, \lbrace (n_k, \epsilon_k) \rbrace_{k = 0}^K )$ RAC code with $|\mathcal{U}| \leq K + 1$. 
\end{theorem}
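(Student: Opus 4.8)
The plan is to translate the claim into a convex-geometry statement about $(K{+}1)$-dimensional vectors of conditional error probabilities and then invoke the connectedness-aware sharpening of Carath\'eodory's theorem due to Fenchel and Bunt: a connected set in $\mathbb{R}^{K+1}$ has the property that every point of its convex hull is a convex combination of at most $K+1$ of its points, rather than the $K+2$ of plain Carath\'eodory.

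First I would set up the reduction. Take any $(M,\{(n_k,\epsilon_k)\}_{k=0}^K)$ RAC code with common-randomness alphabet $\mathcal{U}$ and law $P_U$ --- for instance the code of Section~\ref{sec:codedesign}, for which $\mathcal{U}=\mathcal{X}^{Mn_K}$ --- and write $\bm{\epsilon}=(\epsilon_0,\dots,\epsilon_K)$ for the target. For each $u\in\mathcal{U}$, conditioning on $U=u$ yields a deterministic code; collect its conditional error probabilities into $\mathbf{e}(u)=(\varepsilon_0(u),\dots,\varepsilon_K(u))\in[0,1]^{K+1}$, where $\varepsilon_k(u)$ is the left-hand side of the error constraint of Definition~\ref{def:codecompound} evaluated at $U=u$. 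Since the overall error with $k$ active transmitters equals $\mathbb{E}_U[\varepsilon_k(U)]$, the code's guarantee is exactly $\mathbb{E}_U[\mathbf{e}(U)]\le\bm{\epsilon}$ componentwise. Conversely, any probability vector $(q_1,\dots,q_m)$ and $u_1,\dots,u_m\in\mathcal{U}$ with $\sum_{i=1}^m q_i\mathbf{e}(u_i)\le\bm{\epsilon}$ define a valid code with $|\mathcal{U}'|=m$ (take $P_{U'}(u_i)=q_i$; coordinatewise slack only lowers the true error). Thus it suffices to produce such data with $m\le K+1$.

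Then comes the geometric core, where connectedness is used. The target $\bm{\epsilon}$ lies in the set of achievable error vectors $\mathcal{G}_u=\overline{\mathrm{conv}}\{\mathbf{e}(u):u\in\mathcal{U}\}+\mathbb{R}_{\ge 0}^{K+1}$, which is closed, convex, full-dimensional, and in particular connected. Carath\'eodory applied directly to $\{\mathbf{e}(u):u\in\mathcal{U}\}$ in $\mathbb{R}^{K+1}$ gives only $m\le K+2$, the bound of \cite[Th.~19]{polyanskiy2011feedback}; the connectedness of $\mathcal{G}_u$ removes one point, as follows. The segment from $\bm{\epsilon}$ toward the origin must exit $\mathcal{G}_u$, so it meets $\partial\mathcal{G}_u$ at a point $\mathbf{p}\le\bm{\epsilon}$. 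At $\mathbf{p}$ there is a supporting hyperplane of $\mathcal{G}_u$ whose normal may be taken nonnegative (since $\mathcal{G}_u+\mathbb{R}_{\ge 0}^{K+1}=\mathcal{G}_u$), so $\mathbf{p}$ lies on a face of $\mathcal{G}_u$ of dimension at most $K$. On that face $\mathbf{p}=\mathbf{c}_0+\mathbf{r}$ with $\mathbf{r}\ge 0$ and $\mathbf{c}_0$ a convex combination of those $\mathbf{e}(u)$ that lie in the supporting hyperplane; since these vectors sit in a $K$-dimensional affine set, Carath\'eodory there trims $\mathbf{c}_0=\sum_{i=1}^{K+1}q_i\mathbf{e}(u_i)$, and then $\sum_{i=1}^{K+1}q_i\mathbf{e}(u_i)=\mathbf{c}_0\le\mathbf{p}\le\bm{\epsilon}$, which is exactly the data we wanted. (If $\mathcal{U}$ is connected and $u\mapsto\mathbf{e}(u)$ is continuous --- as for the Gaussian RAC, where decoding ties have probability zero --- the Fenchel--Bunt theorem applies verbatim to the connected set $\{\mathbf{e}(u):u\in\mathcal{U}\}$ and delivers the conclusion at once.)

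The step I expect to be the main obstacle is the passage to the closure: the $\mathbf{e}(u_i)$ handed back by Carath\'eodory a priori lie in $\overline{\{\mathbf{e}(u):u\in\mathcal{U}\}}$ and could be limits of error vectors of genuine deterministic codes rather than such vectors themselves. This is vacuous when $\mathcal{X}$, hence $\mathcal{U}$, is finite, and when $\mathcal{U}$ is compact with $\mathbf{e}$ continuous, so that $\{\mathbf{e}(u):u\in\mathcal{U}\}$ is already closed. In general I would remove it by a perturbation argument that exploits slack: if $\mathbb{E}_U[\mathbf{e}(U)]$ is strictly below $\bm{\epsilon}$ in some coordinate, approximating each limiting $\mathbf{e}(u_i)$ by a nearby $\mathbf{e}(u_i')$ keeps $\sum_{i=1}^{K+1}q_i\mathbf{e}(u_i')\le\bm{\epsilon}$; if every coordinate is tight, first relax $\bm{\epsilon}$ infinitesimally, or slide $\mathbf{p}$ to a relative-interior point of its face; and if some deterministic code already beats $\bm{\epsilon}$, take $|\mathcal{U}'|=1$. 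The remaining verifications --- the supporting-hyperplane/face bookkeeping and that $(\mathcal{U}',P_{U'})$ fulfils every clause of Definition~\ref{def:codecompound} --- are routine.
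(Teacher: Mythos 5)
Your proposal is correct, and its reduction to convex geometry (condition on $U=u$, collect the conditional error vectors $\mathbf{e}(u)$, observe that a code is exactly a convex combination dominating $\bm{\epsilon}$) is the same as the paper's. Where you diverge is in how the bound drops from the generic Carath\'eodory value $K+2$ to $K+1$. The paper's proof works with the set $G_1$ of error vectors achievable by \emph{deterministic} codes, notes that $G_1$ is upward closed (if $\bm{\epsilon}\in G_1$ and $\bm{\epsilon}'\ge\bm{\epsilon}$ then $\bm{\epsilon}'\in G_1$, because the definition only requires error ``at most'' $\epsilon_k$), and uses this to connect any two points of $G_1$ by a path through $\boldsymbol{1}$; connectedness of $G_1$ then lets it invoke the Fenchel--Eggleston--Carath\'eodory theorem as a black box to conclude $G=\mathrm{conv}(G_1)=G_{K+1}$. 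You instead reprove the needed strengthening from scratch: you use the recession cone $\mathbb{R}_{\ge 0}^{K+1}$ only to obtain a nonnegative normal at a boundary point $\mathbf{p}\le\bm{\epsilon}$, restrict to the supporting hyperplane (a $K$-dimensional affine set), and apply ordinary Carath\'eodory there. Your parenthetical appeal to Fenchel--Bunt requires $\{\mathbf{e}(u)\}$ itself to be connected, which fails in general; the paper's observation that one should pass to the up-set $G_1$ to \emph{manufacture} connectedness is the one idea your main line does not contain, though your face-reduction argument reaches the same conclusion without it. Each approach has something to recommend it: the paper's is two lines once the cited theorem is granted, while yours is self-contained and, unlike the paper, explicitly confronts the closure issue for infinite $\mathcal{X}$ (your perturbation/slack fix for limit points of $\{\mathbf{e}(u)\}$ is a legitimate repair of a gap the paper silently elides).
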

\begin{IEEEproof}[Proof of \thmref{thm:caratheodory}]
	For fixed $M, n_0, \dots, n_K$, let $G_u$ denote the set of achievable error vectors compatible with message size $M$, blocklengths $n_0, \dots, n_K$, and cardinality $|\mathcal{U}| \leq u$; that is,
	\begin{align}
	G_u &= \{(\epsilon'_0, \dots, \epsilon'_K): \exists (M, \lbrace (n_k, \epsilon'_k) \rbrace_{k = 0}^K ) \mbox { code with } \notag \\
	&\quad |\mathcal{U}| \leq u \}. \label{eq:Gu}
	\end{align}
	Let $G$ denote the set of achievable error vectors compatible with message size $M$ and blocklengths $n_0, \dots, n_K$; that is,
	\begin{align}
	G &= \{(\epsilon'_0, \dots, \epsilon'_K): \exists (M, \lbrace (n_k, \epsilon'_k) \rbrace_{k = 0}^K ) \mbox { code} \}.
	\end{align}
	As observed in \cite[Proof of Th.~19]{polyanskiy2011feedback}, $G = G_{|\cX|^{Mn_K}}$ is the convex hull of $G_1$. Indeed, every vector $(\epsilon_0', \dots, \epsilon_K')$ in $G$ is a convex combination of vectors in $G_1$, and the coefficients of the convex combination are determined by the distribution of the common randomness random variable $U$. 
	
	Furthermore, $G_1$ is a connected set. To see this, take any $\bm{\epsilon}_1, \bm{\epsilon}_2 \in G_1$. For any $\bm{\epsilon}' \geq \bm{\epsilon}$ with $\bm{\epsilon} \in G_1$, the line segments $L_i = \{\lambda \bm{\epsilon}_i + (1-\lambda) \boldsymbol{1} \colon \lambda \in [0, 1]\}$, $i = 1, 2$, also belong to $G_1$, and the path $L_1 \cup L_2$ connects $\bm{\epsilon}_1$ and $\bm{\epsilon}_2$. Therefore, $G_1$ is a connected set.
	
	Since $G = \textnormal{conv}(G_1) \subset \mathbb{R}^{K + 1}$, and $G_1$ is a connected set, by Fenchel-Eggleston-Carath\'eodory's theorem \cite[Th.~18~(ii)]{eggleston1958}, $G = G_{K + 1}$ holds. Therefore, $(\epsilon_0, \dots, \epsilon_K) \in G$ implies that $(\epsilon_0, \dots, \epsilon_K) \in G_{K + 1}$.
\end{IEEEproof}

\section{Composite Hypothesis Testing} \label{app:hypothesis}
\setcounter{equation}{0}
We begin with a lemma that is used in the proof of \thmref{thm:huangmoulin}. See \figref{fig:qq} for an illustration of \lemref{lem:generalminmax}.
\ver{}{
Throughout this section,
\begin{align}
\mathbf{Z} \sim \mathcal{N}(\boldsymbol{0}, \mathsf{V}).
\end{align}
\begin{theorem}[Multidimensional Berry-Esseen (Bentkus)\cite{bentkus2003}] \label{th:mbe}
Let $\mathbf{U}_1, \dots, \mathbf{U}_n$ be zero-mean i.i.d. random vectors in $\mathbb{R}^d$. Assume that the minimum eigenvalue of the covariance matrix $\mathsf{V} = \textnormal{Cov}[\mathbf{U_1}]$ satisfies $\lambda_{\min}(\mathsf{V}) > 0$, and that $T = \E{ \Vert{ \mathbf{U}_1 }\Vert_2^3 } < \infty$. Then
\begin{align}
\sup_{\mathcal{A} \in \mathfrak{C}_d} \left \lvert \Prob{\frac 1 {\sqrt{n}} \sum_{i = 1}^n \mathbf{U}_i \in \mathcal{A}} - \mathbb{P}[\mathbf{Z} \in \mathcal{A}] \right \rvert \leq \frac{400 d^{1/4} T}{\lambda_{\min}(\mathsf{V})^{3/2} \sqrt{n}}, \label{eq:multbe}
\end{align}
where $\mathfrak{C}_d$ is the set of all convex, Borel measurable subsets of $\mathbb{R}^d$.
\end{theorem}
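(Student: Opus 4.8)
The plan is to follow the classical Lindeberg-replacement route to a multivariate Berry-Ess\'een inequality over the class of convex sets, combining a standardization step, a Gaussian smoothing of the indicator of a convex set, and an isoperimetric estimate for the Gaussian measure of a thin shell around the boundary of a convex body. I would regard the sharp constant $400$ and the precise power $d^{1/4}$ as coming from \cite{bentkus2003}, and give here only the structure of the argument that produces a bound of the stated form.

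First I would reduce to the isotropic case. Write $\mathsf{V}=\mathsf{V}^{1/2}\mathsf{V}^{1/2}$ and put $\mathbf{U}_i'\triangleq\mathsf{V}^{-1/2}\mathbf{U}_i$, so that $\mathrm{Cov}[\mathbf{U}_1']=\mathsf{I}_d$. Since $\mathsf{V}^{-1/2}$ is a linear bijection of $\mathbb{R}^d$ it maps $\mathfrak{C}_d$ onto itself, so $\sup_{\mathcal{A}\in\mathfrak{C}_d}\left|\Prob{n^{-1/2}\sum_{i}\mathbf{U}_i\in\mathcal{A}}-\Prob{\mathbf{Z}\in\mathcal{A}}\right|$ is unchanged when the summands are replaced by their standardized versions and $\mathbf{Z}$ is replaced by a standard Gaussian $\mathbf{Z}\sim\mathcal{N}(\boldsymbol{0},\mathsf{I}_d)$. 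Moreover $\E{\|\mathbf{U}_1'\|_2^3}\leq\|\mathsf{V}^{-1/2}\|_{\mathrm{op}}^3\,T=\lambda_{\min}(\mathsf{V})^{-3/2}T$, which is exactly the factor on the right of \eqref{eq:multbe}; hence it suffices to prove the isotropic estimate $\sup_{\mathcal{A}}|\cdots|\leq 400\,d^{1/4}\,\E{\|\mathbf{U}_1'\|_2^3}/\sqrt{n}$.

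Second I would set up the smoothing and Lindeberg swap. Fix $\mathcal{A}\in\mathfrak{C}_d$ and a parameter $\delta>0$, and let $f_{\mathcal{A},\delta}$ be the convolution of the indicator of an outer $\delta$-dilation $\mathcal{A}_\delta$ with the density of $\mathcal{N}(\boldsymbol{0},\delta^2\mathsf{I}_d)$; then $0\leq f_{\mathcal{A},\delta}\leq 1$, $f_{\mathcal{A},\delta}$ dominates the indicator of $\mathcal{A}$, and $\|D^jf_{\mathcal{A},\delta}\|_{\infty}=\bigo{\delta^{-j}}$ for $j\leq 3$ with a dimension-free constant, while an analogous inner dilation gives a smooth lower bound. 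Let $N_1,\dots,N_n$ be i.i.d.\ $\mathcal{N}(\boldsymbol{0},\mathsf{I}_d)$ independent of the $\mathbf{U}_i'$, and telescope $\E{f_{\mathcal{A},\delta}(n^{-1/2}\sum_i\mathbf{U}_i')}-\E{f_{\mathcal{A},\delta}(n^{-1/2}\sum_iN_i)}$ as a sum over $i$ of differences in which only the $i$-th summand is switched from $\mathbf{U}_i'$ to $N_i$. Conditioning on the other summands and Taylor-expanding $f_{\mathcal{A},\delta}$ to second order, the zeroth, first, and second order terms cancel because $\mathbf{U}_i'$ and $N_i$ have matching mean and covariance, leaving a third-order remainder of size at most $\|D^3f_{\mathcal{A},\delta}\|_{\infty}\,\E{\|\mathbf{U}_1'\|_2^3}/n^{3/2}$ per index; summing the $n$ indices yields a total replacement error of order $\delta^{-3}\E{\|\mathbf{U}_1'\|_2^3}/\sqrt{n}$.

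Third, and this is the step I expect to be the main obstacle, I would remove the smoothing using the Gaussian surface regularity of convex sets: for every convex $\mathcal{A}$ and every $\delta>0$, $\Prob{\mathbf{Z}\in\partial_\delta\mathcal{A}}\leq c\,d^{1/4}\delta$, where $\partial_\delta\mathcal{A}$ is the $\delta$-neighborhood of $\partial\mathcal{A}$ and $\mathbf{Z}\sim\mathcal{N}(\boldsymbol{0},\mathsf{I}_d)$. This is the Gaussian analogue of the bound on the surface area of a convex body, and it is precisely where the power $d^{1/4}$ (rather than a worse power of $d$) is forced; it is also the technically heaviest input. Combining it with the two-sided smoothing sandwich gives $\sup_{\mathcal{A}}|\cdots|\leq c_1 d^{1/4}\delta+c_2\,\delta^{-3}\E{\|\mathbf{U}_1'\|_2^3}/\sqrt{n}$. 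A naive optimization over $\delta$ would only give an $n^{-1/8}$-type rate, so to reach the advertised $n^{-1/2}$ rate one must not fix $\delta$: instead, at the $i$-th replacement step the remaining $n-i$ Gaussian summands already smooth $f$ at scale $\sqrt{(n-i)/n}$, so only a bounded number of terms near the end of the telescope require the auxiliary $\delta$-Gaussian, and running the replacement as an induction on $n$ bootstraps the bound for $n-1$ summands into the bound for $n$. Tracking the universal constants through this induction is what produces the explicit value $400$, and for that final bookkeeping I would defer to \cite{bentkus2003}.
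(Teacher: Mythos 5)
This statement is not proved in the paper at all: it is an external result quoted verbatim from Bentkus \cite{bentkus2003}, so there is no in-paper argument to compare yours against. Your outline is a faithful description of how such bounds are actually proved --- the standardization step correctly accounts for the $\lambda_{\min}(\mathsf{V})^{-3/2}$ factor (since $\mathsf{V}^{-1/2}$ preserves $\mathfrak{C}_d$ and $\E{\|\mathsf{V}^{-1/2}\mathbf{U}_1\|_2^3}\leq\lambda_{\min}(\mathsf{V})^{-3/2}T$), the Lindeberg swap with Gaussian smoothing gives the $\delta^{-3}\,\E{\|\mathbf{U}_1'\|_2^3}/\sqrt{n}$ replacement error, and you correctly identify both the $O(d^{1/4})$ Gaussian surface-area bound for convex bodies and the induction-on-$n$ self-smoothing trick as the two ingredients that make the rate $n^{-1/2}$ rather than $n^{-1/8}$. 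However, be aware that what you have written is an annotated citation rather than a proof: the two steps you defer to \cite{bentkus2003} (the $d^{1/4}$ isoperimetric input and the inductive bookkeeping yielding the explicit constant $400$) are precisely the content of the theorem, so nothing in your sketch could be checked independently of the reference. That is entirely consistent with how the paper itself treats the result, but it should be presented as a citation with commentary, not as a proof.
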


The following lemma is the reversed version of \cite[Lemma 47]{polyanskiy2010Channel} for random vectors.
\begin{lemma} \label{thm:rev47mult}
	Under the assumptions of \thmref{th:mbe}, for any constant vector $\mathbf{C}$, there exists a constant vector $\mathbf{e}$ such that
\begin{align}
&\E{\exp\left\{ -\sum_{j = 1}^n \mathbf{U}_j \right\} 1 \left\{ \sum_{j = 1}^n \mathbf{U}_{j} \geq \sqrt{n}\mathbf{C} \right\} } \notag \\
& \geq  \exp\left\{-\sqrt{n}\mathbf{C}- \frac 1 2 \log n \boldsymbol{1}+  \mathbf{e} \right\},
\end{align}
where vector inequalities are understood element-wise.  
\end{lemma}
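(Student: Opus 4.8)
The plan is to prove the bound one coordinate at a time. Fix $i \in \{1,\dots,d\}$ and write $\mathbf{S}_n \triangleq \sum_{j=1}^n \mathbf{U}_j$, with $(\mathbf{S}_n)_i$ its $i$-th component. I want to produce a constant $e_i$ with $\E{\exp(-(\mathbf{S}_n)_i)\,\1{\mathbf{S}_n \geq \sqrt{n}\,\mathbf{C}}} \geq \exp\{-\sqrt{n}\,C_i - \tfrac{1}{2}\log n + e_i\}$, and then assemble $\mathbf{e} = (e_1,\dots,e_d)$. The guiding observation is that, although the indicator forces $(\mathbf{S}_n)_i \geq \sqrt{n}\,C_i$ and hence $\exp(-(\mathbf{S}_n)_i) \leq e^{-\sqrt{n}\,C_i}$, the sub-event on which $(\mathbf{S}_n)_i$ overshoots this threshold by only an additive $O(1)$ already has probability of order $n^{-1/2}$, and it is precisely this probability that supplies the $\tfrac{1}{2}\log n$ term.

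Concretely, I would fix a constant $a > 0$ (to be chosen later) and restrict the expectation to the convex slab $\mathcal{A}_n \triangleq \{\mathbf{x} : C_i \leq x_i \leq C_i + a/\sqrt{n},\ x_\ell \geq C_\ell\ \text{for all } \ell \neq i\}$, i.e.\ to the event $\{\mathbf{S}_n/\sqrt{n} \in \mathcal{A}_n\}$. On that event one has simultaneously $\mathbf{S}_n \geq \sqrt{n}\,\mathbf{C}$ and $(\mathbf{S}_n)_i \leq \sqrt{n}\,C_i + a$, so $\E{\exp(-(\mathbf{S}_n)_i)\,\1{\mathbf{S}_n \geq \sqrt{n}\,\mathbf{C}}} \geq e^{-\sqrt{n}\,C_i - a}\,\Prob{\mathbf{S}_n/\sqrt{n} \in \mathcal{A}_n}$. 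It then remains to lower-bound $\Prob{\mathbf{S}_n/\sqrt{n} \in \mathcal{A}_n}$ by $b_i/\sqrt{n}$ for some constant $b_i > 0$ and all large $n$.

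For that I would invoke the multidimensional Berry-Ess\'een inequality \thmref{th:mbe}: since that bound is uniform over convex Borel sets, $|\Prob{\mathbf{S}_n/\sqrt{n} \in \mathcal{A}_n} - \Prob{\mathbf{Z} \in \mathcal{A}_n}| \leq 400\,d^{1/4}\,T\,\lambda_{\min}(\mathsf{V})^{-3/2}\,n^{-1/2}$. Setting $\psi(t) \triangleq \int_{z_\ell \geq C_\ell,\,\ell\neq i} f_{\mathbf{Z}}(t,\mathbf{z}_{-i})\,d\mathbf{z}_{-i}$, positive definiteness of $\mathsf{V}$ makes $\psi(C_i) > 0$, and $\psi$ is continuous, so for $n$ large $\Prob{\mathbf{Z} \in \mathcal{A}_n} = \int_{C_i}^{C_i + a/\sqrt{n}}\psi(t)\,dt \geq a\,\psi(C_i)/(2\sqrt{n})$. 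Choosing $a$ large enough that $a\,\psi(C_i)/2 > 400\,d^{1/4}\,T\,\lambda_{\min}(\mathsf{V})^{-3/2}$ gives $\Prob{\mathbf{S}_n/\sqrt{n} \in \mathcal{A}_n} \geq b_i/\sqrt{n}$ with $b_i \triangleq a\,\psi(C_i)/2 - 400\,d^{1/4}\,T\,\lambda_{\min}(\mathsf{V})^{-3/2} > 0$. Combining with the previous display yields the per-coordinate inequality with $e_i = \log b_i - a$; decreasing $e_i$ if needed handles the finitely many remaining (small) $n$ on which the left side is positive, which suffices since in the intended application $n \to \infty$.

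The hard part is exactly the last step's calibration: the Gaussian probability of the shrinking slab $\mathcal{A}_n$ and the Berry-Ess\'een remainder are both of order $n^{-1/2}$, so the argument only closes if the slab width $a$ is taken large enough that the explicit Gaussian contribution strictly dominates the remainder — and that, in turn, is precisely where the hypothesis $\lambda_{\min}(\mathsf{V}) > 0$ enters, since it is what guarantees the non-degeneracy $\psi(C_i) > 0$ of the conditional orthant probability defining $\psi$. The rest is routine bookkeeping.
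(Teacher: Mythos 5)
Your proposal is correct and follows essentially the same route as the paper's proof: restrict the indicator to a thin convex set that pins the $i$-th coordinate within an additive $O(1)$ of $\sqrt{n}C_i$, apply the multidimensional Berry--Ess\'een bound (valid for convex sets) to show that set has probability $\Omega(1/\sqrt{n})$, and choose the slab thickness large enough that the Gaussian contribution dominates the Berry--Ess\'een remainder. The only difference is cosmetic --- the paper uses a bounded box and lower-bounds its Gaussian probability by volume times the minimum density, whereas you use a half-infinite slab and continuity of the orthant integral $\psi$ at $C_i$.
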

\begin{IEEEproof}
Fix constants $c > 0$, and $t \in [d]$.
By \thmref{th:mbe}, we have
\begin{align}
&\Prob{\mathbf{C} \leq \frac{1}{\sqrt{n}} \sum_{j = 1}^n \mathbf{U}_j  \leq \mathbf{C} + \mathbf{a}_{t, n} } \notag \\
&\quad \geq \Prob{ \mathbf{C} \leq  \mathbf{Z} \leq \mathbf{C} + \mathbf{a}_{t, n} } - \frac{\gamma}{\sqrt{n}},\label{eq:mbeused8}
\end{align}
where 
\begin{align}
\mathbf{a}_{t, n} = \left(c, \dots, c, \frac{\delta}{\sqrt{n}}, c, \dots, c\right),
\end{align}
and $\delta$ in the $t^{\mathrm{th}}$ coordinate of $\mathbf{a}_{t, n}$ is a positive constant to be chosen later, and  
\begin{align}
\gamma =  \frac{400 k^{1/4} T}{\lambda_{\min}(\mathsf{V})^{3/2}} \label{eq:gammadef}
\end{align}
is the constant from \eqref{eq:multbe}.
For each $t$, denote the minimum of the pdf of $\mathbf{Z}$ over the set $\{\mathbf{z} : \mathbf{C} \leq \mathbf{z} \leq \mathbf{C} + \mathbf{a}_{t, n} \}$ by
\begin{align}
D_{t, n} \triangleq \min_{\mathbf{C} \leq \mathbf{z} \leq \mathbf{C} + \mathbf{a}_{t, n}} \frac{\exp\{ -\frac 1 2 \mathbf{z}^T \mathsf{V}^{-1} \mathbf{z} \} }{\sqrt{(2\pi)^d |\mathsf{V}|}}.
\end{align}
Since $\frac{\delta}{\sqrt{n}}$ decreases with $n$, 
\begin{align}
\inf_n D_{t, n} = D_{t, 1} > 0, \label{eq:Dt1}
\end{align}
and by \eqref{eq:mbeused8} and \eqref{eq:Dt1}, we get
\begin{align}
\Prob{\mathbf{C} \leq \frac{1}{\sqrt{n}} \sum_{j = 1}^n \mathbf{U}_j  \leq \mathbf{C} + \mathbf{a}_{t, n} } \geq \frac{\delta}{\sqrt{n}} c^{d-1} D_{t, 1} - \frac{\gamma}{\sqrt{n}}. \label{eq:reverseprob}
\end{align} 
We conclude
\begin{align}
&\E{\exp\left\{ -\sum_{j = 1}^n \mathbf{U}_j \right\}  1 \left\{ \sum_{j = 1}^n \mathbf{U}_{j} \geq \sqrt{n} \mathbf{C} \right\} }  \label{eq:reverseind}\\
&\geq \exp\left\{-(\sqrt{n}\mathbf{C} + \delta \boldsymbol{1}) - \frac 1 2 \log n \boldsymbol{1} \right\} \left(\delta c^{d-1} \min_{t \in [d]} D_{t, 1} - \gamma \right). \label{eq:reverselast}
\end{align}
To obtain \eqref{eq:reverselast}, we lower bounded the indicator in \eqref{eq:reverseind} by $1\left\{ \mathbf{C} \leq \frac 1 {\sqrt{n}}\sum_{j = 1}^n \mathbf{U}_j  \leq \mathbf{C} + \mathbf{a}_{t, n} \right\}$, and used \eqref{eq:reverseprob}.
The proof is completed by picking $\delta = 1 + \frac{\gamma}{c^{d-1} \min_{t \in [d]} D_{t, 1}}$.
\end{IEEEproof}
}

\begin{lemma}\label{lem:generalminmax}
Let $f \colon \mathbb{R}^d \to \mathbb{R}$ be a continuous function that satisfies coordinate-wise partial ordering, i.e., $f(\mathbf{x}) \leq f(\mathbf{y})$ for any $\mathbf{x}, \mathbf{y} \in \mathbb{R}^d$ with $\mathbf{x} \leq \mathbf{y}$. Then for any $a$ in the image of $f$ (denoted $a \in \mathrm{Im} f$), it holds that
\begin{align}
b^\star = \min_{ \mathbf{b} \in \mathbb{R}^d: f(\mathbf{b}) \geq a} \max_{1\leq j \leq d} b_j = \min_{\substack{x \in \mathbb{R}: f(x \boldsymbol{1}) \geq a}} x. \label{eq:lemmaminmax}
\end{align}
\end{lemma}
%\begin{IEEEproof}[Proof of \lemref{lem:generalminmax}]
%Fix $\mathbf{b} \in \mathbb{R}^d$ such that $f(\mathbf{b}) \geq a$, and let 
%\begin{align}
%b_{\max} = \max_{1\leq j \leq d} b_j.
%\end{align}
%Since $f$ is nondecreasing, $a \leq f(\mathbf{b}) \leq f({b_{\max}} \boldsymbol{1})$. Furthermore, since the function mapping $b \to f(b \boldsymbol{1})$ is continuous and nondecreasing, and since $a \in \textnormal{image} f$, the intermediate value theorem implies that there exists some $b \in \mathbb{R}$ such that $f(b \boldsymbol{1}) = a$. Equation \eqref{eq:lemmaminmax} follows. 
%\end{IEEEproof}
\begin{IEEEproof}
Since $a \in \mathrm{Im} f$, there exists some $\mathbf{b} \in \mathbb{R}^d$ such that $f(\mathbf{b}) = a$. Denote by $b_{\min}$ and $b_{\max}$ the minimum and maximum components of $\mathbf{b}$, respectively.
Since $f$ is nondecreasing, \begin{align}
f(b_{\min}\boldsymbol{1})\leq  a = f(\mathbf{b}) \leq f({b_{\max}} \boldsymbol{1}).
\end{align}
Therefore, since the function mapping $b$ to $f(b \boldsymbol{1})$ is continuous and nondecreasing, by the intermediate value theorem there exists some $b \leq b_{\max}$ such that $f(b \boldsymbol{1}) = a$. Equation \eqref{eq:lemmaminmax} follows.
\end{IEEEproof}
\begin{figure}[htbp] 
    \centering
    	\includegraphics[width=0.48\textwidth]{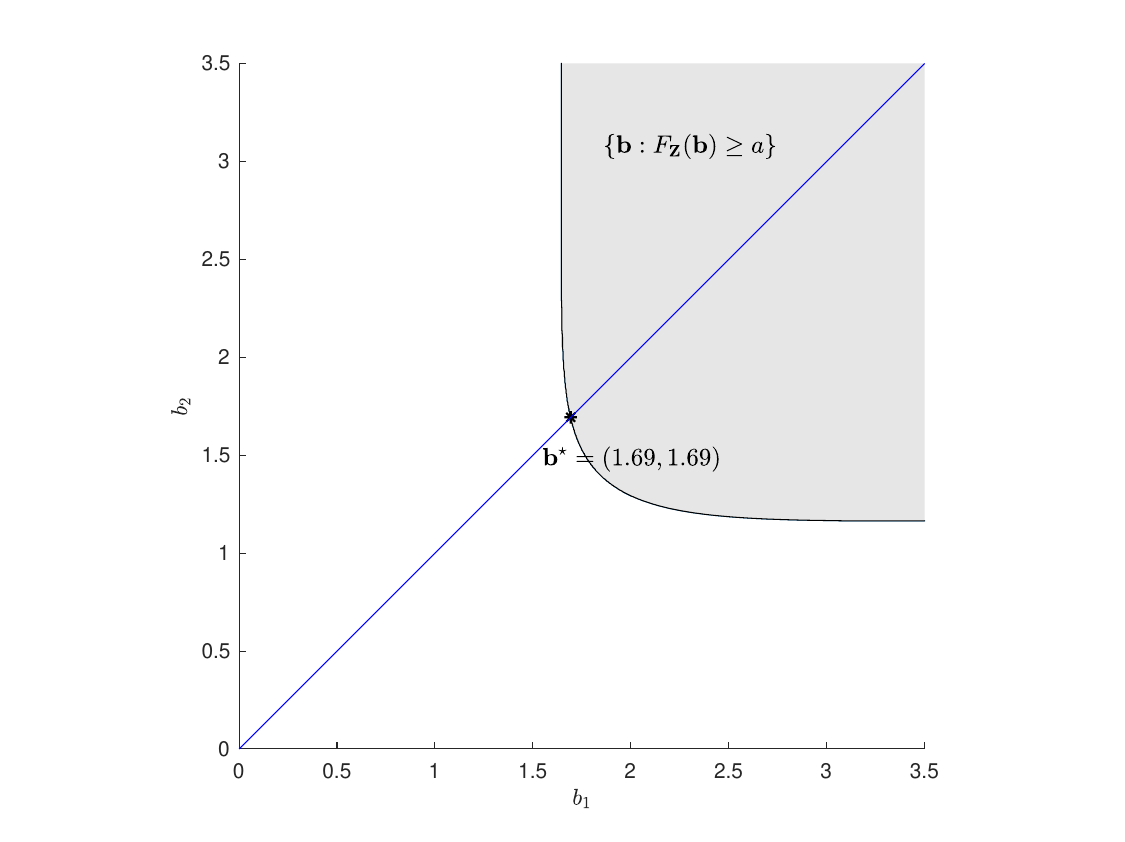}     
    	    \caption{An example to illustrate \lemref{lem:generalminmax}. Here $f(\mathbf{b}) = F_{\mathbf{Z}}(\mathbf{b})$ is the CDF of $\mathbf{Z} \sim \mathcal{N}(\boldsymbol{0}, \mathsf{V})$, where $\mathsf{V} = \left[\protect \begin{smallmatrix}1&0.4\\0.4&0.5 \protect \end{smallmatrix}\right]$. The shaded region illustrates the set $\{\mathbf{b} \in \mathbb{R}^2: f(\mathbf{b}) \geq a = 0.95\}$. \lemref{lem:generalminmax} shows that the minimax on this set is achieved at a point described by a scalar multiple of $\boldsymbol{1}$. For this example, the optimizer is $\mathbf{b}^\star = (1.69, 1.69)$. }
    	    \label{fig:qq} 
\end{figure}

Let $\mathbf{Z} \sim \mathcal{N}(\boldsymbol{0}, \mathsf{V})$. Define the multidimensional counterpart of the function $Q^{-1}(\cdot)$ as
\begin{align}
\mathcal{Q}_{\textnormal{inv}}(\mathsf{V}, \epsilon) \triangleq \left\{ \mathbf{z} \in \mathbb{R}^K : \Prob{\mathbf{Z} \leq \mathbf{z}} \geq 1 - \epsilon \right\}. \label{eq:qinvvec}
\end{align}

\ver{}{
\begin{theorem}\label{thm:huangregion}
Under the assumptions of \thmref{thm:huangmoulin}, it holds that
\begin{align}
&\exp\left\{ -n \mathbf{D} + \sqrt{n} \mathcal{Q}_{\textnormal{inv}}(\mathsf{V}, \epsilon_0) - \frac 1 2 \log n \boldsymbol{1} + O(1) \boldsymbol{1}\right\} \notag \\
&\subset \mathcal{E}_{\epsilon_0} (P_{Y_0}, \{P_{Y_k}\}_{k = 1}^K )  \notag \\
&\subset \exp\{ -n \mathbf{D} + \sqrt{n} \mathcal{Q}_{\textnormal{inv}}(\mathsf{V}, \epsilon_0) + o(\sqrt{n}) \boldsymbol{1}\} \label{eq:huang_conv}.
\end{align}
\end{theorem}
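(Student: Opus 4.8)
The plan is to prove the two inclusions of \thmref{thm:huangregion} separately. The left (achievability) inclusion I would obtain from an explicit family of log-likelihood-ratio (LLR) threshold tests indexed by $\mathcal{Q}_{\textnormal{inv}}(\mathsf{V},\epsilon_0)$; the right (converse) inclusion from reducing an arbitrary test to a Bayes-optimal test against a mixture of the alternatives and analyzing that. Both directions hinge on the multidimensional Berry--Ess\'een bound of Bentkus, \thmref{th:mbe}, applied to the i.i.d.\ sum $h_n^{\mathrm{LLR}}(Y_0^n)=\sum_{i=1}^n h_1^{\mathrm{LLR}}(Y_{0,i})$, which under $P_{Y_0}^n$ has mean $n\mathbf{D}$ and covariance $n\mathsf{V}$; here I write $D_k\triangleq(\mathbf{D})_k=D(P_{Y_0}\|P_{Y_k})$. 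The assumptions of \thmref{thm:huangmoulin} ($\mathsf{V}\succ0$, finite third moment $T$) are precisely what make \thmref{th:mbe} and \lemref{thm:rev47mult} usable.

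For the inner bound, for each $\bm{\tau}\in\mathcal{Q}_{\textnormal{inv}}(\mathsf{V},\epsilon_0)$ I would use the test that accepts $H_0$ iff $h_n^{\mathrm{LLR}}(y^n)\ge\bm{\gamma}$ with $\bm{\gamma}=n\mathbf{D}-\sqrt{n}\bm{\tau}-c\boldsymbol{1}$ for an $O(1)$ slack $c$. Centering and \thmref{th:mbe} give type-I error $1-\Prob{\mathbf{Z}\le\bm{\tau}+cn^{-1/2}\boldsymbol{1}}+O(n^{-1/2})$ with $\mathbf{Z}\sim\mathcal{N}(\boldsymbol{0},\mathsf{V})$, which is at most $\epsilon_0$ once $c$ is large enough (this is how the constraint is met exactly rather than only up to $O(n^{-1/2})$). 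For the type-II error at $P_{Y_k}$ I would change measure via $dP_{Y_k}^n/dP_{Y_0}^n=\exp\{-(h_n^{\mathrm{LLR}})_k\}$ to write $e_k=e^{-nD_k}\,\E{\exp\{-(\textstyle\sum_i\mathbf{U}_i)_k\}\1{\textstyle\sum_i\mathbf{U}_i\ge -\sqrt{n}\bm{\tau}-c\boldsymbol{1}}}$ with $\mathbf{U}_i\triangleq h_1^{\mathrm{LLR}}(Y_{0,i})-\mathbf{D}$, and then sandwich this expectation: the upper bound from \cite[Lemma~47]{polyanskiy2010Channel} applied to the scalar i.i.d.\ sum $(\sum_i\mathbf{U}_i)_k$, and the matching lower bound from \lemref{thm:rev47mult}, yielding $e_k=\exp\{-nD_k+\sqrt{n}\tau_k-\tfrac12\log n+O(1)\}$. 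As $\bm{\tau}$ ranges over $\mathcal{Q}_{\textnormal{inv}}(\mathsf{V},\epsilon_0)$ these points fill the left-hand set, and upward-closedness of $\mathcal{E}_{\epsilon_0}$ finishes this inclusion.

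For the outer bound, I would use that for a simple null against a finite composite alternative the Pareto boundary of $\mathcal{E}_{\epsilon_0}$ is swept out by Bayes-optimal tests for the mixtures $P_{\bm{\lambda}}^n=\sum_k\lambda_k P_{Y_k}^n$. For each $\bm{\lambda}$ in the simplex, a test with type-I $\le\epsilon_0$ obeys $\sum_k\lambda_k e_k\ge\beta^\star_{\epsilon_0}(P_{Y_0}^n,P_{\bm{\lambda}}^n)$ by Neyman--Pearson for the binary test $P_{Y_0}^n$ vs.\ $P_{\bm{\lambda}}^n$; since $dP_{\bm{\lambda}}^n/dP_{Y_0}^n=\sum_k\lambda_k\exp\{-(h_n^{\mathrm{LLR}})_k\}$, the optimal such test is, up to $o(\sqrt{n})$, an LLR threshold test $\1{h_n^{\mathrm{LLR}}\ge\bm{\gamma}}$ with $\gamma_k=c+\log\lambda_k$. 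Thus LLR threshold tests form an asymptotically complete class (the residual loss being $O(1/n)$, cf.\ \cite{huang2014strong}), and analyzing them by \thmref{th:mbe} and \lemref{thm:rev47mult} shows the best attainable error vector is $\exp\{-n\mathbf{D}+\sqrt{n}\bm{\tau}+o(\sqrt{n})\boldsymbol{1}\}$ with $\bm{\tau}$ on the boundary of $\mathcal{Q}_{\textnormal{inv}}(\mathsf{V},\epsilon_0)$; log-concavity of the Gaussian c.d.f.\ $\bm{\tau}\mapsto\Prob{\mathbf{Z}\le\bm{\tau}}$ makes $\mathcal{Q}_{\textnormal{inv}}(\mathsf{V},\epsilon_0)$ convex, so taking the envelope over $\bm{\lambda}$ returns exactly this set. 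Only the coordinates $k\in\mathcal{I}_{\min}$ influence the exponent, so the computation really happens on $\mathsf{V}_{\min}$ — which is what later produces \thmref{thm:huangmoulin} after intersecting with the diagonal by \lemref{lem:generalminmax}.

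The hard part will be the converse. A naive change of measure — restricting the acceptance region to $\{h_n^{\mathrm{LLR}}\le\bm{\gamma}\}$ and using a single coordinate's likelihood ratio — only delivers the coordinate-wise box $\prod_k[\sqrt{\mathsf{V}_{kk}}\,Q^{-1}(\epsilon_0),\infty)$, which is strictly larger than $\mathcal{Q}_{\textnormal{inv}}(\mathsf{V},\epsilon_0)$; recovering the genuine joint Gaussian region forces one through the reduction to Bayes/LLR tests above, the second-order analysis of the non-product mixtures $P_{\bm{\lambda}}^n$ uniformly in $\bm{\lambda}$, and a convex-duality step to glue the per-$\bm{\lambda}$ bounds into one inclusion — all while keeping the slack $o(\sqrt{n})$. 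I expect the uniformity in $\bm{\lambda}$ and this gluing to be the most delicate point.
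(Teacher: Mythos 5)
Your achievability direction is essentially the paper's own proof: an LLR threshold test with $\bm{\gamma}=n\mathbf{D}-\sqrt{n}\bm{\tau}+O(1)\boldsymbol{1}$ for $\bm{\tau}\in\mathcal{Q}_{\textnormal{inv}}(\mathsf{V},\epsilon_0)$, \thmref{th:mbe} for the type-I constraint, and a change of measure followed by \cite[Lemma~47]{polyanskiy2010Channel} for the type-II errors (the paper in fact only needs the upper bound on $e_k$, relaxing the joint indicator $\1{\mathbf{S}_0\geq\bm{\tau}}$ to the single coordinate $\1{S_{0k}\geq\tau_k}$; your additional matching lower bound via \lemref{thm:rev47mult} is harmless). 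For the converse, you take a genuinely different route on the reduction step: the paper simply imports from \cite{huang2014strong} that LLR threshold tests are within $O(1/n)$ of optimal, whereas you re-derive essential completeness via Bayes tests against mixtures $P_{\bm{\lambda}}^n$ and a supporting-hyperplane argument. That reduction is workable — the acceptance region $\{\sum_k\lambda_k e^{-S_k}\leq e^{-c}\}$ is sandwiched between the componentwise threshold regions at $\bm{\tau}=c\boldsymbol{1}+\log\bm{\lambda}$ and $\bm{\tau}=(c-\log K)\boldsymbol{1}+\log\bm{\lambda}$, so sweeping $(\bm{\lambda},c)$ recovers the full class of LLR threshold tests up to $O(1)$ shifts — and it buys a self-contained argument at the price of redoing the convexity/duality bookkeeping that the citation provides for free.

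The concrete gap is in the step you yourself flag as delicate, and it is precisely the point on which the paper corrects \cite{huang2014strong}. After the type-I constraint forces $\bm{\tau}\in n\mathbf{D}-\sqrt{n}\,\mathcal{Q}_{\textnormal{inv}}(\mathsf{V},\epsilon_0+\gamma/\sqrt{n})$, you cannot conclude that every coordinate satisfies $\tau_k=nD_k-O(\sqrt{n})$: the set $\mathcal{Q}_{\textnormal{inv}}(\mathsf{V},\epsilon_0)$ is unbounded, so some coordinates of the centered threshold may grow like $g_k(n)\sqrt{n}$ with $g_k(n)\to\infty$ (in your parametrization this happens exactly when some $\lambda_k$ is sub-exponentially small relative to the others). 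For those coordinates \lemref{thm:rev47mult} does not apply as stated, since it requires the threshold shift to be $\sqrt{n}\mathbf{C}$ for a constant vector $\mathbf{C}$, and a per-coordinate lower bound of the form $\exp\{-nD_k+\sqrt{n}E_k+o(\sqrt{n})\}$ is simply false there. The paper handles this by splitting the coordinates into the set $\mathcal{B}$ where the deviation is $O(\sqrt{n})$ and its complement, tracking the growth functions $g_k(n),h_k(n)$ on $\mathcal{B}^c$, and then verifying that the assembled vector $\bm{\lambda}$ still lands in $\mathcal{Q}_{\textnormal{inv}}(\mathsf{V},\epsilon_0+o(1))$ via the union bound; this case analysis is the entire reason the outer inclusion is stated with $o(\sqrt{n})\boldsymbol{1}$ rather than $-\tfrac12\log n\,\boldsymbol{1}+O(1)\boldsymbol{1}$. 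Your plan writes down the correct $o(\sqrt{n})$ target but supplies no mechanism for the unbounded-coordinate case, so as written the converse does not close.
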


Huang and Moulin \cite[Proposition 1]{huang2014strong} presented \thmref{thm:huangregion} with $o(\sqrt{n})$ term replaced by $-\frac 1 2 \log n + O(1)$ in \eqref{eq:huang_conv}, indicating that the inner and outer bounds on the achievable error vectors match up to the third-order term in the exponent. However, according to the definition in \eqref{eq:qinvvec}, there exists some $\bm{\tau} \in Q_{\textnormal{inv}}(\mathsf{V}, \epsilon_0)$ such that $\bm{\tau}$ has an arbitrarily large coordinate. Therefore, the assumption in \cite{huang2014strong} that $Q_{\textnormal{inv}}(\mathsf{V}, \epsilon_0) = O(1) \boldsymbol{1}$ with respect to $n$, does not hold. We reprove the theorem considering this observation. The achievability part of \eqref{eq:huang_conv} is unaffected in the new proof. For the converse part of \eqref{eq:huang_conv}, we show that the second-order term in the error exponent remains unchanged, while we cannot confirm that the third-order term is $-\frac 1 2 \log n$ unless the threshold vector satisfies $\bm{\tau} = n \mathbf{D} - O(\sqrt{n}) \boldsymbol{1}$.

The proof of the achievability part of \thmref{thm:huangregion} follows an argument similar to \cite{huang2014strong}; the difference is that we use \cite[Lemma~47]{polyanskiy2010Channel} due to Polyanskiy \textit{et al.} instead of \cite[Lemma~2]{huang2014strong}.
\begin{IEEEproof}[Proof of the achievability part in \eqref{eq:huang_conv}]
We prove the achievability using the LLR tests given in \eqref{eq:LLR}. Pick any $\bm{\tau} = (\tau_1, \dots, \tau_K) \in n \mathbf{D} - \sqrt{n} \mathcal{Q}_{\textnormal{inv}}(\mathsf{V}, \epsilon_0 - \frac{\gamma}{\sqrt{n}})$, where $\gamma$ is defined in \eqref{eq:gammadef}.
Then by definition of $\mathcal{Q}_{\textnormal{inv}}(\mathsf{V}, \epsilon_0)$ in \eqref{eq:qinvvec}, we have 
\begin{align}
\Prob{\sqrt{n} \mathbf{Z} \geq \bm{\tau} - n \mathbf{D}} \geq 1 - \epsilon_0 + \frac{\gamma}{\sqrt{n}}.
\end{align}
Recall $h_n^{\mathrm{LLR}}(y^n)$ from \eqref{eq:LLR}, and let
\begin{align}
\mathbf{S}_k = (S_{k1}, \dots, S_{kK}) \triangleq h_n^{\mathrm{LLR}}(Y_k^n),
\end{align}
where $Y_k^n \sim P_{Y_k}^n$ for $k = 0, \dots, K$. By \thmref{th:mbe}, we have
\begin{align}
\mathbb{P}[ \mathbf{S}_0 \geq \bm{\tau} ] \geq 1 - \epsilon_0.
\end{align}
Thus, the threshold vector $\bm{\tau}$ satisfies the type-I error condition. Then $\beta_k$, the error probability in deciding $H_0$ when the true distribution is $P_{Y_k}$, is bounded as
\begin{align}
\beta_k &= \mathbb{P}[ \mathbf{S}_k \geq \bm{\tau}] \\
&= \mathbb{E}[ e^{-S_{0k}} 1 \{ \mathbf{S}_0 \geq \bm{\tau} \} ]  \label{eq:measurechange} \\ 
& \leq \mathbb{E}[ e^{-S_{0k}} 1 \{ S_{0k} \geq \tau_{k} \} ] \label{eq:step1} \\
& \leq 2 \left( \frac{ \log 2 }{\sqrt{2 \pi} }+ \frac{12 T_k}{\sigma_k^2} \right) \frac{1}{\sigma_k \sqrt{n}} \exp\{-\tau_{k}\}  \label{eq:lemma47used} \\
&= \exp\left\{ -\tau_{k} - \frac 1 2 \log n + O(1) \right\} \label{eq:huangachexp},
\end{align}
where \eqref{eq:measurechange} follows from a  measure change from $P_{Y_k}$ to $P_{Y_0}$, and \eqref{eq:lemma47used} follows from \cite[Lemma 47]{polyanskiy2010Channel} and the definitions $\sigma_k^2 = \textnormal{Var}\left[ \log \frac{P_{Y_0}(Y_0)}{P_{Y_k}(Y_0)}\right] > 0$ and $T_k = \mathbb{E}\left[ \left\lvert  \log \frac{P_{Y_0}(Y_0)}{P_{Y_k}(Y_0)} - D(P_{Y_0} \| P_{Y_k}) \right\rvert^3 \right] \leq T < \infty$.
By the Taylor series expansion of $Q_{\textnormal{inv}}(\mathsf{V}, \cdot)$, we conclude
\begin{align}
&\mathcal{E}_{\epsilon_0} (P_{Y_0}, \{P_{Y_k}\}_{k = 1}^K ) \notag \\
&\supset \exp\left\{ -n \mathbf{D} + \sqrt{n} \mathcal{Q}_{\textnormal{inv}}(\mathsf{V}, \epsilon_0) - \frac 1 2 \log n \boldsymbol{1} + O(1) \boldsymbol{1}\right\}.
\end{align}
\end{IEEEproof}

The following proof of the converse part of \thmref{thm:huangregion} corrects \cite[Proposition 1]{huang2014strong}  by taking into account the possibility that $Q_{\textnormal{inv}}(\mathsf{V}, \epsilon_0)$ may have arbitrarily large coordinates.
\begin{IEEEproof}[Proof of the converse part \eqref{eq:huang_conv}]
From \cite[Section IV-D]{huang2014strong}, the gap in the type-II error exponent between the LLR tests and the optimal tests is $O\left(\frac{1}{n}\right)$. Therefore, we only consider the LLR tests in the converse proof. For any LLR test, by definition \eqref{eq:LLR}, we have
\begin{align}
\mathbb{P}[ \mathbf{S}_0 \geq \bm{\tau} ] \geq 1 - \epsilon_0.
\end{align}
Then, by \thmref{th:mbe}, we have
\begin{align}
\Prob{\sqrt{n}\mathbf{Z} \geq \bm{\tau} - n\mathbf{D} } \geq 1 - \epsilon_0 - \frac{\gamma}{\sqrt{n}},   
\end{align}
where $\gamma$ is the constant defined in \eqref{eq:multbe}. By \eqref{eq:qinvvec},
\begin{align}
\bm{\tau} &\in n \mathbf{D} - \sqrt{n} \mathcal{Q}_{\textnormal{inv}}\left(\mathsf{V}, \epsilon_0 + \frac{\gamma}{\sqrt{n}}\right) \notag \\
 &= n \mathbf{D} - \sqrt{n} \mathcal{Q}_{\textnormal{inv}}(\mathsf{V}, \epsilon_0) + O(1) \boldsymbol{1} \label{eq:tauconv}.
\end{align}
Observe that \eqref{eq:tauconv} implies that $\bm{\tau} \leq n \mathbf{D} - \sqrt{n} \mathbf{C}  + o(\sqrt{n})\boldsymbol{1}$ for some constant vector $\mathbf{C}$. The type-II error $\beta_k$ is lower bounded as
\begin{align}
\beta_k &= \mathbb{P}[ \mathbf{S}_k \geq \bm{\tau}] \\
&\geq \mathbb{P}[ \mathbf{S}_k \geq n \mathbf{D} -  \sqrt{n} \mathbf{C} + o(\sqrt{n})\boldsymbol{1}] \\
&= \mathbb{E}[ e^{-S_{0k}} 1 \{ \mathbf{S}_0 \geq n \mathbf{D} -\sqrt{n} \mathbf{C}  + o(\sqrt{n})\boldsymbol{1} \} ]  \\ 
&\geq \exp\left\{-n D_k+ \sqrt{n} C_k + o(\sqrt{n}) \right\} \label{eq:steptau}, 
\end{align}
for all $k \in [K]$, where $\mathbf{D} = (D_1, \dots, D_K)$, $\mathbf{C} = (C_1, \dots, C_K)$, and \eqref{eq:steptau} follows from \lemref{thm:rev47mult}.

Define the set of indices
\begin{align}
\mathcal{B} \triangleq \{k \in [K]: \tau_{k} = nD_k - \sqrt{n}E_k + o(\sqrt{n}) \}, \label{eq:setB}
\end{align}
for some constants $E_1, \dots, E_K$. For all $k \notin \mathcal{B}$, we can express $\tau_k$ as
\begin{align}
{\tau}_{k} = n D_k - g_k(n)\sqrt{n} + o(g_k(n)\sqrt{n}), 
\end{align}
where $g_k(n)$ is a function of $n$ such that $\lim_{n \to \infty} g_k(n) = \infty$ (For any constant $\epsilon_0 \in (0, 1)$, $\lim_{n \to \infty} g_k(n) = -\infty$ is not possible.)

For $k \in \mathcal{B}$, we can bound the type-II errors as
\begin{align}
\beta_k &= \mathbb{P}[ \mathbf{S}_k \geq \bm{\tau}] \\
&\geq \exp\left\{-n D_k + \sqrt{n} E_k + o(\sqrt{n}) \right\}, \label{eq:step6}
\end{align}
where \eqref{eq:step6} follows from \eqref{eq:steptau}. We have for $k \notin \mathcal{B}$ that
\begin{align}
\beta_k &= \mathbb{P}[ \mathbf{S}_k \geq \bm{\tau}] \\
&= \exp\{-nD_k + h_k(n) \sqrt{n}  + o(h_k(n) \sqrt{n} ) \},
\end{align}
where $h_k(n) \leq g_k(n)$ by the upper bound in \eqref{eq:huangachexp}, and $\lim_{n\to \infty} h_k(n) = \infty$, since we can pick arbitrarily large $C_k$ for $k \notin \mathcal{B}$ in \eqref{eq:steptau}.

Define $\bm{\lambda} = (\lambda_k: \lambda_k = E_k \,\,\forall k \in \mathcal{B}, \, \lambda_k = h_k(n)\,\, \forall k \notin \mathcal{B})$. By union bound, \eqref{eq:tauconv} and \eqref{eq:setB},
\begin{align}
&\Prob{\bigcup_{k = 1}^k \{Z_k > \lambda_k\}} \\
&\leq \Prob{\bigcup_{k \notin \mathcal{B}} \{Z_k > h_k(n) \}}  + \Prob{\bigcup_{k \in \mathcal{B}} \{Z_k > E_k\}} \notag \\
&\leq o(1) + \epsilon_0.
\end{align}
Therefore, by Taylor series expansion of $Q_{\textnormal{inv}}(\mathsf{V}, \cdot)$, we obtain that
\begin{align}
\bm{\lambda} \in  \mathcal{Q}_{\textnormal{inv}}(\mathsf{V}, \epsilon_0 + o(1)) =   \mathcal{Q}_{\textnormal{inv}}(\mathsf{V}, \epsilon_0) + o(1) \boldsymbol{1},
\end{align}
which completes the proof.
\end{IEEEproof}

Finally, we derive the minimax error in \thmref{thm:huangmoulin} by applying \lemref{lem:generalminmax} to \eqref{eq:huang_conv}.
}
\begin{IEEEproof}[Proof of \thmref{thm:huangmoulin}]
For any $\epsilon_0 \in (0, 1)$, consider all composite hypothesis tests in the form given in \eqref{eq:originalht} that achieve type-I error no greater than $\epsilon_0$. Let
\begin{align}
&\mathcal{E}_{\epsilon_0} (P_{Y_0}, \{P_{Y_k}\}_{k = 1}^K ) \triangleq \Big\{ (e_1, \dots, e_K) : \exists \textnormal { a (randomized) test } \notag \\
&\text{such that} \notag \\
&\quad \Prob{\text{Decide } H_1 | H_0 } \leq \epsilon_0, \notag \\
& \quad \Prob{\text{Decide } H_0 | H_1  } = e_k, 1 \leq k \leq K \Big\} \label{def:error_region}
\end{align}
denote the set of type-II errors achievable by these tests.
%Recall that $\mathbf{D}$ and $\mathsf{V}$ denote the mean and covariance matrix of the random vector $h_1^{\mathrm{LLR}}(Y_0)$.
Huang and Moulin \cite[Th.~1]{huang2014strong}\footnote{In the converse part of the proof of \cite[Th.~1]{huang2014strong}, Huang and Moulin show that for any LLR test \eqref{eq:LLR} with threshold vector $\bm{\tau}$ such that the type-I error is bounded by $\epsilon_0$, it holds that $
\bm{\tau} = n \mathbf{D} - \sqrt{n} \mathbf{b} + O(1) \boldsymbol{1}$
for some $\mathbf{b} \in Q_{\text{inv}}(\mathsf{V}, \epsilon_0)$. Then, it is assumed that $\mathbf{b} = O(1) \boldsymbol{1}$, and \cite[Lemma 2]{huang2014strong} is applied. However, according to the definition of $Q_{\text{inv}}(\mathsf{V}, \epsilon_0)$ in \eqref{eq:qinvvec}, $\mathbf{b}$ can have coordinates growing with $n$, which violates this assumption. In \cite[Th.~11]{chen2019SW}, Chen \textit{et al.} confirm that the asymptotic expansion in \eqref{eq:huangm} holds. They prove the converse part of the expansion \eqref{eq:huangm} by evaluating a converse bound that they derive in \cite[Lemma~9]{chen2019SW} for the composite hypothesis testing.

% Despite this difficulty, we can use a discussion in\cite{chen2019SW} to confirm that the asymptotic expansion in \eqref{eq:huangm} holds. In \cite[Remark 9]{chen2019SW}, Chen \textit{et al.} compare the third-order converse results for the Slepian-Wolf rate region obtained using  \eqref{eq:huangm} and Han's converse \cite[Th. 9]{chen2019SW}. Since Han's converse gives a lower bound for the type-II errors defined in \eqref{def:error_region} and Han's converse already yields the third-order expansion in \eqref{eq:huangm}, we see that the expansion in \eqref{eq:huangm} must hold.
} show that the asymptotic form of the error region defined in \eqref{def:error_region} is given by
\begin{IEEEeqnarray}{rCl}
\IEEEeqnarraymulticol{3}{l}
{\mathcal{E}_{\epsilon_0} (P_{Y_0}, \{P_{Y_k}\}_{k = 1}^K )} \notag \\
\hspace{-0.3em} &=& \exp\left\{ -n \mathbf{D} + \sqrt{n} \mathcal{Q}_{\textnormal{inv}}(\mathsf{V}, \epsilon_0) -\frac 1 2 \log n \boldsymbol{1} + O(1) \boldsymbol{1}\right\}. \IEEEeqnarraynumspace \label{eq:huangm}
\end{IEEEeqnarray}

By the definition of the minimax error \eqref{eq:minmaxerror} and the characterization of the achievable error region asymptotics in \eqref{eq:huangm}, we have
\begin{align}
&\beta_{\epsilon_0}(P_{Y_0}, \{P_{Y_k}\}_{k = 1}^K) \notag \\
&= \min_{\substack{ \mathbf{z} \in \exp\{ -n \mathbf{D} + \sqrt{n} \mathcal{Q}_{\textnormal{inv}}(\mathsf{V}, \epsilon_0) -\frac 1 2 \log n \boldsymbol{1} + O(1) \boldsymbol{1}\}}} \max_{ 1 \leq k \leq K } z_k.
\end{align}
Applying \lemref{lem:generalminmax} with $f(\mathbf{z}) = \Prob{- n \mathbf{D} + \sqrt{n} \mathbf{Z} \leq \mathbf{z}}$ and $a = 1- \epsilon_0$, where $\mathbf{Z} \sim \mathcal{N}\left(\boldsymbol{0}, \mathsf{V}\right)$, we obtain
\begin{align}
&\beta_{\epsilon_0}(P_{Y_0}, \{P_{Y_k}\}_{k = 1}^K) \notag \\
&\quad = \min_{z \in \mathbb{R}: f(z \boldsymbol{1}) \geq 1 - \epsilon_0} \exp\left\{ z - \frac 1 2 \log n + O(1)  \right\}. \label{eq:zminac}
\end{align}
Since $f(z\boldsymbol{1})$ is nondecreasing and continuous in $z$, 
\begin{align}
f(z^{\star} \boldsymbol{1}) = 1- \epsilon_0 \label{eq:zstar1minus}
\end{align} 
holds, where $z^{\star}$ is the argument that achieves the minimum on the right-hand side of \eqref{eq:zminac}. Recall the definitions of $D_{\min}$ and $\mathcal{I}_{\min}$ from \eqref{eq:Dmin}--\eqref{eq:Imin}. By Chernoff's bound on $f(\mathbf{z})$, for any $z = nE + o(n)$ with $E > -D_{\min}$, we have $f(z \boldsymbol{1}) = 1 - o(1)$. Similarly, for $E < -D_{\min}$, we have $f(z \boldsymbol{1}) = o(1)$, giving 
\begin{align}
z^{\star} = -n D_{\min} + o(n). \label{eq:zstar}
\end{align}
We proceed to show that the minimum on the right-hand side of \eqref{eq:zminac} is achieved at
\begin{align}
z^{\star} = -n D_{\min} + \sqrt{n} b + O\left(1 \right) , \label{eq:zstarshow}
\end{align}
where $b$ is defined in \eqref{eq:bsolution}. Here
\begin{IEEEeqnarray}{rCl}
\IEEEeqnarraymulticol{3}{l}{\Prob{-n D_{\min} \boldsymbol{1} + \sqrt{n}\mathbf{Z}_{\mathcal{I}_{\min}} \leq z^\star \boldsymbol{1}}} \notag \\
&=& \Prob{- n \mathbf{D} + \sqrt{n} \mathbf{Z} \leq z^\star \boldsymbol{1}} \notag\\
&&+ \mathbb{P}\Big[ \{-n D_{\min} \boldsymbol{1} + \sqrt{n} \mathbf{Z}_{\mathcal{I}_{\min}} \leq z^\star \boldsymbol{1}\} \notag \\
&& \quad \bigcap \left\{ -{n \mathbf{D}}_{\mathcal{I}_{\min}^c} + \sqrt{n} \mathbf{Z}_{\mathcal{I}_{\min}^c} \nleq z^{\star} \boldsymbol{1} \right\} \Big]\\
&=& 1 - \epsilon_0 + O\left( \frac 1 n\right), \label{eq:chernoffgaussian}
\end{IEEEeqnarray}
where \eqref{eq:chernoffgaussian} follows from \eqref{eq:zstar1minus}, \eqref{eq:zstar}, and the union bound and Chebyshev's inequality on $\Prob{ -{n \mathbf{D}}_{\mathcal{I}_{\min}^c} + \mathbf{Z}_{\mathcal{I}_{\min}^c} \nleq z^{\star} \boldsymbol{1}}$. By the Taylor series expansion of $Q_{\textnormal{inv}}(\mathsf{V}, \cdot)$, we conclude that
\begin{align}
\Prob{\mathbf{Z}_{\mathcal{I}_{\min}} \leq \frac{1}{\sqrt{n}}(z^{\star} + n D_{\min}) \boldsymbol{1}+ O\left( \frac 1 n \right)} = 1 - \epsilon_0,
\end{align}
which implies \eqref{eq:zstarshow}. Combining \eqref{eq:zminac} and \eqref{eq:zstarshow} completes the proof.
\end{IEEEproof}

\section*{Acknowledgment}
We are grateful to the reviewers---Professor Jonathan Scarlett and two anonymous reviewers---for their thorough, careful, and insightful feedback, which is reflected in the paper. 

\bibliographystyle{IEEEtran}
\bibliography{mac} 

\begin{IEEEbiographynophoto}
{Recep Can Yavas}(S'19) is currently a Ph.D. candidate in electrical engineering at the California Institute of Technology (Caltech). He received the B.S. degree from Bilkent University in Ankara, Turkey, in 2016 and the M.S. degree from Caltech in 2017, both in electrical engineering. His research interests include information theory and probability theory. 
\end{IEEEbiographynophoto}

\begin{IEEEbiographynophoto}{Victoria Kostina}(S'12-M'14)
received the bachelor's degree from the Moscow institute of Physics and Technology in 2004, the master's degree from the University of Ottawa in 2006, and the Ph.D. degree from Princeton University in 2013. In 2014, she joined Caltech, where she is currently a Professor of electrical engineering. Her research spans information theory, coding, control, learning, and communications. She received the Natural Sciences and Engineering Research Council of Canada master's scholarship in 2009, the Princeton Electrical Engineering Best Dissertation Award in 2013, the Simons-Berkeley Research Fellowship in 2015, and the NSF CAREER Award in 2017.
\end{IEEEbiographynophoto}

\begin{IEEEbiographynophoto}{Michelle Effros}(S'93-M'95-SM'03-F'09) 
received the B.S. degree with distinction in 1989, the M.S. degree in 
1990, and the Ph.D. degree in 1994, all in electrical engineering from 
Stanford University. She joined the faculty at the California Institute of 
Technology in 1994, where she is currently the George Van Osdol Professor 
of Electrical Engineering. Her research interests include information 
theory, network coding, data compression, and communications.

Prof. Effros received Stanford's Frederick Emmons Terman Engineering 
Scholastic Award (for excellence in engineering) in 1989, the Hughes 
Masters Full-Study Fellowship in 1989, the National Science Foundation 
Graduate Fellowship in 1990, the AT\&T Ph.D. Scholarship in 1993, the NSF 
CAREER Award in 1995, the Charles Lee Powell Foundation Award in 1997, the 
Richard Feynman-Hughes Fellowship in 1997, and an Okawa Research Grant in 
2000. She was cited by Technology Review as one of the world's top young 
innovators in 2002. She and her co-authors received the Communications 
Society and Information Theory Society Joint Paper Award in 2009. She 
became a fellow of the IEEE in 2009. She is a member of Tau Beta Pi, Phi 
Beta Kappa, and Sigma Xi. She served as the Editor of the IEEE Information 
Theory Society Newsletter from 1995 to 1998 and as a Member of the Board 
of Governors of the IEEE Information Theory Society from 1998 to 2003 and 
from 2008 to 2017, serving in the role of President of the Information 
Theory Society in 2015. She was a member of the Advisory Committee and the 
Committee of Visitors for the Computer and Information Science and 
Engineering (CISE) Directorate at the National Science Foundation from 
2009 to 2012 and in 2014, respectively. She served on the IEEE Signal 
Processing Society Image and Multi-Dimensional Signal Processing (IMDSP) 
Technical Committee from 2001 to 2007 and on ISAT from 2006 to 2009. She 
served as Associate Editor for the 2006 joint special issue on Networking 
and Information Theory in the {\em IEEE Transactions on Information 
Theory} and the {\em IEEE Transactions on Networking/ACM Transactions on 
Networking} and as Associate Editor for Source Coding for the {\em IEEE 
Transactions on Information Theory} from 2004 to 2007. She has served on 
numerous technical program committees and review boards, including serving 
as general co-chair for the 2009 Network Coding Workshop and technical 
program committee co-chair for the 2012 IEEE International Symposium on 
Information Theory.
\end{IEEEbiographynophoto}

\end{document}